\def\BState{\State\hskip-\ALG@thistlm}
\newcommand{\R}{{\mathbb R}}
\newcommand{\N}{{\mathbb N}}
\newcommand{\RR}{{\mathcal R}}
\newcommand{\FF}{{\mathcal F}}
\newcommand{\eps}{\epsilon}
 \newcommand{\E}{\mathbb{E}}
 \newcommand{\ind}{\mathds{1}}
\newcommand{\PP}{{\mathrm P}}
\newcommand{\X}{{\mathcal X}}
\newcommand{\ff}{{\mathcal F}}
\newtheorem{thm}{Theorem}
\newtheorem{obs}[thm]{Observation}
\newtheorem{lem}[thm]{Lemma}
\newtheorem{cor}[thm]{Corollary}
\date{}
\newtheorem{thmx}{Theorem}
\newtheorem{lemm}[thmx]{Lemma}
\title{When are epsilon-nets small?}
\author{Andrey Kupavskii\thanks{Moscow Institute of Physics and Technology and University of Oxford. The research of Andrey Kupavskii was supported by the grant of Russian Government N 075-15-2019-1926. \href{mailto:kupavskii@yandex.ru}{kupavskii@yandex.ru}} \and Nikita Zhivotovskiy\thanks{Higher School of Economics. Now at Google Research, Brain Team. Part of this research was done when Nikita Zhivotovskiy was a postdoctoral fellow at Technion. The reported study was funded by
RFBR according to the research project N 18-37-00-371. \href{mailto:nikita.zhivotovskiy@phystech.edu}{nikita.zhivotovskiy@phystech.edu}}}
\date{}
\begin{document}

\maketitle

\begin{abstract} Given a {\it range space} $(\X,\RR)$, where $\X$ is a set equipped with probability measure $\PP$, $\RR\subset 2^{\X}$ is a family of measurable subsets,  and $\eps>0$, an {\it $\eps$-net} is a subset of $\X$ in the support of $\PP$, which intersects each $R\in\RR$ with $\PP(R)\ge \eps$.  In many interesting situations the size of $\eps$-nets depends on $\eps$  and on natural complexity measures. The aim of this paper is to give a systematic treatment of such complexity measures arising in  Discrete and Computational Geometry and Statistical Learning, and to bridge the gap between the results appearing in these two fields. As a byproduct, we obtain several new upper bounds on the sizes of $\eps$-nets that generalize/improve the best known general guarantees. Some of our results deal with improvements in logarithmic factors (which is a subject of several classical problems in Learning Theory and Computational Geometry), while others consider the regimes where  $\eps$-nets of size $o(\frac{1}{\eps})$ exist. Inspired by results in Statistical Learning, we also give a short proof of the Haussler's upper bound on packing numbers \cite{Haussler95}.
\end{abstract}


\section{Introduction}
In this section, we collect $\eps$-nets-related results in different areas. Some of the observations that we make are scattered in the literature and sometimes not written explicitly, which is one reason why we collected them in a single section. The new results are presented starting from Section~\ref{sec2}.

\subsection{$\eps$-nets. Combinatorial and geometric point of view}
Consider a set $\X$ equipped with probability measure $\PP$ and a family of measurable subsets $\RR\subset 2^{\X}$, where $2^{\X}$ is the power set of $\X$. For simplicity and to avoid potential measurability issues we assume that the set $\X$ is finite. The pair $(\X,\RR)$ is called a {\it range space}. For a fixed $\eps>0$, a subset $S\subset \X\cap\textrm{supp}(\PP)$ is an {\it $\eps$-net},\footnote{Compare with the definition of a \emph{weak $\eps$-net} where $S\subset \X\cap\textrm{supp}(\PP)$ is replaced by $S \subset \X$. In what follows $\textrm{supp}$ denotes the measure theoretic support of $\PP$.}  if $R\cap S\ne \emptyset$ for each $R\in\RR$ with  $\PP(R)\ge \eps$. $\eps$-nets often arise in the context of Computational Geometry problems. In that context, the set $\X$ is often finite and equipped with the uniform measure, and we may speak about {\it sizes} instead of {\it measures}. In some cases it is not difficult to generalize the results that are valid for a uniform measure on a finite set to an arbitrary measure (see e.g., \cite{Matousek}). However, to avoid some technical difficulties, in this paper we focus on distribution dependent complexity measures of the range spaces and therefore choose to present everything for general probability measures.

A line of research, starting from a seminal work of Vapnik and Chervonenkis (we refer to the textbook \cite{VC}), is concerned with proving the existence of small $\eps$-nets in certain scenarios. The first historically and probably the most important one is {\it bounded VC-dimension}. The {\it VC-dimension} of $(\X, \mathcal{R})$ is the maximal size of  $Y \subseteq \X$ such that $Y$ is {\it shattered}: $\mathcal{R}|_{Y}:=\{R\cap Y: R\in \RR\} = 2^{Y}$. This means that any subsets of $Y$ is realized by some $R\in \RR$. Haussler and Welzl \cite{HW} proved that one can find $\eps$-nets of size depending on $\eps$ and VC-dimension only. Moreover, the $\eps$-net can be constructed using an i.i.d. sample of points taking their values in $\X$. Later their result was slightly sharpened in \cite{KPW}, and matching lower bounds were proven. We note that the result of Haussler and Welzl follows immediately from  the uniform ratio-type bounds in \cite{VC}. Given the VC-dimension of a range space, one can bound the size of the smallest $\eps$-net as follows.

\begin{thmx}[\cite{VC}, \cite{HW}, \cite{KPW}]\label{thmvc}
Fix $\eps, \delta \in (0, 1]$ and let $(\X, \mathcal{R})$ be a range space of VC-dimension $d$ with probability distribution $\PP$. Then a set of size $m = O\big( \frac{d\log\frac{1}{\eps}}{\eps} + \frac{\log\frac{1}{\delta}}{\eps}\big)$ chosen i.i.d. from $\X$ according to $\PP$ is an $\eps$-net for $(\X,\RR)$ with probability at least $1 - \delta$.
\end{thmx}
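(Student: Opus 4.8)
Let me think about how I would prove this classical result. The statement is:

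Fix $\epsilon, \delta \in (0,1]$, let $(\X, \RR)$ be a range space of VC-dimension $d$ with probability distribution $\PP$. Then a set of size $m = O(\frac{d \log \frac{1}{\epsilon}}{\epsilon} + \frac{\log \frac{1}{\delta}}{\epsilon})$ chosen i.i.d. from $\X$ according to $\PP$ is an $\epsilon$-net with probability at least $1 - \delta$.

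The classical proof (Haussler-Welzl, following Vapnik-Chervonenkis) uses the "double sampling" / symmetrization trick plus the Sauer-Shelah lemma. Let me outline it.

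**Setup.** Let $S = (x_1, \ldots, x_m)$ be an i.i.d. sample from $\PP$. We want to bound the probability that $S$ fails to be an $\epsilon$-net, i.e., the probability of the "bad event"
$$E_1 = \{\exists R \in \RR: \PP(R) \geq \epsilon, \ R \cap S = \emptyset\}.$$

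**Step 1: Symmetrization (double sampling).** Draw a second (ghost) sample $S' = (x_1', \ldots, x_m')$ i.i.d. from $\PP$, independent of $S$. Consider the event
$$E_2 = \{\exists R \in \RR: \PP(R) \geq \epsilon, \ R \cap S = \emptyset, \ |R \cap S'| \geq \epsilon m / 2\}.$$
The standard lemma: if $\epsilon m \geq 8$ (say), then $\Pr(E_2) \geq \frac{1}{2}\Pr(E_1)$. The intuition: condition on a fixed $R$ witnessing $E_1$; by a Chernoff/Chebyshev argument, $|R \cap S'| \geq \epsilon m/2$ with probability $\geq 1/2$ since $\E|R \cap S'| = m\PP(R) \geq \epsilon m$. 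So it suffices to bound $\Pr(E_2)$.

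**Step 2: Swap / permutation argument.** Now consider the combined sample $S \cup S'$ of $2m$ points (with multiplicity). Instead of generating $S$ and $S'$ separately, we may generate the $2m$-point multiset first, then split it into $S, S'$ by a uniformly random partition (or: for each $i$, swap $x_i \leftrightarrow x_i'$ independently with probability $1/2$). Fix the combined multiset. For a fixed range $R$, the conditional probability — over the random swaps — that $R \cap S = \emptyset$ while $|R \cap S'| \geq \epsilon m/2 =: k$ is at most $2^{-k}$: all $\geq k$ points of $R$ within the $2m$-multiset must land in $S'$, and each such "coordinate pair" is a $\le 1/2$-probability independent event (this is where one must be slightly careful with multiplicities, but the bound $2^{-k}$ holds).

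**Step 3: Sauer-Shelah.** On the fixed combined $2m$-multiset, the number of distinct sets $R \cap (S \cup S')$ as $R$ ranges over $\RR$ is at most the shatter function $\sum_{i=0}^{d}\binom{2m}{i} \leq (2m)^d$ (or $\left(\frac{2em}{d}\right)^d$) by the Sauer-Shelah lemma, since the VC-dimension is $d$. Union-bounding over these at most $(2m)^d$ "traces":
$$\Pr(E_2) \leq (2m)^d \cdot 2^{-\epsilon m/2}.$$

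**Step 4: Solve for $m$.** We need $(2m)^d 2^{-\epsilon m / 2} \leq \delta/2$, i.e. $\frac{\epsilon m}{2} \geq d \log_2(2m) + \log_2(2/\delta)$. Choosing $m = \frac{c}{\epsilon}\left(d \log \frac{1}{\epsilon} + \log \frac{1}{\delta}\right)$ for a suitable absolute constant $c$ works: plug in and check that $d\log_2(2m)$ is absorbed (since $\log(2m) = O(\log\frac{d}{\epsilon} + \log\log\frac{1}{\delta}) = O(\log\frac{1}{\epsilon} + \text{lower order})$ — a standard self-bounding computation using $\log x \le x$). Also verify the side condition $\epsilon m \geq 8$ from Step 1 holds.

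**The main obstacle / subtle points.** The symmetrization step (Step 1) is where the real probabilistic content lies, and one has to be careful that the witnessing range $R$ in $E_2$ can be chosen measurably depending on $S$ (not a problem when $\X$ is finite, which the paper assumes). The other subtlety is handling multiplicities in the swapping argument of Step 2 — points may coincide — but since $\X$ is finite and we track coordinates rather than point-values, the independent-swap formulation makes this clean. Finally, the sharpened $\log\frac{1}{\epsilon}$ (rather than $\log m$) in the bound comes from the careful arithmetic in Step 4; getting the constant right (the $O$-bound) is routine calculus (repeatedly using $\ln t \le t/e$ type inequalities). I expect Step 1 together with the bookkeeping in Step 4 to be the parts that need the most care; Steps 2 and 3 are short once the framework is set up.
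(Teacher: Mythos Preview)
The paper does not supply its own proof of Theorem~\ref{thmvc}: it is stated there as a background result, attributed to \cite{VC}, \cite{HW}, \cite{KPW}, and is invoked as a black box in later arguments (e.g., in the proof of Theorem~\ref{thmmain2}). So there is nothing in the paper to compare your proposal against.

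That said, your proposal is the standard Haussler--Welzl double-sampling proof and is correct as outlined. The symmetrization step, the random-swap argument conditioned on the combined $2m$-sample, the Sauer--Shelah bound on the number of traces, and the final arithmetic are all in order. The one place worth a word of extra care is Step~4: from $(2m)^d 2^{-\epsilon m/2}\le \delta/2$ you get $m=O\big(\tfrac{d}{\epsilon}\log m+\tfrac{1}{\epsilon}\log\tfrac1\delta\big)$, and converting $\log m$ to $\log\tfrac1\epsilon$ requires the self-bounding step you mention; this is routine but should be written out (e.g., first verify $m\le \tfrac{C}{\epsilon}(d\log\tfrac{d}{\epsilon}+\log\tfrac1\delta)$, then note $\log m = O(\log\tfrac1\epsilon)$ in the regime of interest). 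Your remark about measurability being a non-issue because the paper assumes $\X$ finite is also apt.
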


{ It can be more convenient to formulate this and analogous results for given sample size: what is the best $\eps$ that can be obtained with a given budget of $n$ points}. Here is such a reformulation of  Theorem \ref{thmvc}, equivalent to the original.

\begin{thmx}\label{thmvcrev}
Fix $\delta \in (0, 1]$, $n\in \mathbb N$ and let $(\X, \mathcal{R})$ be a range space of VC-dimension $d$ with probability distribution $\PP$.
A set of size $n$ chosen i.i.d. from $\X$ according to $\PP$ is an $\eps(n)$-net for $(\X,\RR)$ with probability at least $1 - \delta$ for $\eps(n) = O\big( \frac{d\log\frac{n}{d}}{n} + \frac{\log\frac{1}{\delta}}{n}\big)$.
\end{thmx}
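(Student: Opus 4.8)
The plan is to deduce Theorem~\ref{thmvcrev} from Theorem~\ref{thmvc} by inverting the dependence between the sample size and the accuracy, rather than reproving a uniform-convergence bound from scratch. Let $C$ be the absolute constant behind the $O(\cdot)$ in Theorem~\ref{thmvc}, so that any i.i.d.\ sample of size at least $m(\eps,\delta):=C\bigl(\frac{d\log(1/\eps)}{\eps}+\frac{\log(1/\delta)}{\eps}\bigr)$ is an $\eps$-net with probability at least $1-\delta$. Since a set that is an $\eps_0$-net is automatically an $\eps$-net for every $\eps\ge\eps_0$, it will be enough, for a prescribed budget $n$ and confidence $\delta$, to name some $\eps(n)$ of the order claimed in the theorem for which $m(\eps(n),\delta)\le n$, and then quote Theorem~\ref{thmvc}. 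What makes the inversion clean is that $t\mapsto\frac{\log(1/t)}{t}$ is strictly decreasing on $(0,1)$ (its derivative there is $\frac{\log t-1}{t^{2}}<0$), so the two functional forms in Theorems~\ref{thmvc} and~\ref{thmvcrev} are inverse to one another up to constant factors.

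Concretely, I would take
\[
\eps(n):=2C\Bigl(\frac{d\log(n/d)}{n}+\frac{\log(1/\delta)}{n}\Bigr),
\]
work in the regime $n\ge 2d$ where this expression is meaningful, and dispose of $\eps(n)\ge 1$ as a triviality: then the only ranges that must be hit have full measure, and a single sampled point lies in each of them almost surely. Assuming $\eps(n)<1$, from $\eps(n)\ge\frac{2Cd\log(n/d)}{n}$ and $\log(n/d)\ge\log 2>\frac{1}{2C}$ (using that $C\ge 1$ may be assumed) one gets $\frac{1}{\eps(n)}\le\frac{n}{2Cd\log(n/d)}\le\frac nd$, hence $\log\frac{1}{\eps(n)}\le\log\frac nd$. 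Substituting this bound and using $\eps(n)\ge\frac{2Cd\log(n/d)}{n}$ on the first summand and $\eps(n)\ge\frac{2C\log(1/\delta)}{n}$ on the second,
\[
m(\eps(n),\delta)=C\Bigl(\frac{d\log(1/\eps(n))}{\eps(n)}+\frac{\log(1/\delta)}{\eps(n)}\Bigr)\le\frac{Cd\log(n/d)}{\eps(n)}+\frac{C\log(1/\delta)}{\eps(n)}\le\frac n2+\frac n2=n .
\]
Theorem~\ref{thmvc} applied with this $\eps(n)$ and sample size $n$ then yields exactly the assertion, since $\eps(n)=O\bigl(\frac{d\log(n/d)}{n}+\frac{\log(1/\delta)}{n}\bigr)$ by construction.

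I do not expect any genuine obstacle: the argument is a one-line substitution wrapped around Theorem~\ref{thmvc}, and the ``difficulty'' is purely bookkeeping --- keeping track of which of the two logarithmic terms controls $\log(1/\eps(n))$ and checking $\log\frac{1}{\eps(n)}\le\log\frac nd$ over the relevant range. The only point that deserves comment is the restriction $n\ge 2d$: for smaller $n$ the quantity $\eps(n)$ is of constant order unless $\delta$ is tiny (in which case the $\log(1/\delta)/n$ term forces $\eps(n)\ge 1$ and the claim is trivial), so no content is lost in the interesting regime. Finally, to justify that Theorem~\ref{thmvcrev} is \emph{equivalent} to Theorem~\ref{thmvc} and not merely implied by it, the converse is obtained by the symmetric inversion: given $\eps$ and $\delta$, apply Theorem~\ref{thmvcrev} with $n=\lceil m(\eps,\delta)\rceil$ and use the same monotonicity to check that the resulting $\eps(n)$ is at most $\eps$.
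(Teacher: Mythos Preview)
Your approach is correct and matches the paper's intent: the paper does not give a separate proof of Theorem~\ref{thmvcrev} but simply introduces it as ``a reformulation of Theorem~\ref{thmvc}, equivalent to the original,'' so deducing it by inverting the sample-size/accuracy relation is exactly what is meant. Your bookkeeping (checking $\log(1/\eps(n))\le\log(n/d)$ via $\eps(n)\ge 2Cd\log(n/d)/n\ge d/n$ when $n\ge 2d$, then splitting $m(\eps(n),\delta)$ into two halves of $n$) is clean and sufficient; the edge cases $\eps(n)\ge 1$ and $n<2d$ are genuinely trivial at the level of an $O(\cdot)$ statement and do not warrant more than the remark you gave.
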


In Computational Geometry $\X$ is typically a set of points in $\R^d$, and the ranges are intersections of $\X$ with all objects from a certain class: lines, halfspaces, balls, etc. One then searches for upper bounds on the sizes of  $\eps$-nets that would hold for {\it all} range spaces of such type. In another common scenario of the so-called {\it dual range spaces}, the roles of points and ranges are switched.

The applications of $\eps$-nets cover several topics in Computational Geometry, including spatial partitioning and LP rounding. We refer to a recent survey \cite{MV}, which covers many of the recent developments in $\eps$-nets, as well as their applications.

\subsection{$\eps$-nets. Statistical point of view}
Similar ideas and notions were developed in Statistical Learning. Consider the following statistical model. We are given an instance space $\mathcal{X}$ equipped with an unknown probability distribution $\PP$ and a (known) family of classifiers ({ binary valued functions}) $\mathcal{F}$ consisting of functions $f: \mathcal X \to \{\pm1\}$.
 A learner observes $\left((x_{1}, y_{1}), \ldots, (x_{m}, y_{m})\right)$, an i.i.d. training sample where $x_i$ are sampled according to $\PP$ and $y_i = f^*(x_i)$ for some fixed $f^* \in \mathcal{F}$. This scenario is referred to as the {\it realizable case} classification and the learning model itself is referred to as \emph{Passive learning}.
 \emph{Sample-consistent learning
algorithm} (the particular case of empirical risk minimization) refers to any learning algorithm with the following property: given a training sample of size $m$, it outputs any classifier $\hat{f}\in \mathcal{F}$ that is consistent with the sample (that is, $\hat f(x_i)=f^*(x_i)$ for all $i=1,\ldots, m$).

Similarly as above, we say a set $\{x_1, \ldots, x_k\} \in \mathcal{X}^{k}$ is {\it shattered} by $\mathcal{F}$ if there are all $2^k$ distinct classifications of $\{x_1, \ldots, x_k\}$ realized by classifiers in $\mathcal{F}$. The {\it VC-dimension} of $\mathcal{F}$ is the largest integer $d$ such that there exists a set $\{x_1, \ldots, x_d\}$ shattered by $\mathcal{F}$.

 The analogue of Theorem~\ref{thmvc} in this context is the following classical result\footnote{Although this result is not presented explicitly in their book, it follows directly from their learning bounds for the realizable case classification.} of Vapnik and Chervonenkis:

\begin{thmx}[Theorem 12.2 in \cite{VC}]
\label{vcbound}
Consider any sample-consistent learning algorithm over the i.i.d. sample of size $m$, which outputs a classifier $\hat f\in\ff$.
Assume that $\mathcal{F}$ has VC dimension $d$ and we are in the realizable case. Then, for any $\eps, \delta \in (0, 1]$ and some $m(\eps) = O\big( \frac{d\log\frac{1}{\eps}}{\eps} + \frac{\log\frac{1}{\delta}}{\eps}\big)$, we have $\PP\big(\hat{f}(x) \neq f^*(x)\big) \le \eps$ with probability at least $1 - \delta$.
\end{thmx}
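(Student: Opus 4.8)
The plan is to deduce this statement directly from Theorem~\ref{thmvc} by passing to the range space of \emph{disagreement regions} of $\ff$ with respect to the target $f^*$. Given the fixed $f^*\in\ff$, set
\[
\RR_{f^*}:=\bigl\{\,\{x\in\X:\ f(x)\neq f^*(x)\}\ :\ f\in\ff\,\bigr\}\subset 2^{\X}.
\]
First I would verify that $(\X,\RR_{f^*})$ has the same VC-dimension as $\ff$, namely $d$. Fix a finite $Y\subseteq\X$, encode a classification $g\colon Y\to\{\pm1\}$ by the subset $S_g=\{x\in Y:g(x)=1\}$, and let $A=\{x\in Y:f^*(x)=1\}$. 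Then $\{x\in Y:f(x)\neq f^*(x)\}=S_f\triangle A$, and since $S\mapsto S\triangle A$ is an involution of $2^{Y}$, we get $\RR_{f^*}|_Y=\{S_f:f\in\ff\}\triangle A$; hence $\RR_{f^*}|_Y=2^{Y}$ if and only if $\{S_f:f\in\ff\}=2^{Y}$, i.e.\ $Y$ is shattered by $\ff$. Thus the two complexity measures coincide.

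Next I would apply Theorem~\ref{thmvc} to the range space $(\X,\RR_{f^*})$ with the distribution $\PP$: for the stated $m(\eps)=O\bigl(\frac{d\log\frac1\eps}{\eps}+\frac{\log\frac1\delta}{\eps}\bigr)$, an i.i.d.\ sample $x_1,\dots,x_{m(\eps)}$ drawn from $\PP$ is, with probability at least $1-\delta$, an $\eps$-net for $(\X,\RR_{f^*})$. Note that an i.i.d.\ sample from $\PP$ lies in $\mathrm{supp}(\PP)$ almost surely, so it is an admissible $\eps$-net.

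Finally, condition on the event that $\{x_1,\dots,x_{m(\eps)}\}$ is such an $\eps$-net, and let $\hat f\in\ff$ be any sample-consistent output, so $\hat f(x_i)=y_i=f^*(x_i)$ for every $i$. Suppose toward a contradiction that $\PP\bigl(\hat f(x)\neq f^*(x)\bigr)\ge\eps$. Then $R:=\{x:\hat f(x)\neq f^*(x)\}\in\RR_{f^*}$ has $\PP(R)\ge\eps$, so by the $\eps$-net property $R\cap\{x_1,\dots,x_{m(\eps)}\}\neq\emptyset$; any $x_i$ in this intersection satisfies $\hat f(x_i)\neq f^*(x_i)=y_i$, contradicting consistency. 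Hence $\PP\bigl(\hat f(x)\neq f^*(x)\bigr)<\eps$ on an event of probability at least $1-\delta$, which is the claimed bound.

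\textbf{Main obstacle.} There is essentially no analytic difficulty here: the quantitative content is inherited verbatim from Theorem~\ref{thmvc}. The only point that genuinely needs an argument (rather than bookkeeping) is the identification of the VC-dimension of the disagreement range space $\RR_{f^*}$ with that of $\ff$, i.e.\ that ``relativizing by $f^*$'' via a symmetric difference does not affect shattering; the remaining steps (admissibility of the i.i.d.\ sample as a net, and the contrapositive using consistency) are routine.
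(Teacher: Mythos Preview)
Your proposal is correct and follows precisely the approach the paper itself indicates: right after stating Theorem~\ref{vcbound}, the paper remarks that one passes to the range space $\bigl\{\{x\in\X:f(x)\neq f^*(x)\}:f\in\ff\bigr\}$ and reads the statement off from Theorem~\ref{thmvc}. The paper leaves this as a one-line observation without spelling out the VC-dimension preservation or the contrapositive via consistency; your write-up simply fills in those routine details.
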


It is actually easy to translate this problem into the combinatorial language: we just have to think of the instance space $\mathcal X$ as our ground set, and the collection of sets $\big\{\{x \in \X: f(x)\ne f^*(x)\}: f\in\ff\big\}$ playing the role of ranges. Then Theorem~\ref{vcbound}, basically says that an i.i.d. sample gives an $\eps$-net for such range space with high probability.\\

\subsection{Alexander's capacity and Active Learning}
Significant effort was put by many researchers in both Computational Geometry and Statistical Learning Theory to understand whether it is possible to improve on the above bounds. In this context, several different measures of complexity were introduced. One of them is the VC-dimension, already mentioned above. Actually, to Prove Theorem~\ref{thmvc}, one only needs the following property of the VC-dimension, implied by the famous Vapnik-Chervonenkis-Sauer-Shelah lemma: given a range space $(\X,\RR)$ of VC-dimension $d$, for any $Y\subset \X$ we have $|\RR|_Y|\le \sum_{i=0}^d{|Y|\choose d} \le \left(\frac{e|Y|}{d}\right)^d$. Let us introduce the {\it projection function} $\pi_{\RR}(Y)$:
\begin{equation}
\label{shatter}
\pi_{\RR}(y):=\max\{|\RR|_Y|:Y\subset \X, |Y|=y\}.
\end{equation}
Thus, the Vapnik-Chervonenkis-Sauer-Shelah lemma \cite{VC} implies that
\begin{equation}
\label{vcs}
\pi_{\RR}(y) \le \sum\limits_{i = 0}^{d}{y \choose i} \le \left(\frac{ey}{d}\right)^d,
\end{equation} for any $y=d,\ldots,|\X|$, and Theorem~\ref{thmvc}, as well as Theorem~\ref{vcbound}, holds under this condition. Actually, Vapnik and Chervonenkis used a weaker requirement to obtain Theorem~\ref{vcbound}: They required the projections to be small {\it on average} (see \cite{Boucheron05, Bousquet05} and the results related to the so-called \emph{VC entropy}).

One of the measures  coming from  Statistical Learning is Alexander's capacity \cite{Alexander87, Gine06}. Initially it appeared in the work of Alexander \cite{Alexander87} in the analysis of ratio-type empirical processes. For $\eps_{0} > 0$ fix a set $\mathcal{F}_{\eps_{0}} := \bigl\{f \in \mathcal{F} : \PP\bigl(f(x) \neq f^{*}(x)\bigr)\leq \eps_0\bigr\}$.
For $\eps \in (0,1]$, define {\it Alexander's capacity}\footnote{Instead of defining it with respect to the worst $f^* \in \ff$ we work with the definition that depends on the target function $f^*$. This does not affect the results.} $\tau(\eps)$ as follows.
\[
\tau(\eps) :=
\sup\limits_{\eps_{0} \ge \eps}
\frac{\PP(\{x \in \mathcal{X} : \exists f\in \mathcal{F}_{\eps_{0}} \text{ s.t. } f(x) \neq f^{*}(x)\})}
{\eps_{0}}.
\]

{ \textbf{Remark.}  Alexander's capacity is essentially the same as the {\it disagreement coefficient} in the Active Learning literature \cite{Hanneke14}. The difference is that in the original work of Alexander \cite{Alexander87} the ratio $\frac{\PP(\{x \in \mathcal{X} : \exists f\in \mathcal{F}_{\eps_{0}} \text{ s.t. } f(x) \neq f^{*}(x)\})}
{\eps_{0}}$ is assumed to be non-decreasing with $\eps_{0}$. We avoid this technical assumption by taking the supremum with respect to $\eps_{0}$ as in \cite{Hanneke14}.}

In what follows, to avoid problems with logarithms, we separate $\tau(\eps)$ from zero by an absolute constant. Without the loss of generality, we redefine the capacity by $\max\{\tau(\eps), 1\}$.

We  also define {\it Alexander's capacity} for a range space $(\X,\RR)$ as follows
\[
\tau(\eps) := \sup_{\eps_0\ge \eps} \frac{\PP\big(\bigcup_{R \in \mathcal{R}_{\le \eps_{0}}} R\big)}{\eps_0},
\]
where $ \mathcal{R}_{\le \eps_{0}} := \{R \in \mathcal{R} : \PP(R) \le \eps_{0}\}$.
For the uniform measure on a finite set the last definition can be informally understood as ratio of the number of points of $\X$ that lie in one of the sets of size at most $\eps_0 n$, over $\eps_0n$ (which is maximized over $\eps_0\ge \eps$).
{ Before we proceed let us point some other trivial properties of the Alexander's capacity, which will be used below extensively. Observe that $\tau(\eps)\le \frac{1}{\eps}$ and $\epsilon \le \epsilon^{\prime}$ implies that $1 \le \frac{\tau(\epsilon)}{\tau(\epsilon^{\prime})} \le \frac{\epsilon^{\prime}}{\epsilon}$. Denoting $\tau_i = \tau(2^i\epsilon)$, we also have $\tau_i \le \frac{2^{-i}}{\epsilon}$, and $1 \le \frac{\tau_i}{\tau_{i+1}} \le 2$, as well as $\sum\limits_{i = 1}^{\infty}\tau_i \le 1/\epsilon$.
}

\begin{thmx}[Gine and Koltchinskii \cite{Gine06}, Hanneke \cite{Hanneke14}]
\label{Koltch}
Theorems \ref{thmvc} and \ref{vcbound} hold with a sample of size
$
m = O\big(\frac{d\log\tau(\eps) + \log\frac{1}{\delta}}{\eps}\big).
$
\end{thmx}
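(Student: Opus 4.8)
The plan is to reduce to a purely combinatorial $\eps$-net statement and then run the double-sampling argument of Haussler and Welzl \cite{HW}, but \emph{localized} to the disagreement region at each scale; this localization is where Alexander's capacity enters. First I would note (as explained right after Theorem~\ref{vcbound}) that the learning statement is the special case of the $\eps$-net statement in which the ranges are the error sets $\{x:f(x)\ne f^*(x)\}$, so it suffices to show that an i.i.d.\ sample $S$ of size $m=O\big(\tfrac{d\log\tau(\eps)+\log\frac1\delta}{\eps}\big)$ is an $\eps$-net for $(\X,\RR)$ with probability at least $1-\delta$ (as in all such bounds, $\log\tau(\eps)$ is read as $1+\log\tau(\eps)$, consistently with the normalization $\tau\ge 1$). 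I would then split the ranges of measure $\ge\eps$ into dyadic bands $B_j:=\{R\in\RR:\PP(R)\in[\eps_j,2\eps_j)\}$ with $\eps_j:=2^j\eps$, $j=0,\dots,J:=\lceil\log_2\tfrac1\eps\rceil$, and prove that $S$ hits every range of $B_j$ with failure probability at most $2^{-(j+2)}\delta$; summing the geometric series over $j$ costs only a constant, not a factor $J$, and hitting all bands means $S$ is an $\eps$-net.

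For a fixed band $j$ I would run double sampling. Introduce a ghost sample $S'$ of size $m$; since $m\eps_j\ge m\eps\ge 8$ for our $m$, a Chebyshev estimate on $|S'\cap R|$ gives the usual reduction $\PP[\exists R\in B_j:S\cap R=\emptyset]\le 2\,\PP[\exists R\in B_j:S\cap R=\emptyset,\ |S'\cap R|\ge \tfrac{m\eps_j}{2}]$. Conditioning on the $2m$-point multiset $T=S\uplus S'$ with its pairing, and recovering $S$ by independent fair coins, the event for a fixed trace $A=R\cap T$ forces all of $A$ into $S'$, which (as each element of $A$ sits in a distinct pair) has probability at most $2^{-|A|}\le 2^{-m\eps_j/2}$. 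The only place this differs from the classical $\log\tfrac1\eps$ bound is the \emph{count} of distinct traces: every $R\in B_j$ has $\PP(R)\le 2\eps_j$, so $R\subseteq D_j:=\bigcup_{R'\in\RR_{\le 2\eps_j}}R'$, and Alexander's capacity gives $\PP(D_j)\le 2\eps_j\,\tau(2\eps_j)\le 2\eps_j\,\tau(\eps_j)$; hence $\{R\cap T:R\in B_j\}=\{R\cap(T\cap D_j):R\in B_j\}$ has at most $\pi_\RR(|T\cap D_j|)$ elements, and $\E|T\cap D_j|=2m\,\PP(D_j)$ is small. Taking expectation over $T$ yields $\PP[\exists R\in B_j:S\cap R=\emptyset]\le 2\cdot 2^{-m\eps_j/2}\,\E_T[\pi_\RR(|T\cap D_j|)]$.

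It remains to estimate $\E_T[\pi_\RR(|T\cap D_j|)]$ and substitute $m$. Writing $\mu_j:=\E|T\cap D_j|\le 4m\,\eps_j\tau(\eps_j)$, I would use $\pi_\RR(n)\le(en/d)^d$ together with the binomial tail $\PP[|T\cap D_j|\ge n]\le(\mu_j e/n)^n$; splitting the sum at $n_0$ of order $\mu_j+d$, the terms past $n_0$ have consecutive ratio at most $e/4<1$, so $\E_T[\pi_\RR(|T\cap D_j|)]\le O(1)\cdot\big(e(O(\mu_j)+d)/d\big)^d$. Taking logarithms, the claim reduces to the inequality $\tfrac{m\eps_j}{2}\ge d\log_2\!\big(\tfrac{C(m\,\eps_j\tau(\eps_j)+d)}{d}\big)+(j+2)+\log_2\tfrac1\delta$ for every $j$. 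Substituting $m=K\,\tfrac{d(1+\log\tau(\eps))+\log\frac1\delta}{\eps}$ for a large absolute constant $K$ and using $\tau(\eps_j)\le\tau(\eps)$ (so $m\,\eps_j\tau(\eps_j)\le 2^j\tau(\eps)\,m\eps$), the right-hand logarithm is $O\big(j+\log\tau(\eps)+\log(\log\tfrac1\delta)-\log d\big)$; a short case split on whether $\log\tfrac1\delta$ exceeds $d(1+\log\tau(\eps))$, the elementary bound $d\log(L/d)\le L$, and the fact that the factor $2^j$ in $\tfrac{m\eps_j}{2}$ absorbs the additive $j$, verify the inequality and finish the proof.

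The step I expect to be the main obstacle is the localized union bound: controlling $\E_T[\pi_\RR(|T\cap D_j|)]$ when $|T\cap D_j|$ is a binomial variable that could a priori be as large as $2m$, so that the effective number of ranges is $\pi_\RR\big(O(m\,\eps_j\tau(\eps_j)+d)\big)$ rather than $\pi_\RR(2m)$ — this needs a careful balance of the polynomial growth of $\pi_\RR$ against the binomial tail. The remaining, purely bookkeeping, difficulty is to make sure it is $\log\tau(\eps)$ and not $\log\tfrac1\eps$ that survives the final calculation; this rests on the bound $\PP(D_j)\le 2\eps_j\tau(\eps_j)$ and on absorbing the $\log(\log\tfrac1\delta)$ and $-\log d$ terms via the inequality above.
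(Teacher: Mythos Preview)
The paper does not prove Theorem~\ref{Koltch}; it is cited from \cite{Gine06} and \cite{Hanneke14}. The only argument the paper supplies is the remark (after Theorem~\ref{thmmain2} and again in Section~\ref{sec6}) that the \emph{existence} part of Theorem~\ref{Koltch}---an $\eps$-net of size $O\big(\tfrac{d\log\tau(\eps)}{\eps}\big)$---follows from Theorem~\ref{thmmain2} via $\sum_i\tau_i\log\tau_i\le(\log\tau(\eps))\sum_i\tau_i\le\tfrac{\log\tau(\eps)}{\eps}$. That deduction builds the net layer by layer from samples drawn according to the \emph{conditional} measures $\PP(\,\cdot\mid\X^{(i)})$, so it does not recover the full claim that a single i.i.d.\ sample from $\PP$ of the stated size is an $\eps$-net with probability $1-\delta$.

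Your proposal is a direct proof of that full i.i.d.\ statement, and the outline is sound. The dyadic banding matches Theorem~\ref{thmmain2}, but rather than invoking Theorem~\ref{thmvc} as a black box on each band, you run the Haussler--Welzl double-sampling argument inside each band and localize the trace count to $T\cap D_j$, where Alexander's capacity bounds $\PP(D_j)\le 2\eps_j\tau(\eps_j)$. This is exactly the right mechanism for keeping the sample i.i.d.\ from $\PP$ and producing the $\log\frac1\delta$ term. The step you flag as the main obstacle---replacing $\pi_{\RR}(2m)$ by $\E_T[\pi_{\RR}(|T\cap D_j|)]\le(O(\mu_j+d)/d)^d$---goes through as you describe, since the terms $(en/d)^d(e\mu_j/n)^n$ decay geometrically past $n_0\asymp\mu_j+d$. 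The final inequality also checks out: the factor $2^j$ on the left absorbs the additive $dj$ and $j$ on the right, and $d\log(L/d)\le L$ handles the $\log\log\frac1\delta$ contribution. In short, your argument is correct and genuinely different from what the paper offers; it proves the probabilistic statement directly, whereas the paper only cites it and separately recovers the existence consequence via Theorem~\ref{thmmain2}.
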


Observe that, since $\tau(\eps) \le \frac{1}{\eps}$, Theorem \ref{Koltch} is an improvement over Theorem \ref{vcbound}. We refer to \cite{Balcan13, Gine06, Hanneke14} where many examples with $\tau(\eps) = o(\frac{1}{\eps})$ are provided. The same result may be directly translated to the range spaces. However, we show that in many cases where $\tau(\eps)$ is smaller than $\frac{1}{\eps}$ it is possible to construct nets of sizes significantly smaller than what is guaranteed by Theorem \ref{Koltch}. The result of Theorem \ref{Koltch} is very specific to i.i.d. sampling and does not cover the situation where one is able to choose points in a more clever way.

\emph{Active learning} is a particular framework within Statistical Learning. As before, there is an instance space $\mathcal{X}$ and a label space $\mathcal{Y} := \{-1, 1\}$ and a set $\mathcal{F}$ of classifiers mapping $\mathcal{X}$ to $\mathcal{Y}$. In the the realizable case, there is a target function $f^* \in \mathcal{F}$ and a sample $(x_i, f^*(x_i))_{i = 1}^n$. In the pool-based active learning, we define an active learning algorithm as
an algorithm taking as input a budget $n \in \mathbb{N}$, and proceeding as follows (compare with \emph{passive learning} described in the previous section). The algorithm
initially uses an unlabeled independent { infinite} data sequence $x_1, x_2, \ldots$ distributed according to $\PP$. At time $i$, the algorithm may select an index $i_1$ and request to observe the label $y_{i_1}:=f^*(x_{i_1})$. The algorithm  observes the value of $y_{i_1}$, and if $n \ge 2$, then based on both the unlabeled sequence and $y_{i_1}$, it may select another index $i_2$ and request to observe $y_{i_2}$. This continues for at most $n$ rounds.

In the realizable case the algorithm named CAL (named after Cohn, Atlas, and Ladner \cite{Chon92}) is by now the most studied. We refer the reader to \cite{Hanneke14} for the exact definition and the analysis of this learning procedure (cf. also Appendix~\ref{cal}).

\begin{thmx}[Sample complexity bound for CAL \cite{Hanneke14}]
\label{alexcapcal}
The exists an active learning algorithm (CAL), such that in the realizable case, for any distribution $\PP$, $\eps, \delta \in (0, 1]$ and a class of classifiers $\mathcal{F}$ of VC dimension $d$, after requesting
\[
n = O\Big(\tau(\eps)\big[d\log\tau(\eps) + \log\log\frac 1{\eps} + \log\frac1{\delta}\big]\log\frac 1{\eps}\Big),
\]
labels with probability at least $1 - \delta$, CAL returns a classifier $\hat{f}$ satisfying $\PP(\hat{f}(x) \neq f^*(x))\le\eps$.
\end{thmx}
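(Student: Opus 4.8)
The plan is to run the standard version-space / region-of-disagreement analysis of CAL, but to feed it the net bound of Theorem~\ref{Koltch} at every dyadic scale in place of the cruder bound of Theorem~\ref{thmvc}; this is exactly what upgrades $\log\frac1{\eps_j}$ to $\log\tau(\eps)$ in the per-scale sample sizes and hence yields $d\log\tau(\eps)$ rather than $d\log\frac1\eps$ in the final query count. Recall that CAL keeps a \emph{version space} $V\subseteq\ff$ of all classifiers consistent with the labels queried so far (so $f^*\in V$ throughout, and $V$ only shrinks); on a new unlabeled point $x_i$ it queries $y_i$ and restricts $V$ iff $x_i$ lies in the \emph{disagreement region} $\mathrm{DIS}(V):=\{x:\exists f_1,f_2\in V,\ f_1(x)\neq f_2(x)\}$, and otherwise leaves $V$ untouched; at the end it returns any $\hat f\in V$. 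I would attach to $f^*$ the range space with ground set $\X$ and ranges $\RR:=\{\{x:f(x)\neq f^*(x)\}:f\in\ff\}$, which has VC-dimension $\le d$ and Alexander's capacity $\tau$; set $\eps_j:=2^{-j}$ for $j=0,\dots,J:=\lceil\log_2\frac1\eps\rceil$ (so $\eps_J\le\eps<\eps_{j-1}$ for $1\le j\le J$) and $\delta_j:=\delta/(2(J+1))$; and invoke Theorem~\ref{Koltch} at scale $\eps_j$ to obtain $m_j=O\big(\frac{d\log\tau(\eps_j)+\log(1/\delta_j)}{\eps_j}\big)=O\big(\frac{d\log\tau(\eps)+\log(1/\delta_j)}{\eps_j}\big)$ (using $\tau(\eps_j)\le\tau(\eps)$) such that, on an event $E_j$ measurable w.r.t.\ $x_1,\dots,x_{m_j}$ and of probability $\ge1-\delta_j$, the i.i.d.\ sample $x_1,\dots,x_{m_j}$ labeled by $f^*$ is an $\eps_j$-net for $\RR$.

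Two observations would carry the argument. First, the version space after processing $x_1,\dots,x_k$ equals $\{f\in\ff:f\equiv f^*\text{ on }x_1,\dots,x_k\}$ — a skipped point was skipped precisely because every then-surviving classifier (in particular $f^*$) agreed on it, so every later survivor agrees with $f^*$ there too — whence on $E_j$, once $k\ge m_j$, every $f$ still in $V$ satisfies $\PP(f\neq f^*)\le\eps_j$. Second, during \emph{epoch} $j$ (indices $m_{j-1}<i\le m_j$), if $E_{j-1}$ holds then every classifier in $V\subseteq V_{j-1}$ (write $V_{j-1}$ for the version space at the start of epoch $j$) has error $\le\eps_{j-1}$, so any $x\in\mathrm{DIS}(V)$ lies in $\{y:f_1(y)\neq f^*(y)\}$ for some $f_1\in V$ with $\PP(f_1\neq f^*)\le\eps_{j-1}$; thus $\mathrm{DIS}(V)\subseteq\bigcup_{R\in\RR_{\le\eps_{j-1}}}R$, which by the definition of $\tau$ and $\eps_{j-1}\ge\eps$ has $\PP$-mass at most $\tau(\eps)\eps_{j-1}=2\tau(\eps)\eps_j$. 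Conditioning on $x_1,\dots,x_{m_{j-1}}$ together with the labels queried among them — this $\sigma$-algebra determines $E_{j-1}$, $V_{j-1}$ and $\mathrm{DIS}(V_{j-1})$, while the epoch-$j$ points stay i.i.d.\ $\PP$ — the number $X_j$ of queries made in epoch $j$ is, on $E_{j-1}$, stochastically dominated by $\mathrm{Bin}(m_j,2\tau(\eps)\eps_j)$, of mean $O\big(\tau(\eps)[d\log\tau(\eps)+\log(1/\delta_j)]\big)$.

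To finish, I would apply a multiplicative Chernoff bound in each epoch: on an event $G_j$ with $\PP(E_{j-1}\cap G_j^{c})\le\delta_j$ one has $X_j=O\big(\tau(\eps)[d\log\tau(\eps)+\log(1/\delta_j)]\big)$ (absorbing the additive $\log(1/\delta_j)$ via $\tau(\eps)\ge1$). Union-bounding the $2J$ failure events $E_j^{c}$ and $E_{j-1}\cap G_j^{c}$ ($j=1,\dots,J$), each of probability $\le\delta_j$, the complementary good event has probability $\ge1-2J\delta_j\ge1-\delta$, and on it all the $E_j$ hold and every $X_j$ obeys the bound above; there the total number of queries through the $m_J$-th sample is
\[
\sum_{j=1}^{J}X_j=O\Big(\tau(\eps)\sum_{j=1}^{J}\big[d\log\tau(\eps)+\log(1/\delta_j)\big]\Big)=O\Big(\tau(\eps)\big[d\log\tau(\eps)+\log\log\tfrac1\eps+\log\tfrac1\delta\big]\log\tfrac1\eps\Big),
\]
using $\log(1/\delta_j)=O(\log\log\frac1\eps+\log\frac1\delta)$ and $J=O(\log\frac1\eps)$. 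Taking $n$ equal to a constant times this quantity, on the good event CAL has not exhausted its budget before the $m_J$-th sample, so its version space is still the one after $\ge m_J$ samples and, by $E_J$, contains only classifiers of $\PP$-error $\le\eps_J\le\eps$; hence the returned $\hat f$ satisfies $\PP(\hat f(x)\neq f^*(x))\le\eps$.

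The routine parts are the Chernoff step and the bookkeeping of the $\delta_j$. The point that needs real care — and where the entire saving sits — is the coupling of the \emph{static} net bounds at the dyadic scales (Theorem~\ref{Koltch} supplies them, and it is precisely the replacement of $\log\frac1{\eps_j}$ by $\log\tau(\eps)$ in $m_j$ that pays off) with the \emph{dynamic}, data-dependent version-space process: the epoch-$j$ disagreement-mass bound must be invoked conditionally on the correctness event $E_{j-1}$ of the previous epoch, and the concentration must be run conditionally on the first $m_{j-1}$ samples, so that the epoch-$j$ queries are genuinely i.i.d.\ Bernoulli with the stated mean. One should also note that CAL itself never refers to the epochs — they are purely an analysis device — so it suffices that the budget is not spent before sample $m_J$, which is exactly how $n$ was chosen; and one should check the elementary scale inequalities $\eps_J\le\eps<\eps_{j-1}$ so that both the net bound and the $\tau$-bound apply legitimately at every scale.
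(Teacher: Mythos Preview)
The paper does not supply its own proof of Theorem~\ref{alexcapcal}: it is quoted from \cite{Hanneke14} as a known result, and the appendix only describes the CAL procedure and deduces Corollary~\ref{alexcap} by setting $\delta=1/2$. So there is no in-paper argument to compare against.

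Your proposal is correct and is, in essence, the standard disagreement-based analysis of CAL. The two observations you isolate are exactly the right ones: (i) in the realizable case the CAL version space after $k$ unlabeled draws coincides with the set of classifiers consistent with $f^*$ on \emph{all} of $x_1,\dots,x_k$, so the i.i.d.\ net guarantee of Theorem~\ref{Koltch} applies verbatim at each dyadic scale; and (ii) once $E_{j-1}$ holds, $\mathrm{DIS}(V)\subseteq\bigcup_{R\in\RR_{\le\eps_{j-1}}}R$, whose mass is at most $\tau(\eps)\,\eps_{j-1}$ by monotonicity of $\tau$. The conditional Bernoulli domination, the per-epoch Chernoff bound, and the union bound over the $O(\log\frac1\eps)$ epochs are routine and correctly stated. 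Invoking Theorem~\ref{Koltch} rather than Theorem~\ref{thmvc} at each scale is precisely what produces $d\log\tau(\eps)$ instead of $d\log\frac1{\eps_j}$ in the per-epoch budget; this is equivalent to the more common phrasing in which one restricts to the conditional measure on the current disagreement region and observes that every surviving error set has relative mass at least $1/(2\tau(\eps))$, so that an $O(d\tau(\eps)\log\tau(\eps))$ sample from the region suffices. Either packaging yields the stated bound.

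Two small points worth tightening when you write this out in full. First, at $j=1$ the disagreement-mass bound $2\tau(\eps)\eps_1$ may exceed $1$; there the binomial domination is vacuous, but the trivial bound $X_1\le m_1=O(d\log\tau(\eps)+\log(1/\delta_1))$ already fits under $O(\tau(\eps)[d\log\tau(\eps)+\log(1/\delta_1)])$ since $\tau(\eps)\ge1$. Second, the stochastic domination by $\mathrm{Bin}(m_j,2\tau(\eps)\eps_j)$ should, strictly speaking, use $m_j-m_{j-1}$ trials; your use of $m_j$ is a harmless overcount. Neither affects the final bound.
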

If $\tau(\eps)\log \tau(\eps) = o(1/\eps)$ the above algorithm allows for up to an \emph{exponential improvement} over the standard sampling strategy described in Theorem \ref{vcbound}. In particular, a simple inspection of the proof of the result for the CAL algorithm implies the following bound for $\eps$-nets. For the sake of completeness an analog of CAL algorithm for range spaces together with some additional discussions are presented in Appendix \ref{cal}.
\begin{cor}
\label{alexcap}
Let $(\X, \mathcal{R})$ be a range space of VC-dimension $d$ and Alexander's capacity $\tau(\eps)$.
Then there exists an $\eps$-net of size
\[
n = O\Big(\tau(\eps)\big[d\log\tau(\eps) + \log\log\frac 1{\eps}\big]\log\frac 1{\eps}\Big).
\]
\end{cor}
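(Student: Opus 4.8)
The plan is to derive Corollary~\ref{alexcap} directly from Theorem~\ref{alexcapcal} by the standard dictionary between range spaces and realizable classification: given a range space $(\X,\RR)$ of VC-dimension $d$ and Alexander's capacity $\tau(\eps)$, introduce the function class $\ff := \{\ind_R : R \in \RR\} \cup \{\mathbf 0\}$ (or equivalently work with $\{\pm1\}$-valued functions $2\ind_R - 1$), declare the target function to be $f^* \equiv -1$ (the empty set), and observe that then $\{x : f(x) \neq f^*(x)\} = R$ for $f = \ind_R$. Under this identification the VC-dimension of $\ff$ equals that of $(\X,\RR)$, the version space $\mathcal F_{\eps_0}$ corresponds exactly to $\RR_{\le \eps_0}$, and hence the two definitions of Alexander's capacity coincide. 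Running CAL with confidence parameter, say, $\delta = 1/2$ and the same $\eps$, Theorem~\ref{alexcapcal} produces, after $n = O\big(\tau(\eps)[d\log\tau(\eps) + \log\log\frac1\eps + \log 2]\log\frac1\eps\big) = O\big(\tau(\eps)[d\log\tau(\eps) + \log\log\frac1\eps]\log\frac1\eps\big)$ label requests, a classifier $\hat f$ with $\PP(\hat f \neq f^*) \le \eps$; translating back, $\hat f = \ind_{\hat R}$ for some $\hat R$ with $\PP(\hat R) \le \eps$.

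The remaining point is to extract an $\eps$-net of the claimed size, i.e.\ a \emph{set of points}, not a classifier. For this I would look at the set $S$ of points whose labels CAL actually queried during its run on the event of success; $|S| \le n$ by construction. I claim $S$ is an $\eps$-net for $(\X,\RR)$. Suppose not: there is $R \in \RR$ with $\PP(R) \ge \eps$ and $R \cap S = \emptyset$. Consider the alternative target $g^* := \ind_R \in \ff$. Since CAL's behavior on round $j$ depends only on the unlabeled sample and the labels returned so far, and since $g^*$ agrees with $f^* \equiv -1$ on every queried point (because $R$ avoids $S$), CAL would make exactly the same sequence of queries and return exactly the same $\hat f = \ind_{\hat R}$ when the target is $g^*$ — but then $\PP(\hat f \neq g^*) = \PP(\hat R \,\triangle\, R) \ge \PP(R) - \PP(\hat R) \ge \eps - \eps$, which is not quite a contradiction and needs a little more care: one should instead use that the success guarantee for CAL with target $g^*$ forces $\PP(\hat R \triangle R)\le\eps$, hence $\PP(R)\le \PP(\hat R)+\PP(\hat R\triangle R)$, and with a slightly smaller target error (run CAL at level $\eps/3$, say, absorbing the constant) one gets $\PP(R) < \eps$, contradicting $\PP(R)\ge\eps$. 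Since the query set has the same distribution under both targets and CAL succeeds with probability $\ge 1/2$ for \emph{each} fixed target, on a positive-probability event it succeeds simultaneously for the realized run; on that event $S$ must meet every heavy range.

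The step I expect to be the main obstacle is making the ``CAL behaves identically under $f^*$ and under $g^*$'' argument fully rigorous: one needs that CAL's queries and output are a deterministic (or at least coupled) function of the unlabeled sequence and the observed labels, so that agreement of $f^*$ and $g^*$ on the realized query set propagates through all $n$ rounds; and one needs the high-probability correctness of CAL to hold uniformly enough that a single realization of the unlabeled sequence works against all heavy ranges at once. This is where I would lean on the details in \cite{Hanneke14} and Appendix~\ref{cal}, rather than reprove them — indeed the excerpt already signals this by saying the corollary follows from ``a simple inspection of the proof'' of Theorem~\ref{alexcapcal}. Tracking the constants (replacing $\delta$ by a constant, running at error level $c\eps$ for a suitable small $c$) only changes the $O(\cdot)$ and is routine.
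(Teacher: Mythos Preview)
Your overall plan matches the paper's: run CAL with $\delta=\tfrac12$, take the set $S$ of queried points as the $\eps$-net, and invoke Theorem~\ref{alexcapcal} for the size bound. This is exactly what the Appendix does (it even phrases CAL directly on the range space rather than via the indicator-function dictionary, but that is cosmetic).

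Where you diverge is in the justification that $S$ is an $\eps$-net. Your indistinguishability argument --- switch to an alternative target $g^*=\ind_R$ for an unhit heavy $R$ and appeal to CAL's success guarantee for $g^*$ --- has a genuine circularity: the witness $R$, and hence $g^*$, depends on the random run, so you cannot invoke the per-target probability bound $\PP(A_{g^*})\ge 1-\delta$ for it, and a union bound over all heavy ranges would blow up the sample-complexity constant. You correctly flag the need for uniformity ``against all heavy ranges at once,'' but the fix is not to track constants more carefully; it is to replace the argument entirely. The direct route, which the paper's Appendix takes implicitly, is to observe that under your translation (target $f^*\equiv\mathbf 0$) the final version space of CAL is \emph{exactly} $\{R\in\RR: R\cap S=\emptyset\}$, i.e.\ the unhit ranges; and the analysis behind Theorem~\ref{alexcapcal} in \cite{Hanneke14} is really a bound on the whole version space: with probability $\ge 1-\delta$, every $f$ still consistent satisfies $\PP(f\neq f^*)\le\eps$. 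That single statement says every unhit $R$ has $\PP(R)\le\eps$, which is the $\eps$-net property (up to the usual $\eps$ vs.\ $\eps/2$ adjustment you already noted). No alternative targets, no coupling argument, no uniformity issue.
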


\subsection{Teaching}
Teaching is another important topic in Learning Theory. There are several natural frameworks of teaching studied in the literature (we refer to \cite{Doliwa14, Goldman95} and reference therein). Once again, for simplicity we consider the case of a finite domain $\X$. One basic teaching framework \cite{Goldman95} may be described as follows: a helpful teacher who is aware of the target function $f^*$ selects a sample $S = \left((x_{1}, f^*(x_1)), \ldots, (x_{m}, f^*(x_m))\right)$ and presents it to the learner who is aware only of $\mathcal{F}$. The aim of the teacher is to present a small sample $S$ with the following property: $f^*$ is the only function in $\mathcal{F}$ that is consistent with $S$. { We return to Teaching in Section \ref{sec6} in the contexts of its relations with $\eps$-nets.}

\subsection{Structure of the paper and our contributions}
\begin{enumerate}
\item In Section~\ref{sec2} we provide a simple $\eps$-net version of Theorem \ref{alexcapcal} with better guarantees.
\item In Section~\ref{sec3} we recall the definition of the {\it doubling constant} and give some of the results concerning it along with their improvements. Later, we prove a general theorem (which is one of our main contributions) giving bounds on $\eps$-nets in terms of both Alexander's capacity and the doubling constant, which improves upon many of the previously known results, and show that it is tight in all regimes. In Section~\ref{sec5} we return to the doubling constant and provide sharp bounds on it.
\item In Section \ref{sec3.5} below we give a transparent proof of Haussler's packing Lemma \cite{Haussler95}. Matou\u{s}ek \cite{Matousek} remarked that the original proof of Haussler (as well as its version by Chazelle \cite{Chaz92})  uses a ``probabilistic
argument which looks like a magician's trick". It is surprising that our proof uses only the techniques available in the literature at the time of original publication of Haussler. It is useful to note that recent Computational Geometry papers (see e.g., \cite{Dutta15, Mustafa16}) still use the argument of the original proof of Haussler. Apart from a simplified proof, our result may also be used to provide sharper constant factors and will be used in Section \ref{sec5}.
\item Section \ref{sec6} is devoted to several concluding remarks.
\end{enumerate}

\section{New bound for $\eps$-nets in terms of Alexander's capacity}
\label{sec2}
The following theorem can be quickly deduced via an application of Theorem~\ref{thmvc}.

\begin{thm}\label{thmmain2} Let $(\X, \RR)$ be a range space of VC-dimension $d$. Fix $\epsilon>0$. Let $\tau_i:=\tau(2^i\eps)$ and put $z:= { 1 +} \lceil \log_2 \frac 1{\eps}\rceil$.
Then there exists an $\eps$-net for $(\X,\RR)$ of size \begin{equation}\label{eqmain2}O\Big(d\sum_{i=1}^{z}\tau_i\log \tau_i\Big).\end{equation}
\end{thm}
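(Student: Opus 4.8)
The plan is to peel off the ranges of measure at least $\eps$ in $z = O(\log\frac1\eps)$ dyadic bands according to their measure, to hit each band separately by using Alexander's capacity to ``zoom in'' on a small portion of $\X$, and finally to take the union of the resulting partial nets. For $i = 1,\dots,z$ set
\[
\mathcal{B}_i := \{R \in \RR : 2^{i-1}\eps \le \PP(R) < 2^{i}\eps\}.
\]
Since $2^{z}\eps = 2^{1+\lceil \log_2 \frac1\eps\rceil}\eps > 1 \ge \PP(R)$ for every range $R$, each $R$ with $\PP(R)\ge\eps$ lies in exactly one $\mathcal{B}_i$. Hence it suffices to produce, for each $i$, a set $S_i\subseteq \X\cap\mathrm{supp}(\PP)$ meeting every member of $\mathcal{B}_i$ with $|S_i| = O(d\,\tau_i\log\tau_i)$, and then output $S=\bigcup_{i=1}^{z}S_i$.

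To construct $S_i$ (assuming $\mathcal{B}_i\neq\emptyset$, else take $S_i=\emptyset$), restrict the ground set to $X_i := \bigcup\{R\in\RR : \PP(R)\le 2^{i}\eps\}$. Every $R\in\mathcal{B}_i$ satisfies $R\subseteq X_i$, and the definition of Alexander's capacity, applied with $\eps_0=\eps_0'=2^{i}\eps$, gives
\[
\PP(X_i)\ \le\ \tau(2^{i}\eps)\cdot 2^{i}\eps\ =\ 2\,\tau_i\cdot 2^{i-1}\eps
\]
(and this is trivially true when $2^{i}\eps\,\tau_i\ge 1$). Equip $X_i$ with the conditional distribution $\PP_i(A)=\PP(A)/\PP(X_i)$. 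Then every $R\in\mathcal{B}_i$ has $\PP_i(R)=\PP(R)/\PP(X_i)\ge 2^{i-1}\eps/\PP(X_i)=:\eps'$, while the displayed bound yields $1/\eps' = \PP(X_i)/(2^{i-1}\eps)\le 2\tau_i$; note also $\PP(X_i)\ge \PP(R)\ge 2^{i-1}\eps$ for $R\in\mathcal{B}_i$, so $\eps'\le 1$.

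Now apply Theorem~\ref{thmvc} to the restricted range space $\big(X_i,\{R\cap X_i : R\in\mathcal{B}_i\}\big)$ with distribution $\PP_i$, accuracy $\eps'$, and $\delta=\tfrac12$: restricting the ground set does not increase the VC-dimension, so it is at most $d$, and we obtain an $\eps'$-net $S_i\subseteq X_i\cap\mathrm{supp}(\PP_i)\subseteq \X\cap\mathrm{supp}(\PP)$ of size
\[
O\!\Big(\tfrac{d\log\frac1{\eps'}+1}{\eps'}\Big)\ =\ O\big(d\,\tau_i\log(2\tau_i)\big)\ =\ O\big(d\,\tau_i\log\tau_i\big),
\]
the middle equality because $t\mapsto t\log t$ is increasing on $[1,\infty)$ and $1\le 1/\eps'\le 2\tau_i$, and the last one because $\tau_i$ is bounded away from $1$ by the normalization adopted earlier. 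Since each $R\in\mathcal{B}_i$ has $\PP_i(R)\ge\eps'$ and $R\cap X_i = R$, the net $S_i$ meets every $R\in\mathcal{B}_i$. Summing over $i=1,\dots,z$ gives $|S| = O\big(d\sum_{i=1}^{z}\tau_i\log\tau_i\big)$, which is an $\eps$-net of $(\X,\RR)$, as claimed.

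The only genuinely delicate points are bookkeeping rather than conceptual: one must check that passing to $X_i$ and the conditional measure neither raises the VC-dimension nor enlarges the set of ``usable'' points (so $S$ remains a legitimate $\eps$-net of the original space), and that $\PP(X_i)$ is indeed controlled by $\tau_i$ at scale $2^{i}\eps$ — this is exactly where Alexander's capacity buys a factor $\log\tau_i$ in band $i$ in place of the $\log\frac{1}{2^{i}\eps}$ that a naive application of Theorem~\ref{thmvc} on the whole space would cost. The remaining edge cases ($\mathcal{B}_i=\emptyset$, $\PP(X_i)=0$, or $\eps'$ near $1$) are disposed of trivially or by the convention $\tau(\eps)\ge\text{const}>1$.
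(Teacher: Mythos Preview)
Your argument is correct and follows essentially the same route as the paper's proof: dyadic peeling of ranges by measure, restriction to the support of small ranges, control of that support via Alexander's capacity, and an application of Theorem~\ref{thmvc} with parameter $\frac{1}{2\tau_i}$ in each band. The only cosmetic difference is that the paper restricts to $\X^{(i)}=\bigcup_{R\in\RR_i}R$ (the union over the $i$-th band itself) whereas you use the slightly larger $X_i=\bigcup_{R:\PP(R)\le 2^i\eps}R$; both are bounded by $\tau_i\cdot 2^i\eps$, so this does not matter.
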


\begin{proof}
We set $\eps_{-1}:=0$, $\eps_{i} := 2^i\eps$ and $\mathcal{R}_{i} = \{S \in \mathcal{R} : \eps_{i-1} \le \PP(S) <\eps_{i}\}$ for $i = 0, 1,\ldots $. Define $\X^{(i)}:=\bigcup_{R\in \RR_i}R$ to be the support of $\RR_i$.

It is sufficient to find an $\epsilon$-net for each $(\X^{(i)},\RR_i)$ of size $O(d \tau_i\log \tau_i)$. Note that, by definition, $\PP(\X^{(i)})\le \tau_i \eps_i $, while for each $R\in \RR_i$ we have $\PP(R)\ge \eps_{i-1}$. Therefore, $\PP(R)\ge \frac {\PP(\X^{(i)})}{2\tau_i}$, and since for each $R\in \RR_i$ we have $\PP(R \cap \X^{(i)}) =\PP(R)$ and thus $\PP(R|\X^{(i)}) =\PP(R)/\PP(\X^{(i)}) \ge \frac{1}{2\tau_i}$ it is sufficient for us to find a $\frac 1{2\tau_i}$-net for $(\X^{(i)},\RR_i)$ with respect to the conditional distribution $\PP(\ |\X^{(i)})$. But this could simply be done using the Vapnik-Chervonenkis-Haussler-Welzl Theorem \ref{thmvc}. This gives a net of size $O(d \tau_i\log \tau_i)$ for each $(\X^{(i)},\RR_i)$.
\end{proof}

We immediately get the following Corollary.
\begin{cor}\label{coralexnew}
   In the notation of Theorem~\ref{thmmain2}, there exists an $\eps$-net for $(\X,\RR)$ of size
   \begin{equation}
   \label{eqtau}
   O\Big(d \tau(\eps)\log \tau(\eps)\log \frac 1{\eps} \Big).
   \end{equation}
\end{cor}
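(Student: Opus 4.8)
The plan is to bound the sum in~\eqref{eqmain2} by~\eqref{eqtau} using only the elementary monotonicity properties of $\tau$ collected before Theorem~\ref{Koltch}, namely that $\tau_i \le 2^{-i}/\eps$, that $1 \le \tau_i/\tau_{i+1} \le 2$, and that $\tau(\cdot)$ is non-increasing. First I would observe that since the map $t \mapsto t\log t$ is increasing for $t \ge 1$ and $\tau_i \le \tau_1 \le \tau(\eps) = \tau_0$ for all $i \ge 1$, each summand satisfies $\tau_i \log \tau_i \le \tau(\eps)\log\tau(\eps)$. There are $z = 1 + \lceil \log_2(1/\eps)\rceil = O(\log\frac1\eps)$ terms, so the sum is at most $O\big(\tau(\eps)\log\tau(\eps)\log\frac1\eps\big)$; multiplying by the factor $d$ from Theorem~\ref{thmmain2} gives exactly~\eqref{eqtau}.

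A small technical point I would flag: one must ensure $\log\tau_i$ is not problematic when $\tau_i$ is close to $1$. This is already handled in the paper, since the capacity was redefined to be $\max\{\tau(\eps),1\}$ and separated from zero by an absolute constant, so $\tau_i \ge 1$ and the expression $\tau_i\log\tau_i$ is well-defined and nonnegative; when $\tau(\eps) = \Theta(1)$ the bound~\eqref{eqtau} still reads $O(d\log\frac1\eps)$, which is consistent (and in that regime~\eqref{eqmain2} is the stronger statement anyway, but we only need the weaker one here).

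Since this is a direct term-by-term comparison, there is essentially no obstacle. The only thing worth remarking is that~\eqref{eqtau} can be genuinely lossy compared to~\eqref{eqmain2}: when the $\tau_i$ decay geometrically, $\sum_i \tau_i\log\tau_i$ can be $O(\tau(\eps)\log\tau(\eps))$ without the extra $\log\frac1\eps$ factor. So the Corollary should be read as a convenient closed form, not a tight consequence. I would therefore keep the proof to one or two sentences: bound each $\tau_i\log\tau_i$ by $\tau(\eps)\log\tau(\eps)$ using monotonicity of $\tau$ and of $t\log t$ on $[1,\infty)$, and count the $O(\log\frac1\eps)$ terms.
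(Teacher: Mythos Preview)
Your proposal is correct and is essentially identical to the paper's own proof: the paper also simply observes that there are $1 + \lceil \log_2 \frac{1}{\eps}\rceil$ summands in \eqref{eqmain2}, each of which is $O\big(d\tau(\eps)\log\tau(\eps)\big)$. Your additional remarks about the $\log\tau_i$ technicality and the potential lossiness of \eqref{eqtau} relative to \eqref{eqmain2} are accurate and useful, but the core one-line argument is the same.
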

\begin{proof} There are $1 + \lceil \log_2 \frac 1{\eps}\rceil$ summands in  \eqref{eqmain2}, each being $O\big({d}\tau(\eps)\log \tau(\eps)\big)$.
\end{proof}

\textbf{Remark. } Theorem~\ref{thmmain2} improves on both Theorems~\ref{thmvc} and~\ref{Koltch} in many cases. Indeed, one should use $\tau_i\le 2^{-i}/(\eps)$.  Corollary~\ref{coralexnew} is an improvement of Corollary~\ref{alexcap}.
In particular, Corollary~\ref{coralexnew} implies that if $\tau(\eps) = O(1)$ we have $\eps$-nets of size $O(\log \frac{1}{\eps})$ which is significantly smaller than what is guaranteed by Theorem \ref{thmvc}. { Some particular examples will be given in what follows.}

\vskip+0.1cm

The $\eps$-net from Theorem~\ref{thmmain2} is based on a {\it simple sampling} strategy, although the probability of including different elements differs. The probabilities can be decided on before choosing a random sample quite easily. One should just find the sets $\mathcal{X}^{(i)}$, which could be done efficiently in some cases. However, Theorem~\ref{thmvc}, as well as CAL algorithm used for Corollary~\ref{alexcap}, gives a more natural sampling strategy to construct $\eps$-nets.\vskip+0.1cm

\section{The doubling constant}\label{sec3}

Another quantity of interest with a wide range of applications in Statistical Learning is the  {\it doubling constant} or the {\it local covering number}. For a collection of classifiers $\ff$, let us define the distance $\rho$ by $\rho(f, g) =\PP(f(x) \neq g(x))$. We denote by $\mathcal{M}(\ff, \eps)$ the {\it packing number} of $\ff$ with respect to  $\rho$:
$$\mathcal{M}(\ff, \eps):=\max_{\mathcal Q\subset \ff} \big\{|\mathcal Q|:\ \rho(f,g)\ge \eps\ \text{for any distinct}\ f,g\in \mathcal Q\big\}.$$
 Given any $f^*\in \mathcal{F}$, define the set $B(\mathcal{F}, f^*, \eps) := \big\{f \in \mathcal{F}:\ \rho(f, f^*) \le \eps\big\}$ of all classifiers from $\ff$ at distance at most $\eps$ from $f^*$. Finally, define the {\it doubling constant}
\begin{equation}
\label{dd}
D_{\eps}(P, \mathcal{F}, f^*) := \sup\limits_{\eps_{0} \ge \eps}\,\mathcal{M}(B(\mathcal{F}, f^*, 2\eps_0), \eps_0).
\end{equation}
We write $D_{\eps}$ instead of $D_{\eps}(P, \mathcal{F}, f^*)$ whenever the class of functions and $f^*$ are clear from the context. The logarithm of the doubling constant is referred to as the {\it doubling dimension}. It plays an important role in risk guaranties for some learning algorithms \cite{Bshouty09, Mendelson17, Zhivotovskiy16, Zhivotovskiy17}.

Let us reformulate \eqref{dd} in Computational Geometry terms of range spaces. Given a range space $(\X, \mathcal{R})$, we denote by $\mathcal{M}_{H}(\mathcal{R}, \eps)$ the maximal packing with respect to the distance $\rho$ defined by $\rho(R_1, R_2) =\PP(R_1\Delta R_2)$. Then the {\it doubling constant} is
\[
D_{\eps}(P, \mathcal{R}) = \sup\limits_{\eps_{0} \ge \eps}\,\mathcal{M}(\mathcal{R}_{\le 2\eps_{0}}, \eps_{0}).
\]

{ Section \ref{sec5} is devoted to various new bounds on the doubling constant. In particular, Lemma \ref{geombound} will give some sharp upper bounds for several interesting geometric range spaces.}

\subsection{A new bound for $\eps$-nets}

The first part of the following Theorem is an improvement of a recent theorem from \cite{MDG} (see also the related discussions there), and the technique of the proof is similar. We should note, however, that the technique of the proof is also closely related to the \emph{peeling technique} originating from the empirical processes theory and which is widely used in the Learning Theory \cite{Bartlett05, Bshouty09, Zhivotovskiy16, Zhivotovskiy17}. The second part of the Theorem complements the first one and has no known analogues.
\begin{thm}\label{thm4} Let $(\X, \RR)$ be a range space of VC-dimension $d$. Fix $\epsilon>0$ and let $D_{\epsilon}$ be an upper bound on the doubling constant of $(\X,\RR)$. Put $\tau_i:=\tau(2^i\eps)$ and $z:=1 + \lceil \log_2 \frac 1{\eps}\rceil$.

(i) If $D_{\epsilon} \ge 2\tau_1$, then there exists an $\eps$-net for $(\X,\RR)$ of size
\begin{equation}\label{eqmain}O\Big(\sum_{i=1}^{z}\Big( \log\frac{D_{\eps}}{\tau_i}+d\Big)\tau_i \Big).\end{equation}

(ii) If $D_{\epsilon}\le \frac 1{2\epsilon}$, there is an $\epsilon$-net for $(\X,\RR)$ of size $O\big(dD_{\epsilon}\log \frac {1}{\epsilon D_\epsilon}\big)$.
Moreover, for any $n, d$, $\epsilon>d/n$ and $D_{\epsilon}<\frac{1}{2\eps}$ there is a range space on $n$ points with uniform measure and $VC$ dimension at most $d$, doubling constant at most $D_{\epsilon}$ and the smallest $\epsilon$-net of size $\Omega\big(dD_{\epsilon}\log\frac {1}{\epsilon D_\epsilon}\big)$.
\end{thm}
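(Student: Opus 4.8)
My plan is to treat the three assertions separately, following the same dyadic-peeling bookkeeping already used for Theorem~\ref{thmmain2}. For part~(i), I would reuse the decomposition $\RR=\bigcup_{i}\RR_i$ with $\RR_i=\{R:\eps_{i-1}\le\PP(R)<\eps_i\}$, $\eps_i=2^i\eps$, and $\X^{(i)}=\bigcup_{R\in\RR_i}R$. As in the proof of Theorem~\ref{thmmain2}, on $\X^{(i)}$ with the conditional measure $\PP(\cdot\mid\X^{(i)})$ every range in $\RR_i$ has conditional measure at least $\frac1{2\tau_i}$, so it suffices to find a $\frac1{2\tau_i}$-net for each $(\X^{(i)},\RR_i)$. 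The new ingredient over Theorem~\ref{thmmain2} is that instead of paying the $\log\tau_i$ factor coming from the bare VC bound of Theorem~\ref{thmvc}, I want to exploit that $(\X^{(i)},\RR_i)$ has a small packing number: since every $R\in\RR_i$ has $\PP(R)\le\eps_i\le 2\eps_0$ for $\eps_0=\eps_i$ (up to the factor $2$), the doubling constant gives $\mathcal M_H(\RR_i,\eps_i)\le D_\eps$, i.e.\ $\RR_i$ is covered by at most $D_\eps$ balls of $\rho$-radius $\eps_i=2\tau_i\cdot\frac1{2\tau_i}\PP(\X^{(i)})$ — rescaling to the conditional measure, $\RR_i$ has a $\tfrac1{?}$-cover of size $D_\eps$ at scale proportional to the net parameter times $\tau_i$. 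Then a standard ``chaining/union over the net of the cover'' argument (or simply: sample enough points to hit every one of the $D_\eps$ representatives, and use VC-type concentration inside each ball of radius comparable to $1/\tau_i$) yields a net of size $O\big((\log\frac{D_\eps}{\tau_i}+d)\tau_i\big)$; summing over $i=1,\dots,z$ gives \eqref{eqmain}. The hypothesis $D_\eps\ge 2\tau_1$ is exactly what makes $\log\frac{D_\eps}{\tau_i}$ nonnegative and the bound meaningful.

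For the upper bound in part~(ii), I would not peel but rather apply a single covering argument at the top scale. By definition of $D_\eps$, the family $\RR_{\le 2\eps}$ (more precisely $\RR_{\le 2\eps_0}$ for each $\eps_0\ge\eps$) admits an $\eps_0$-packing, hence an $O(\eps_0)$-net in $\rho$, of size at most $D_\eps$; combining this with the fact that a random sample of size $O\big(\frac{d}{\eps_0}\log\frac1{\eps_0}\big)$ is an $\eps_0$-net and carefully iterating the doubling over the $O(\log\frac1{\eps D_\eps})$ scales between $\eps$ and the scale where the packing saturates, one gets total size $O\big(dD_\eps\log\frac1{\eps D_\eps}\big)$; the condition $D_\eps\le\frac1{2\eps}$ guarantees $\log\frac1{\eps D_\eps}\ge 0$ and that this is genuinely an improvement over the plain $O\big(\frac d\eps\log\frac1\eps\big)$ bound. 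I expect the cleanest route is: take a maximal $\eps$-packing $\{R_1,\dots,R_N\}$ of $\RR_{\le 2\eps}$ with $N\le D_\eps$; the union of $\eps$-nets for the ``local'' range spaces induced near each $R_j$ (each of which is a range space on $\approx 2\eps$ fraction of the mass, so needs $O\big(d\log\frac1\eps\big)$ points, but one can be smarter and localize further) produces the claimed bound, where the $\log\frac1{\eps D_\eps}$ rather than $\log\frac1\eps$ comes from the fact that the local problems live at relative scale $\eps D_\eps$.

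The matching lower bound is, I expect, the real obstacle and the part requiring the most care. The plan is an explicit construction on $n$ points with the uniform measure: partition $[n]$ into $D_\eps$ blocks of size $n/D_\eps$ each, and on each block place an independent copy of a ``hard'' range space of VC-dimension $d$ realizing the classical $\Omega\big(d\log\frac1{\eps'}\big)$ lower bound of Theorem~\ref{thmvc} at the local scale $\eps'$, where $\eps'$ is chosen so that a set of relative size $\eps'$ within a block has global measure $\eps$ — i.e.\ $\eps'=\eps D_\eps$. A net for the whole space must be a net within each block, costing $\Omega\big(d\log\frac1{\eps'}\big)=\Omega\big(d\log\frac1{\eps D_\eps}\big)$ points per block and $\Omega\big(dD_\eps\log\frac1{\eps D_\eps}\big)$ in total. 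The delicate points are: (a) checking that the VC-dimension of the disjoint union is still at most $d$ (ranges supported in different blocks are disjoint, so shattering still happens within one block — this is fine), (b) verifying the doubling constant of the union is at most $D_\eps$ and not larger (two ranges from different blocks are at $\rho$-distance equal to the sum of their measures, so a packing at scale $\eps_0$ can contain at most one ``small'' range per block plus a controlled number within a block — here I need the within-block instance to itself have bounded doubling, which I can arrange by taking the standard lower-bound instance for Theorem~\ref{thmvc}, which has only $O(1/\eps')$ ranges and hence trivially small packing), and (c) the side condition $\eps>d/n$ ensuring $\eps'=\eps D_\eps\le 1/2<1$ and that each block is large enough ($n/D_\eps\gtrsim d/\eps'$) to host the hard instance. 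Assembling (a)–(c) and invoking the known tight lower bound of Theorem~\ref{thmvc} inside one block finishes the argument.
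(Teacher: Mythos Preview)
Your plan for part~(i) is right and matches the paper: peel into $\RR_i$, take a maximal $\eps_{i-1}$-packing $\mathcal Q_i$ of $\RR_i$ with $|\mathcal Q_i|\le D_\eps$, sample $O\bigl((\log\frac{D_\eps}{\tau_i}+d)\tau_i\bigr)$ points from $\PP(\cdot\mid\X^{(i)})$, and show that for each $Q\in\mathcal Q_i$ the sample inside $Q$ is a $1/4$-net (w.r.t.\ $\PP(\cdot\mid Q)$) for the ranges close to $Q$, with failure probability $\tau_i/D_\eps$; the expected number of failed $Q$'s is then at most $\tau_i$ per level, and these are repaired manually at cost $O(d)$ each.

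For the upper bound in~(ii) your sketch is incomplete in one place. Your ``cleanest route'' treats only $\RR_{\le 2\eps}$, and the claim that the local problems live at relative scale $\eps D_\eps$ is off: near each packing center the ranges have measure comparable to the center's, so the relative scale is $\Theta(1)$ and the local net costs $O(d)$. Iterating this over the $\Theta\bigl(\log\frac1{\eps D_\eps}\bigr)$ dyadic levels up to measure $\approx 1/D_\eps$ does give $O\bigl(dD_\eps\log\frac1{\eps D_\eps}\bigr)$, but you never say what to do with ranges of measure larger than $1/D_\eps$. The paper's clean move here is to invoke part~(i) (via Corollary~\ref{cormain}) at scale $e/D_\eps$, which yields an additional $O(dD_\eps)$ in one shot.

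The genuine gap is in your lower bound. You correctly recognise that you need, inside each of your $D_\eps$ blocks, a VC-$d$ instance with doubling constant $O(1)$ and minimum $\eps'$-net of size $\Omega\bigl(d\log\frac1{\eps'}\bigr)$. But ``the standard lower-bound instance for Theorem~\ref{thmvc}'' does not have these properties: the KPW construction has doubling constant $\Theta(1/\eps')$, and an instance with only $O(1/\eps')$ ranges can still have doubling constant $\Theta(1/\eps')$, not $O(1)$. Plugging such blocks in would make the doubling constant of the union $\Theta(1/\eps)$ rather than $D_\eps$. The paper's construction supplies exactly the missing ingredient: on disjoint ground sets it places, for each dyadic scale $i=0,\ldots,m-1$, the family $\RR'_i=\{R\subset X'^{(i)}:|R|=2^ik\}$ with $|X'^{(i)}|=2^ik+d-1$ (so a net for $\RR'_i$ needs exactly $d$ points), and then takes $l$ disjoint copies of each. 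Because ranges at different scales differ in size by factors of~$2$, an $\eps_0$-packing of $\RR_{\le 2\eps_0}$ can pick at most one set per copy from at most two adjacent scales, forcing the doubling constant down to $\le 2l+1$. This multi-scale design is precisely what keeps the doubling constant small while the net size is $lmd$, and it is the idea missing from your plan.
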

{ \textbf{Remark.} We note that the additive form of the bound \eqref{eqmain} as well as of \eqref{eqmain2} is typical in Active Learning literature (see e.g., \cite{Zhang14}). In what follows we provide several straightforward relaxations of \eqref{eqmain} that have a non-additive form.}

\textbf{Remark. } Since $\tau_1\le \frac 1{2\eps}$, the two upper bounds from the theorem cover all possible range of values of $D_{\eps}$ except for $\tau_1\le D_{\eps} \le 2\tau_1$. If $D_{\eps}$ falls in that range then we may substitute $2\tau_1$ into the first bound instead of $D_{\eps}$ and get a valid bound.

The upper bound in Theorem~\ref{thm4} (i) is sharp for constant $d$, since it is a strengthening of an upper bound from \cite{MDG} (see Section~\ref{sec5}), which was recently shown to be tight in some specific cases in \cite{KMP17}.  The upper bound in Theorem~\ref{thm4} (ii) may be stated in terms of $\tau_i$, but the formulation gets rather complicated, so we decided to omit it.\vskip+0.1cm

Before proving the theorem, let us first obtain a handy corollary from \eqref{eqmain}. 
\begin{cor}\label{cormain} In the notation of Theorem~\ref{thm4}, assume that $D_\epsilon$ is an upper bound on the doubling constant of $(\X,\RR)$ and $\tau$ is its Alexander's capacity. If $D_{\epsilon}\ge e/\epsilon$, then there exists an $\eps$-net for $(\X,\RR)$ of size
\begin{equation}
\label{usefuleq}
O\Big(\frac{1}{\eps}\big(d+\log(\eps D_{\eps})\big)\Big).
\end{equation}
Similarly, if instead $D_{\epsilon} \ge 2\tau(\eps)$ there exists an $\eps$-net for $(\X,\RR)$ of size
\begin{equation}
\label{impr}
O\Big(\tau(\eps)\left(d+\log\frac{D_{\eps}}{\tau(\eps)}\right)\log \frac{1}{\eps}\Big).
\end{equation}
\end{cor}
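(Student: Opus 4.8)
The plan is to derive both bounds directly from the additive estimate of Theorem~\ref{thm4}(i). Its hypothesis $D_\eps\ge 2\tau_1$ holds in each of the two regimes: if $D_\eps\ge e/\eps$, then since $\tau_1\le\frac1{2\eps}$ we have $D_\eps\ge\frac1\eps>2\tau_1$; if instead $D_\eps\ge 2\tau(\eps)$, then $D_\eps\ge 2\tau(\eps)\ge 2\tau_1$ because $\tau$ is non-increasing and $\tau_1=\tau(2\eps)$. Hence Theorem~\ref{thm4}(i) yields an $\eps$-net of size $O\big(\sum_{i=1}^{z}(\log\frac{D_\eps}{\tau_i}+d)\tau_i\big)$, and everything reduces to coarsening this sum in two different ways, one tuned to \eqref{usefuleq} and one to \eqref{impr}.

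For \eqref{usefuleq} I would split the sum as $d\sum_i\tau_i+\sum_i\tau_i\log\frac{D_\eps}{\tau_i}$. The first term is at most $d/\eps$ by $\sum_{i\ge1}\tau_i\le 1/\eps$. For the second, the map $x\mapsto x\log\frac{D_\eps}{x}$ is increasing on $(0,D_\eps/e)$, and $\tau_i\le 2^{-i}/\eps\le\frac1{2\eps}\le D_\eps/(2e)$ (this is where $D_\eps\ge e/\eps$ enters), so replacing $\tau_i$ by the larger value $2^{-i}/\eps$ gives $\tau_i\log\frac{D_\eps}{\tau_i}\le\frac{2^{-i}}{\eps}\big(\log(\eps D_\eps)+i\log 2\big)$; summing this geometric-type series over $i\ge1$ yields $O\big(\frac1\eps\log(\eps D_\eps)\big)$, the $O(1/\eps)$ leftover being absorbed since $\eps D_\eps\ge e$ forces $\log(\eps D_\eps)\ge1$. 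Adding the two pieces gives \eqref{usefuleq}.

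For \eqref{impr} I would instead bound the prefactor crudely by $\tau_i\le\tau(\eps)$ and split $\log\frac{D_\eps}{\tau_i}=\log\frac{D_\eps}{\tau(\eps)}+\log\frac{\tau(\eps)}{\tau_i}$. As there are $z=O(\log\frac1\eps)$ summands, the part coming from $\log\frac{D_\eps}{\tau(\eps)}+d$ contributes $O\big(\tau(\eps)(d+\log\frac{D_\eps}{\tau(\eps)})\log\frac1\eps\big)$, which is already the claimed bound; for the leftover I would use the elementary inequality $t\log\frac{A}{t}\le A$ for $0<t\le A$ with $A=\tau(\eps)$, $t=\tau_i$, so that $\tau_i\log\frac{\tau(\eps)}{\tau_i}=O(\tau(\eps))$ per index, hence $O(\tau(\eps)\log\frac1\eps)$ in total, which is dominated by the previous term since $d+\log\frac{D_\eps}{\tau(\eps)}\ge\log 2$ by $D_\eps\ge 2\tau(\eps)$. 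Combining the two contributions gives \eqref{impr}.

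I do not foresee a genuine obstacle: both parts are short reductions to Theorem~\ref{thm4}(i). The only point needing care is to exploit the geometric decay $\tau_i\le 2^{-i}/\eps$ exactly where it is needed --- in \eqref{usefuleq}, to keep the estimate of order $1/\eps$ rather than $\frac1\eps\log\frac1\eps$ --- whereas for \eqref{impr} one deliberately does not use it and simply pays the extra $\log\frac1\eps$ factor coming from the number of dyadic ranges. One should also stay mildly attentive to the base of the logarithms, which affects only the absolute constants hidden in the $O(\cdot)$.
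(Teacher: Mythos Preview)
Your proposal is correct and follows essentially the same route as the paper. For \eqref{usefuleq} your argument is identical to the paper's: use the monotonicity of $x\mapsto x\log(D_\eps/x)$ on $(0,D_\eps/e)$ to replace $\tau_i$ by $2^{-i}/\eps$, then sum the resulting geometric series; the paper phrases the monotonicity as ``$a\log\frac1a$ is increasing on $(0,1/e)$'' but the computation is the same. For \eqref{impr} the paper simply declares the bound ``trivial given \eqref{eqmain} and the fact that $\tau(\eps)$ is nonincreasing'', whereas you spell out a concrete argument via the split $\log\frac{D_\eps}{\tau_i}=\log\frac{D_\eps}{\tau(\eps)}+\log\frac{\tau(\eps)}{\tau_i}$ together with $t\log(A/t)\le A/e$; this is a perfectly valid way to fill in that gap and is equivalent in spirit to bounding each summand by $O\big(\tau(\eps)\log\frac{D_\eps}{\tau(\eps)}\big)$ directly via the same monotonicity used in the first part.
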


{ \textbf{Example.} We argue that Corollary \ref{cormain} in this form is the most useful in various applications. In this example we focus on the range spaces with $D_{\eps} = O(\frac{1}{\eps})$. According to \eqref{usefuleq} for these range spaces there exist $\eps$-nets of size $O(\frac{d}{\eps})$, which is an improvement over Theorem \ref{thmvc} and is known to be optimal. It will be shown (Lemma \ref{geombound} in Section \ref{implications}) that the range spaces having $D_{\eps} = O(\frac{1}{\eps})$ include the set systems
\begin{itemize}
\item of VC dimension $d = 1$,
\item induced by half-spaces in $\mathbb{R}^2$ and $\mathbb{R}^3$,
\item induced by balls in $\mathbb{R}^2$,
\item induced by intervals in $\mathbb{R}$.
\end{itemize}
Although these geometric set systems are known to admit $\eps$-nets of size $O(\frac{d}{\eps})$
\footnote{In all mentioned cases it holds that $d \le 4$. Therefore, the bound $O(\frac{d}{\eps})$ is essentially $O(\frac{1}{\eps})$ in our case.} (we refer to the survey \cite{MV} for an extensive list of known upper bounds in geometric scenarious), Corollary \ref{cormain} highlights an explanation of this phenomenon: in order to show the existence of an $\eps$-net of size $O(\frac{d}{\eps})$ it is sufficient to show that $D_{\eps} = O(\frac{1}{\eps})$.

More importantly, our results will imply that for the geometric range spaces listed above, it holds that $D_{\eps} = O(\tau(\eps))$. According to \eqref{impr} combined with \eqref{usefuleq} implies the existence of $\eps$-nets of size
\begin{equation}
\label{optbound}
O\left(\min\left\{d\tau(\eps)\log \frac{1}{\eps}, \frac{d}{\eps}\right\}\right),
\end{equation}
which is again an improvement over both Corollary \ref{alexcap} and Corollary \ref{coralexnew}.
}

\begin{proof} (of Corollary \ref{cormain})
  The function $a\log \frac 1a$ is increasing for $a\in (0,1/e)$. Then, recalling that $\tau_i\le \frac 1{2^i\epsilon}$ and $\frac{\tau_i}{D_{\eps}}\le \frac {1}{D_{\eps}2^i\eps}\le \frac 1e$, we may apply it in the form $\frac{\tau_i}{D_{\eps}}\log\frac{D_{\eps}}{\tau_i}\le \frac{1}{2^i\eps D_{\eps}}\log(2^i\eps D_{\eps}).$ We get that
  \begin{align*}\sum_{i=1}^{z}\tau_i\log\frac{D_{\eps}}{\tau_i} & \le \frac 1{\eps}\sum_{i=1}^{z}2^{-i}\log(2^i\eps D_{\eps})
  \\
  &\le \frac{1}{\eps}\Big[\log(\eps D_{\eps})+\sum_{i=1}^{z}i2^{-i}\Big]=O\Big(\frac{1}{\eps}\log(\eps D_{\eps})\Big).
  \end{align*}
  Moreover, we have $\sum_{i=1}^z\tau_i\le \frac 1{\eps}$. We are left to substitute it into \eqref{eqmain}. The proof of the second bound is trivial given \eqref{eqmain} and the fact that $\tau(\eps)$ is nonincreasing.
  { We also remark that \eqref{eqmain} can be bounded by 
\[
O\left(\min\left\{\frac{d}{\epsilon}, 
   d\tau(\epsilon)\log\frac{1}{\epsilon}\right\} +
   \sum_{i=1}^{z}\tau_i\log\frac{D_\epsilon}{\tau_i}\right).
 \]
}
\end{proof}

Observe that the bound \eqref{impr} is always not worse than the best known bound of Corollary \ref{coralexnew} given in terms of Alexander's capacity alone. This fact follows directly from the bound \eqref{corol} below. We also note that Corollary 5 improves the best known upper bound for $\eps$-nets in terms of the doubling constant. Indeed, Theorem 8 in Bshouty et al. \cite{Bshouty09} says that Theorem \ref{thmvc} holds with
\[
m(\eps) = O\left(\frac{(d + \log D_{\eps})\sqrt{\log\frac{1}{\eps}}}{\eps} + \frac{\log\frac{1}{\delta}}{\eps}\right),
\]
{ which implies the existence of an $\eps$-net of size $O\left(\frac{(d + \log D_{\eps})\sqrt{\log\frac{1}{\eps}}}{\eps}\right)$.}

The following weaker bound, which is nevertheless stronger than Theorem~\ref{thmvc} and Theorem~\ref{Koltch}, follows from Theorem~\ref{thm4}. We sacrificed a factor in the logarithm in order to get a bound valid for any $D_{\eps}$.
\begin{cor}\label{cortri}
In the notation of Theorem~\ref{thm4}, there exists an $\eps$-net for $(\X,\RR)$ of size $$O\Big(\frac 1{\eps}\big(d+\log (D_{\eps})\big)\Big).$$
\end{cor}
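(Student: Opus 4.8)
The plan is to show that each of the three regimes covered by Theorem~\ref{thm4} already delivers a bound of the claimed shape $O\big(\tfrac1\eps(d+\log D_\eps)\big)$, so that piecing them together yields a bound valid for every value of $D_\eps$. Recall from the introduction the trivial facts that $\tau_i$ is nonincreasing in $i$, that $\tau_i\ge 1$ after our normalization, and that $\sum_{i=1}^z\tau_i\le\sum_{i=1}^\infty\tau_i\le\tfrac1\eps$.

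First I would handle the regime $D_\eps\ge 2\tau_1$ of part~(i), where the guaranteed size is \eqref{eqmain}. Split it as $\sum_{i=1}^z\big(\log\tfrac{D_\eps}{\tau_i}+d\big)\tau_i=\sum_{i=1}^z\tau_i\log\tfrac{D_\eps}{\tau_i}+d\sum_{i=1}^z\tau_i$. The second summand is $O(d/\eps)$ by the bound on $\sum\tau_i$. For the first summand, note that in this regime $D_\eps\ge 2\tau_1\ge\tau_i$ for every $i\ge 1$, so $0\le\log\tfrac{D_\eps}{\tau_i}$, and since $\tau_i\ge 1$ we also have $\log\tfrac{D_\eps}{\tau_i}\le\log D_\eps$; hence $\sum_{i=1}^z\tau_i\log\tfrac{D_\eps}{\tau_i}\le\log(D_\eps)\sum_{i=1}^z\tau_i\le\tfrac{\log D_\eps}{\eps}$. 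Thus \eqref{eqmain} is $O\big(\tfrac1\eps(d+\log D_\eps)\big)$ in this case.

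Next I would treat the regime $D_\eps\le\tfrac1{2\eps}$ of part~(ii), where the size is $O\big(dD_\eps\log\tfrac1{\eps D_\eps}\big)$. Putting $y:=\eps D_\eps\in(0,\tfrac12]$, this equals $\tfrac d\eps\, y\log\tfrac1y\le\tfrac d\eps\cdot\tfrac1e$, using $y\log\tfrac1y\le\tfrac1e$ on $(0,1)$; so part~(ii) gives $O(d/\eps)$, which is subsumed by the claimed bound. The only value range left uncovered is $\tau_1\le D_\eps\le 2\tau_1$, and, as noted in the Remark following Theorem~\ref{thm4}, there one substitutes $2\tau_1$ for $D_\eps$ in \eqref{eqmain}. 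Repeating the computation of the first case with $D_\eps$ replaced by $2\tau_1$, and using $\tau_i\ge 1$ together with $\tau_1\le D_\eps$ to get $1\le\tfrac{2\tau_1}{\tau_i}\le 2\tau_1\le 2D_\eps$, one obtains $\sum_{i=1}^z\big(\log\tfrac{2\tau_1}{\tau_i}+d\big)\tau_i\le\big(d+\log(2D_\eps)\big)\sum_{i=1}^z\tau_i=O\big(\tfrac1\eps(d+\log D_\eps)\big)$. Combining the three cases proves the corollary. I expect no real obstacle here; the one point needing care is keeping $\log\tfrac{D_\eps}{\tau_i}$ nonnegative and at most $\log D_\eps$, which is exactly where the normalization $\tau_i\ge 1$ and the case hypothesis $D_\eps\ge 2\tau_1$ (or its surrogate $2\tau_1$) are used.
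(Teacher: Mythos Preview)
Your proof is correct and follows exactly the route the paper intends: the paper gives no explicit proof of this corollary, merely stating that it ``follows from Theorem~\ref{thm4}'' and noting (in the Remark after Theorem~\ref{thm4}) how to handle the gap $\tau_1\le D_\eps\le 2\tau_1$. You have filled in these details accurately, using $\sum_i\tau_i\le 1/\eps$, the normalization $\tau_i\ge 1$, and the elementary bound $y\log(1/y)\le 1/e$ in the three regimes.
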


\subsection{Proof of Theorem~\ref{thm4}.}
\subsubsection{(i). Upper bound}
We use the notation of the proof of Theorem~\ref{thmmain2}.  For each $i\ge 1$, by the definition of $D_{\eps}$, there is a maximal $\eps_{i-1}$-packing $\mathcal{Q}_{i}$ of size at most $D_{\eps}$ in $\mathcal{R}_{i}$. Note that, for each $R\in \RR_i$, there is $Q\in \mathcal Q_i$, such that $\PP(R\Delta Q)\le \eps_{i-1}$, which, together with $\PP(R),P(Q)\ge \eps_{i-1}$ implies $\PP(R\cap Q)\ge \eps_{i}/4\ge\PP(Q)/4.$ For each $Q\in \mathcal Q_i$, denote $\RR_i(Q):=\{R\in \RR_i:P(R\cap Q)\ge\PP(Q)/4\}$. { Since the sets of measure less than $\eps$ can be ignored, we are only interested in $\RR$ such that} $\RR\subseteq \cup_{i}\RR_i\subseteq \cup_{i}\cup_{Q\in \mathcal Q_i}\RR_i(Q)$. Therefore, a set, which would be a $1/4$-net (with respect to the conditional measure $\PP(\ |Q)$) for each of the families of ranges $\RR_i(Q)$, would be an $\eps$-net for $\RR$.

Recall that $\tau_i:= \tau(\eps_i)$. Thus, we have $\PP(\X^{(i)})\le \tau_i\eps_i$. Fix an absolute constant $c$ (defined later) and put $t_i=\log\frac{D_{\eps}}{\tau_i}$. Note that $1\le \tau_i/\tau_{i+1}\le 2$ and $D_{\eps}\ge 2\tau_i$ for any $i\ge 1$, therefore, $t_i\ge \log2$.

Consider a random sample $S_i$ of size $c(t_i+d)\tau_{i-1}$, sampled from $X$ according to the conditional distribution $\PP(\ |\X^{(i)})$.
Observe that we may think of $S_i\cap Q$ as a sample with elements distributed according to a conditional distribution $\PP(\ |Q)$.
We also have that for any $Q\in \mathcal Q_i$ it holds that $\PP(Q|\X^{(i)}) \ge \frac{\eps_{i-1}}{\tau_i\eps_i} = \frac{1}{2\tau_i}$. Using the Chernoff bound for an appropriately chosen $c$ and a fixed $Q$,  the event that at least $c(t_i+d)/8$ of the elements of $S_i$ belong to $Q$ has probability at least $1-\exp(-t_i - \log 2)$.
By Theorem~\ref{thmvc} and for an appropriate value of  $c$, the set $S_i\cap Q$ is a $1/4$-net (with respect to a conditional measure $\PP(\ |Q)$) for $\RR_i(Q)$ with probability at least $1-\exp(-t_i-\log 2)$ { given that at least $c(t_i+d)/8$ of the elements of $S_i$ belong to $Q$}. Using a union bound, we conclude that, for a fixed $Q$, the set $S_i\cap Q$ is a $1/4$-net for $\RR_i(Q)$ (with respect to $\PP(\ |Q)$) with probability at least $1-\exp(-t_i)$.

Put $S:=\cup_{i=1}^z S_i$. Therefore, the expectation of the number $M_i$ of $Q\in \mathcal Q_i$, for which $S$ is not a $1/4$-net, is
$$\E M_i  \le D_{\eps}\exp(-t_i) = \tau_i.$$

On the other hand, the size $N$ of $S$ is

$$N \le \sum_{i=1}^{z} c(t_i+d)\tau_{i-1}.$$

Using the Markov inequality, with positive probability, it holds that $\sum_{i=1}^{z} M_i\le 3\sum_{i=1}^{z}\tau_i$. We fix such a set $S$. Next, we manually find a $1/4$-net (with respect to conditional measure $\PP(\ |Q)$) for each of the $\RR_i(Q)$ that contribute to $M_i$. Using Theorem~\ref{thmvc} for $\eps = \frac{1}{4}$, we conclude that we need to add a set $A$ of additional $O(d\sum_{i=1}^{z} \tau_i)$ points to the $\eps$-net in order to cover the remaining sets that might be still uncovered. Therefore, in total we get an $\eps$-net of size
$$O\Big(\sum_{i=1}^{z} (t_i+d)\tau_i\Big).$$

\qed

\subsubsection{(ii). Upper bound}
We work in the notation of Theorem~\ref{thmmain2} and (i). Put  $\lceil \log_2 \frac e{D_{\eps}\eps}\rceil=:i_0$. Then all ranges in $\RR':=\RR\setminus \cup_{i=1}^{i_0} \RR_i$ have measure at least $\frac{e}{D_{\eps}} \ge 2e\eps$. We know that the doubling constant of the range space $(\X,\RR')$ is bounded by $D_{\eps}$, and, applying \eqref{usefuleq} of Corollary~\ref{cormain} {(which is possible, since the proof of Corollary~\ref{cormain} uses only item (i) of Theorem \ref{thm4})} with $\epsilon$ equal to $\frac e{D_{\eps}}$, we conclude that there is an $\epsilon$-net for $(\X,\RR')$ of size at most $O(dD_{\eps})$. Therefore, to conclude the proof of this part of the theorem, it is sufficient to show that for each $i=1,\ldots, i_0$ the range space $\RR_i$ has $\eps$-net of size $O(dD_{\epsilon})$.

Consider $\mathcal Q_i$ and the corresponding $\RR_i(Q)$ for $Q\in \mathcal Q_i$. Then for a fixed $i$ we have $|\mathcal Q_i|\le D_{\eps}$ and for each $\RR_i(Q)$ there is a $1/4$-net (with respect to $\PP(\ |Q)$) of size $O(d)$. Thus, there is an $\eps$-net for $\RR_i$ of size $O(dD_{\epsilon})$. The total size of the $\eps$-net is $O(di_0D_{\eps})$.

\subsubsection{(ii). Lower bound}
To construct the lower bound, we consider the finite set $\X$ of $n$ elements equipped with a uniform measure.
For simplicity, let us assume that $\eps n= k$ is an integer number. For each $i$ fix $X'^{(i)}$ of cardinality $k2^i+d-1$ and consider the following collection of ranges: $\RR'_i:=\{R\subset X'^{(i)}: |R|=k2^i\}$. Next, form a range space $(\X^{(i)},\RR_i)$ by taking $l$ disjoint copies of $\RR'_i$ on disjoint sets $X'^{(i)}$. Finally, define $(\X,\RR)$ to be the union of disjoint copies of $(\X^{(i)},\RR_i)$ for $i=0,\ldots, m-1$. Again, for simplicity we assume that $n= \sum_{i=0}^{m-1}|\X^{(i)}|=(d-1)lm+lk\sum_{i=0}^{m-1}2^i$. Knowing that $d$ is not too large, we get that $lk2^{m}<n< lk2^{m+1}$.

It is clear that the VC-dimension of $(\X,\RR)$ is determined by each of the range spaces $(\X'^{(i)},\RR'_i)$ and is equal to $d$. Next, the smallest $\eps$-net for $(\X,\RR)$ has size at least $lm$ times the smallest $\eps$-net for each $(\X'^{(i)},\RR'_i)$, which gives us $lmd$. Let us calculate the doubling constant of $(\X,\RR)$. For any $\gamma\ge \eps$, $\gamma n\in\N$, choose $j:=\min\{i:2^ik>\gamma n\}$. How large can a packing of balls of radius $\gamma$ be in $(\X,\RR_{\le 2\gamma})$? We should include in the packing exactly one set from each $\RR_i'$ for $i=j$ and $j-1$, which gives $2l$ balls. All the sets from $\RR_i'$ for $i\le j-2$ will be covered by one ball of radius $\gamma$ with the center in any of those sets, and the sets from $\RR_i'$ for $i\ge j+1$ are bigger than $2\gamma$ and are not present in the family. Therefore, $D_{\eps}\le 2l+1$ (actually, $D_{\eps}=2l+1$).

We have that $n=O(D_{\eps} k 2^{m})$ and $\eps =\frac kn=\Omega( \frac{1}{D_{\eps}2^m})$, which means that $\log \frac 1{D_{\eps}\eps}=O(m)$. Therefore, the minimum size of an $\eps$-net is $lmd= \Omega \big(d D_{\eps}\log \frac 1{D_{\eps}\eps}\big)$.
\qed

\textbf{Remark. } The family that provides the lower bound above  may be used to show that Theorem~\ref{thmmain2} is tight at least for constant $\tau_i$. Putting $l=1$ in the construction $(\X,\RR)$ above, we get that $D_{\eps}$ is a constant, $\tau(\eps)<2+\frac d{\eps n}$, and that the minimum size of an $\eps$-net is $\Omega \big(d\log \frac 1{\eps}\big)$.

It is likely that we may even show that the bound of Theorem~\ref{thmmain2} is tight for slowly growing $\tau_i$ (that is, that the factor $\log \tau_i$ is also necessary) by replacing $\RR_i$ with disjoint copies of families that provide lower bound in Theorem~\ref{thmvc}.

\section{Packing numbers for VC classes}
\label{sec3.5}
In this section we discuss several packing results for VC classes of functions, which would be useful in getting upper bounds on the doubling constant.
At first we recall the following classical result due to Haussler. As before, for a pair of binary functions define $\rho(f, g) =\PP(f(x) \neq g(x))$. Note that any result for a class of binary functions is directly translated to range spaces.
\begin{thmx}(Haussler \cite{Haussler95})
\label{haussler}
Consider a class $\mathcal{F}$ of binary functions of VC dimension at most $d$, such that for any distinct $f, g \in \mathcal{F}$ we have $\rho(f, g) \ge \eps$. Then
\begin{equation}
\label{hauss}
|\mathcal{F}| \le e(d + 1)\left(\frac{2e}{\eps}\right)^d.
\end{equation}
\end{thmx}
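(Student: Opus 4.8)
The plan is to reduce to the combinatorial core and then exploit subsampling together with two classical ingredients: the Vapnik--Chervonenkis--Sauer--Shelah bound \eqref{vcs} and the (elementary, shifting-based) density bound for the one-inclusion graph of a VC class. First I would reduce to the uniform measure on a finite ground set: approximating $\PP$ by a uniform distribution on a large number of (possibly repeated) points changes the pairwise distances $\rho(f,g)$ by an arbitrarily small amount, so it suffices to prove \eqref{hauss} when $\mathcal{F}\subseteq\{0,1\}^{n}$ and every pair of distinct $f,g$ satisfies $d_{H}(f,g)\ge \eps n$, with $n$ as large as we like. For a uniformly random $k$-subset $T\subseteq[n]$ (with $k$ of order $d/\eps$, to be optimized), \eqref{vcs} gives $|\mathcal{F}|_{T}|\le\sum_{i=0}^{d}\binom{k}{i}\le(ek/d)^{d}$ deterministically, so the entire difficulty is in bounding how much larger $|\mathcal{F}|$ can be than $|\mathcal{F}|_{T}|$, i.e. in bounding the number of \emph{collisions}: pairs $f\ne g$ with $\mathrm{diff}(f,g)\cap T=\emptyset$.

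The heart of the argument is a comparison, valid for a random $T$, between collisions and \emph{near-collisions} (pairs whose symmetric difference meets $T$ in exactly one point, which restrict to an edge of the one-inclusion graph of $\mathcal{F}|_{T}$). A direct hypergeometric computation shows that for any pair with $|\mathrm{diff}(f,g)|\ge\eps n$ one has
\[
\Pr\bigl[|\mathrm{diff}(f,g)\cap T|=1\bigr]=\frac{|\mathrm{diff}(f,g)|\,k}{n-|\mathrm{diff}(f,g)|-k+1}\,\Pr\bigl[|\mathrm{diff}(f,g)\cap T|=0\bigr]\ \ge\ \eps k\,\Pr\bigl[|\mathrm{diff}(f,g)\cap T|=0\bigr],
\]
which needs only the harmless condition $k\le n(1-\eps)$ that the discretization step makes available. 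Summing over pairs, the expected number of near-collisions is at least $\eps k$ times the expected number of collisions. On the other hand, the one-inclusion graph of a class of VC dimension at most $d$ has at most $d$ edges per vertex, so the number of edges in the one-inclusion graph of $\mathcal{F}|_{T}$ is at most $d\,|\mathcal{F}|_{T}|\le d(ek/d)^{d}$ for every realization of $T$. Combining, the expected number of collisions is $O\bigl(d(ek/d)^{d}/(\eps k)\bigr)$, so some $T$ witnesses $|\mathcal{F}|\le(ek/d)^{d}+O\bigl(d(ek/d)^{d}/(\eps k)\bigr)$; choosing $k$ of order $d/\eps$ and tracking constants carefully (rather than orders of magnitude) is what should produce exactly $e(d+1)(2e/\eps)^{d}$. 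This scale-by-scale accounting is the ``peeling''-type bookkeeping the introduction refers to, and the whole proof uses only tools available when \cite{Haussler95} appeared.

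The step I expect to be the main obstacle is precisely the one that forced the original ``magician's trick'': making the collision/near-collision comparison rigorous while correctly handling \emph{multiplicities} in $\mathcal{F}|_{T}$, i.e. the fact that many functions of $\mathcal{F}$ may map to a single vertex, or to the two endpoints of a single edge, of the one-inclusion graph, so that a bound on the number of edges does not immediately bound the number of function-pairs realizing those edges. The cleanest way I see around this is to resolve each collision by adjoining a single random extra coordinate to $T$ (turning it into a near-collision in a $(k{+}1)$-subset) and to set up the resulting recursion in $k$ so that the one-inclusion density bound can be applied to $\mathcal{F}|_{T\cup\{y\}}$; getting the constants in that recursion to collapse to $e(d+1)(2e/\eps)^{d}$, rather than to a bound weaker by a $\log\frac1\eps$ factor inside the $d$-th power, is the delicate part. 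A secondary, routine point is checking that the $O\bigl(d(ek/d)^{d}/(\eps k)\bigr)$ residual colliding pairs can always be separated by adjoining that many further coordinates without affecting the final count, and that the discretization incurs no loss in the stated constants.
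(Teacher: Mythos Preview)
Your outline is not the paper's argument; it is essentially a sketch of the original Haussler/Chazelle route --- the very ``magician's trick'' the paper sets out to replace --- and you correctly locate the obstacle without closing it. The step ``the number of near-collisions is at most the number of edges of the one-inclusion graph of $\mathcal F|_T$'' is false as written: a near-collision is a pair $(f,g)\in\mathcal F\times\mathcal F$ with $|\mathrm{diff}(f,g)\cap T|=1$, and a single edge $\{u,v\}$ of the one-inclusion graph of $\mathcal F|_T$ absorbs $m_u m_v$ such pairs, where $m_w$ is the number of functions of $\mathcal F$ projecting to $w$. Your proposed fix (adjoin a random coordinate and recurse) is exactly the kind of delicate conditioning Haussler and Chazelle carry out, but you have not actually done it; completing it would reproduce the original proof rather than simplify it.

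The paper bypasses multiplicities entirely by trading the one-inclusion \emph{edge-density bound} for the one-inclusion \emph{prediction algorithm}. There is a deterministic learner which, on an i.i.d.\ sample of size $n$, outputs $\hat g$ with $\E\rho(\hat g,f^*)\le d/(n{+}1)$; by Markov, with $n=2d/(\eps\delta)$ one has $\rho(\hat g,f^*)<\eps/2$ with probability at least $1-\delta$ for every target $f^*$. Now put the uniform prior on $\mathcal F$ and, for each $f\in\mathcal F$, run the learner on $(A,f|_A)$. Determinism forces functions with the same projection on $A$ to receive the same output, and $\eps$-separation plus the triangle inequality imply that at most one function per projection class can lie within $\eps/2$ of that common output. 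Hence for every realization of $A$ the number of $f$ with $\rho(\hat g_f,f)<\eps/2$ is at most $|\mathcal F|_A|$, while its expectation over $A$ is at least $(1-\delta)|\mathcal F|$. Thus $(1-\delta)|\mathcal F|\le \E|\mathcal F|_A|\le (2e/(\eps\delta))^d$ by Sauer--Shelah, and the choice $\delta=d/(d{+}1)$ gives exactly $e(d{+}1)(2e/\eps)^d$. Because the count is of \emph{functions correctly recovered} --- at most one per projection class --- the fiber sizes $m_w$ never appear, which is precisely what makes this route short.
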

The next lemma directly follows from the result of Chazelle (this fact was observed in \cite{MDG}).
\begin{lem}[Chazelle \cite{Chaz92}]
\label{chaz}
Consider a class $\mathcal{F}$ of binary functions of VC dimension at most $d$ with  $\rho(f, g) \ge \eps$ for any distinct $f, g \in \mathcal{F}$. If the measure $\PP$ is uniform on a finite set, then we have
\[
|\mathcal F| \le 2\E|\mathcal F|{}_A|,
\]
where $A$ is an i.i.d. sample of size $n = \frac{4d}{\eps}$ from $\X$ sampled according to $\PP$ and $\mathcal F|{}_A$ denotes the set of projections of $\mathcal{F}$ on the sample $A$.

\end{lem}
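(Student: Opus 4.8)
The plan is to reduce the claim to the statement that a random sample of size $\Theta(d/\eps)$ collapses at most half of an $\eps$-separated family in expectation, and then to establish that via a double-sampling argument in the spirit of (but without the logarithmic loss of) the classical $\eps$-net proof. Concretely, I would first fix an arbitrary linear order $\prec$ on $\mathcal F$ and call $f$ \emph{redundant for $A$} if $g|_A=f|_A$ for some $g\prec f$; since each collision class of the relation ``$f|_A=g|_A$'' has a unique $\prec$-minimum and its remaining elements are exactly the redundant ones, this gives $|\mathcal F|_A|=|\mathcal F|-R(A)$, where $R(A)$ is the number of redundant $f$, so it suffices to show $\E R(A)\le\tfrac12|\mathcal F|$. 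I would stress up front that the naive union bound $\E R(A)\le\sum_{\{f,g\}}(1-\rho(f,g))^{n}\le\binom{|\mathcal F|}{2}(1-\eps)^{n}\le\binom{|\mathcal F|}{2}e^{-4d}$ is \emph{not} enough: it yields the claim only when $|\mathcal F|\lesssim e^{4d}$, far below what Haussler's packing bound (Theorem~\ref{haussler}) allows, the reason being that a $\rho$-ball of radius $\eps$ around $f$ can contain $\Omega(1/\eps)$ functions whose collision events with $f$ are strongly positively correlated. Collisions must be charged globally, not pair by pair.

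For the global charging I would introduce a ghost sample $A'$, independent of $A$ and of the same size $n=4d/\eps$. Since $\rho(f,g)\ge\eps$ gives $\Pr[g|_{A'}=f|_{A'}]\le e^{-\eps n}=e^{-4d}<\tfrac12$, and the $\prec$-earliest $A$-witness of a redundant $f$ does not depend on $A'$, the ghost sample separates $f$ from that witness with conditional probability exceeding $\tfrac12$; this yields $\E R(A)\le 2\,\E\,|\{(f,g):g\prec f,\ f|_A=g|_A,\ f|_{A'}\ne g|_{A'}\}|$. I would then condition on the combined sample $B=A\cup A'$: the split of $B$ into $A$ and $A'$ is essentially uniform, so a pair $(f,g)$ with $|D_{fg}\cap B|=m\ge1$, where $D_{fg}=\{x:f(x)\ne g(x)\}$, survives into this count with probability at most $2^{-m}$. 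The task thereby becomes to bound $\sum_{m\ge 1}2^{-m}$ times the number of ordered pairs of functions whose restrictions to $B$ differ in exactly $m$ coordinates — and here the VC dimension is finally used: pairs of distinct restrictions of $\mathcal F|_B$ at Hamming distance $1$ are edges of the one-inclusion graph of $\mathcal F|_B$, of which there are at most $d\,|\mathcal F|_B|\le d(2en/d)^d$ by Haussler's shifting bound; the fibers over the restrictions are $\eps$-separated classes of VC dimension $\le d$, hence small by Theorem~\ref{haussler}; and the factor $2^{-m}$ controls larger Hamming distances. Feeding these estimates together with $n=4d/\eps$, and invoking Theorem~\ref{haussler} once more to bound $|\mathcal F|$, should give $\E R(A)\le\tfrac12|\mathcal F|$.

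The hard part will be exactly this last count. A single one-inclusion edge of $\mathcal F|_B$ can be witnessed by many function-pairs when the fibers over its endpoints are large, so the $d\,|\mathcal F|_B|$ edge bound must be combined carefully with a fiberwise application of the packing bound; likewise, for Hamming distance $m>1$ one has to exploit the $2^{-m}$ decay before counting, or a ruinous $\binom{2n}{m}$ factor creeps in and blows up the estimate. Reconciling sampling with and without replacement is a further, minor, technicality. Since all of this is precisely what Chazelle's sampling lemma \cite{Chaz92} already packages, the cleanest option — and the one I would ultimately take in the paper — is to invoke \cite{Chaz92} directly, the double-sampling reduction above being offered mainly as an explanation of why the lemma should be true.
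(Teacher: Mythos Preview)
The paper does not attempt the argument you sketch. It obtains Lemma~\ref{chaz} as the $\delta=\tfrac12$ case of the more general Lemma~\ref{hausslerlem}, proved by a short learning-theoretic argument with no ghost sample and no edge counting: the one-inclusion-graph predictor of \cite{HLW} has expected risk at most $d/(n{+}1)$, so by Markov, with $n=2d/(\eps\delta)$, it returns $\hat g_f$ satisfying $\rho(\hat g_f,f)<\eps/2$ with probability at least $1-\delta$; since the predictor is deterministic in the labelled sample, all $f$ sharing a projection on $A$ receive the same output, and by $\eps$-separation plus the triangle inequality at most one of them can be $\eps/2$-recovered. Averaging over a uniform choice of target $f\in\mathcal F$ then yields $(1-\delta)|\mathcal F|\le\E|\mathcal F|_A|$.

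Your double-sampling reduction is correct through $\E R(A)\le 2\,\E\bigl|\{(f,g):g\prec f,\ f|_A=g|_A,\ f|_{A'}\ne g|_{A'}\}\bigr|$, but the ``global charging'' after conditioning on $B$ does not close. Already the $m=1$ term is too large: pairs $(f,g)$ whose restrictions to $B$ differ in exactly one coordinate project to edges of the one-inclusion graph of $\mathcal F|_B$, of which there can be $d\,|\mathcal F|_B|$; even when $B$ separates $\mathcal F$ (so $|\mathcal F|_B|=|\mathcal F|$ and the fibers are singletons) this contributes $\tfrac12\,d\,|\mathcal F|$, which already exceeds the required $\tfrac14|\mathcal F|$ for every $d\ge1$, and nontrivial fibers make it worse by a product of fiber sizes. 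Invoking Theorem~\ref{haussler} to bound those fibers is in any case circular here, since in this paper Haussler's bound is \emph{derived} from the present lemma via~\eqref{Haussineq}. The step you flag as ``hard'' is precisely where the original Chazelle/Haussler proofs deploy a delicate conditional-density computation --- the ``magician's trick'' the paper set out to avoid --- and plain symmetrization does not substitute for it: taking $\E_B$ of your conditional bound simply returns $\sum_{g\prec f}\bigl[(1-\rho(f,g)/2)^{2n}-(1-\rho(f,g))^{2n}\bigr]\approx\sum_{g\prec f}e^{-n\rho(f,g)}$, the same naive pair sum you already dismissed.
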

This lemma directly implies the version of Haussler's original lemma for the uniform distribution. Using the Vapnik-Chervonenkis-Sauer bound \eqref{vcs} we have $|\mathcal F| \le 2\sum\limits_{i = 0}^{d}{n \choose i} \le 2\left(\frac{en}{d}\right)^d \le 2\left(\frac{4e}{\eps}\right)^d$. However,  constants in this deduction are somewhat worse than in \eqref{hauss}. In what follows, we give a more general result with a short proof that directly implies the Haussler's bound. As opposed to Lemma \ref{chaz}, our proof will be based on a purely statistical approach. In fact, the bound on the packing number will be derived as a byproduct of the minimax analysis of the learning rates of the so-called one-inclusion graph algorithm. The analysis is inspired by the minimax lower bounds provided in \cite{Benedek91}.

\begin{lem}
\label{hausslerlem}
Fix any $\delta \in (0, 1)$. Consider a class $\mathcal{F}$ of binary function of VC dimension at most $d$ such that for any distinct $f, g \in \mathcal{F}$ it holds that $\rho(f, g) \ge \eps$. Then
\[
|\mathcal F| \le \frac{\E|\mathcal F|{}_A|}{1 - \delta},
\]
where $A$ is an i.i.d. sample of size $n = \frac{2d}{\eps\delta}$ from $\X$ sampled according to $\PP$.
\end{lem}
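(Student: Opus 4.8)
The plan is to bound the packing number $|\mathcal F|$ via a minimax argument built on a sample-consistent (indeed, one-inclusion-graph-style) learning algorithm, following the idea sketched in the text that the packing bound emerges from the minimax analysis of a learning procedure. First I would view $\mathcal F$ itself as the hypothesis class and consider the following learning experiment: the target $f^*$ is chosen uniformly at random from $\mathcal F$, an i.i.d. sample $A = (x_1,\dots,x_n)$ of size $n = \tfrac{2d}{\eps\delta}$ is drawn from $P$, and a learner sees $(x_i, f^*(x_i))_{i=1}^n$ and must output some $\hat f \in \mathcal F$. The key quantity is the probability that $\hat f = f^*$ exactly (as elements of $\mathcal F$). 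On one hand, conditioned on the projection $\mathcal F|_A$ having a given size, no learner can identify $f^*$ with probability better than (number of realized labelings on $A$)$/|\mathcal F|$, because all functions of $\mathcal F$ that agree on $A$ are indistinguishable to the learner and $f^*$ is uniform; averaging over $A$ this gives $\PP(\hat f = f^*) \le \E|\mathcal F|_A| / |\mathcal F|$ for \emph{every} learner. On the other hand, I will exhibit a \emph{particular} learner for which $\PP(\hat f = f^*) \ge 1 - \delta$, and combining the two inequalities yields $|\mathcal F| \le \E|\mathcal F|_A|/(1-\delta)$, which is exactly the claim.

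The heart of the argument is therefore the lower bound $\PP(\hat f = f^*)\ge 1-\delta$ for a well-chosen algorithm. Here I would use the one-inclusion graph prediction strategy, or more simply a sample-consistent learner together with the separation hypothesis $\rho(f,g)\ge\eps$. The point is: if $\hat f$ is any classifier in $\mathcal F$ consistent with the sample and $\hat f \neq f^*$, then $\rho(\hat f, f^*)\ge \eps$, so $\hat f$ and $f^*$ disagree on a set of $P$-measure at least $\eps$, yet they agree on all $n$ sample points. The probability that a \emph{fixed} function at distance $\ge\eps$ from $f^*$ survives $n$ i.i.d. samples is at most $(1-\eps)^n$, but union-bounding over all of $\mathcal F$ is too lossy; instead I will invoke the sharper fact that the \emph{expected} number of surviving competitors is controlled by the VC dimension. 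Concretely, using the standard realizable-case bound (the one-inclusion graph algorithm, or equivalently the leave-one-out / symmetrization argument underlying Theorem~\ref{thmvc} at the level of expectations) gives that the expected error of the output classifier is at most $\frac{d}{n+1} \le \frac{d}{n}$; by Markov's inequality the probability that the output classifier has error $\ge \eps$ is at most $\frac{d}{n\eps} = \frac{\delta}{2} < \delta$. Since error $<\eps$ forces $\hat f = f^*$ by the packing hypothesis, we get $\PP(\hat f = f^*) \ge 1-\delta$.

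The main obstacle, and the step requiring the most care, is pinning down exactly which learning algorithm delivers the clean expected-error bound $\le d/n$ (or $d/(n+1)$) in the realizable case \emph{with the expectation taken over the random draw of both $f^*$ and $A$}, and making sure its output always lies in $\mathcal F$ so that the packing separation applies verbatim. This is precisely the role of the one-inclusion graph algorithm of Haussler--Littlestone--Warmuth, whose transductive error is bounded by $d/(n+1)$ pointwise over $f^*$; one then integrates. I would state this as the one ingredient imported from the literature (it is exactly ``the minimax analysis of the one-inclusion graph algorithm'' alluded to before the lemma) and, alternatively, indicate the elementary symmetrization proof: swapping $x_{n+1}$ with a uniformly random coordinate shows the probability of error at the fresh point equals the expected fraction of ``pivotal'' coordinates in the one-inclusion graph on $n+1$ points, which is at most $d/(n+1)$ since that graph has density $\le d$. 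Everything else — the indistinguishability/averaging inequality, Markov, and the conversion from small error to exact identification — is routine. Choosing $n = \tfrac{2d}{\eps\delta}$ makes $\tfrac{d}{n\eps} = \tfrac{\delta}{2}$, which comfortably beats the required $\delta$ and closes the argument.
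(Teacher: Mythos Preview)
Your proposal is correct and matches the paper's proof: the paper isolates the one-inclusion-graph step as a separate lemma (Lemma~\ref{oig}), obtaining $\rho(\hat f,f^*)<\eps/2$ with probability at least $1-\delta$ via the expected-error bound $d/(n+1)$ plus Markov, and then finishes with exactly your indistinguishability/averaging argument over a uniform prior on $\mathcal F$. The one cosmetic difference is that the paper uses error threshold $\eps/2$ rather than $\eps$, which cleanly dissolves your flagged concern about forcing $\hat f\in\mathcal F$: the algorithm's output $\hat g$ need not lie in $\mathcal F$, because for any two $f,h\in\mathcal F$ sharing a projection on $A$ the deterministic output is the same, and the triangle inequality already forbids $\rho(\hat g,f)<\eps/2$ and $\rho(\hat g,h)<\eps/2$ simultaneously.
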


This Lemma will be used below in the context of bounds which depend on shallow cell complexity (see Section~\ref{sec5}). To prove this bound, we need the following result from Learning Theory. Note that the proof of the next Lemma is not based on the bound on packing numbers. The discussion of the one-inclusion graph algorithm is moved to Appendix.
\begin{lem}
\label{oig}
In the realizable case of classification there is a deterministic learning algorithm such that, given an i.i.d sample $\bigl((x_i, f^*(x_i))\bigr)_{i = 1}^n$ of size $n = \frac{2d}{\eps\delta}$, it produces a classifier $\hat{f}$ with $\rho(\hat{f}, f^*) < \eps/2$ with probability at least $1 - \delta$ over the learning sample.
\end{lem}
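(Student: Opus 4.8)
The plan is to use the \emph{one-inclusion graph algorithm} of Haussler, Littlestone, and Warmuth. First I would recall its structure: given the sample $\bigl((x_i, f^*(x_i))\bigr)_{i=1}^n$, the algorithm works on the set of distinct dichotomies $\ff|_{A}$ induced on the point set $A = \{x_1,\dots,x_n\}$ (without labels). One then requests one additional fresh point $x_{n+1}$ drawn from $\PP$ (or, in the standard symmetrization trick, one considers the $(n+1)$-element sample and a random choice of which point is ``hidden''). Building the one-inclusion graph on $\ff|_{\{x_1,\dots,x_{n+1}\}}$ — vertices are the realizable dichotomies, edges join two dichotomies differing in exactly one coordinate — one shows this graph has average degree at most the VC dimension $d$, by the Haussler--Littlestone--Warmuth edge-counting/shifting argument. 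Hence it admits an orientation in which every out-degree is at most $d$. The algorithm orients the edges this way and, when asked to predict the label of the hidden point, follows the unique edge (if any) in the coordinate of that hidden point; the prediction errs only if that edge is oriented \emph{out} of the vertex consistent with the visible labels.

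The key probabilistic step: by the orientation property and a symmetrization over which of the $n+1$ points is hidden, the probability of error on the hidden point is at most $\frac{d}{n+1} \le \frac{d}{n}$. This is the classical expected-error bound for the one-inclusion prediction strategy in the realizable case. With $n = \frac{2d}{\eps\delta}$ we get expected generalization error at most $\frac{d}{n} = \frac{\eps\delta}{2}$. Since $\rho(\hat f, f^*)\ge 0$, Markov's inequality gives $\PP\bigl(\rho(\hat f, f^*) \ge \eps/2\bigr) \le \frac{\E\,\rho(\hat f,f^*)}{\eps/2} \le \frac{\eps\delta/2}{\eps/2} = \delta$, so with probability at least $1-\delta$ over the sample we have $\rho(\hat f, f^*) < \eps/2$, as claimed. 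Note the algorithm is deterministic once the canonical orientation is fixed (the orientation depends only on $\ff$ and the point multiset, not on the labels), which matches the statement.

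The main obstacle — and the reason the details are deferred to the appendix — is establishing the degree/orientation bound for the one-inclusion graph, i.e. that the graph on $\ff|_{\{x_1,\dots,x_{n+1}\}}$ has a pseudoforest-like orientation with out-degrees $\le d$; this rests on the bound (total number of edges) $\le d\cdot(\text{number of vertices})$, proved by a shifting/compression argument on the cube that uses the Sauer--Shelah bound on each coordinate-subcube. I would cite Haussler--Littlestone--Warmuth for this and for the resulting expected-error bound rather than reproving it. A secondary technical point is the bookkeeping in the symmetrization: one must argue that conditioning on the unordered multiset $\{x_1,\dots,x_{n+1}\}$, the hidden index is uniform, so the per-instance error probability equals (number of out-edges at the consistent vertex)/$(n+1)$ averaged appropriately; this is routine but must be stated carefully. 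Everything else is the Markov step above.
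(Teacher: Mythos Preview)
Your proposal is correct and follows essentially the same approach as the paper: invoke the one-inclusion graph algorithm of Haussler--Littlestone--Warmuth to obtain the expected-error bound $\E\rho(\hat f,f^*)\le d/(n+1)<d/n$, then apply Markov's inequality with $n=2d/(\eps\delta)$. The paper's in-text proof is terser (it simply cites the expected-error bound and does the Markov step), while its appendix sketch of the orientation/out-degree argument and the leave-one-out computation matches the details you outline.
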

\begin{proof}
It follows from the fact that there is a strategy (namely \emph{one-inclusion graph algorithm} \cite{HLW}) with an expected error $\E \rho(\hat{f}, f^{*}) \le \frac{d}{n + 1} < \frac{d}{n}$, where expectation is taken with respect to the i.i.d random sample $\bigl((x_i, f^*(x_i))\bigr)_{i = 1}^n$ for an arbitrary target function $f^* \in \mathcal{F}$. { We define the algorithm and sketch the proof of the risk bound in Appendix \ref{oig}}. Using Markov inequality we have $\PP(\rho(\hat{f}, f^{*}) \ge \eps/2) \le \frac{2\E \rho(\hat{f},  f^{*})}{\eps} < \frac{2d}{n\eps}$. We fix $n = \frac{2d}{\eps\delta}$ and get that with probability at least $1 - \delta$, it holds that $\rho(\hat{f}, f^{*}) < \eps/2$.
\end{proof}
\medskip
\begin{proof}{(of Lemma \ref{hausslerlem}.)}
For $n = \frac{2d}{\eps\delta}$ denote the output of the learning algorithm of Lemma \ref{oig} based on the sample $\bigl((x_i, f(x_i))\bigr)_{i = 1}^n$ by $\hat{g}_f$. Define the uniform measure $\pi$ on $\mathcal{F}$. Due to Lemma~\ref{oig} we have
 $
\E_{f \sim \pi}P(\rho(\hat{g}_f, f) < \eps/2) \ge 1 - \delta.
$
Assume that for a pair of distinct $f, h \in \mathcal{F}$ it holds that $f(x_i) = h(x_i)$ for $i = 1,\ldots, n$, i.e., they have the same projection on the sample. Since our prediction strategy is deterministic we have $\hat{g}_f = \hat{g}_h$. However, it is not possible that simultaneously we have $\rho(\hat{g}_f, f) < \eps/2$ and $\rho(\hat{g}_h, h) < \eps/2$ since in this case by the triangle inequality $\rho(f, h) \le \rho(\hat{g}_f, f) + \rho(\hat{g}_h, h) < \eps$, but at the same time from the statement of the Lemma we have $\rho(f, h) \ge \eps$. Thus taking into account that for each random sample $A$ there are at most $|\ff|_A|$ different functions $\hat g_f$ that may serve as an output, and each corresponds to at most one function from $\ff$, we get
\[
1 - \delta \le \E_{f \sim \pi}P\left(\rho(\hat{g}_f, f) \le \eps/2\right) = \frac{1}{|\mathcal{F}|}\E\sum\limits_{f \in \mathcal{F}}\ind[\rho(\hat{g}_f, f) < \eps/2] \le \frac{1}{|\mathcal{F}|}\E|\mathcal F|{}_A|.
\]
\end{proof}
Taking $\delta = \frac{1}{2}$ the uniform measure as $\PP$ in Lemma \ref{hausslerlem}, one recovers Lemma \ref{chaz}.
To recover the result of Haussler \eqref{hauss} we use the Vapnik-Chervonenkis-Sauer bound \eqref{vcs} again together with Lemma \ref{hausslerlem}:
\begin{equation}
\label{Haussineq}
|\mathcal{F}| \le \inf\limits_{\delta \in (0, 1)}\frac{\E|\mathcal{F}|{}_{A}|}{1 - \delta} \le \inf\limits_{\delta \in (0, 1)}\frac{1}{1 - \delta}{\left(\frac{2e}{\eps\delta}\right)^d} \le e(d + 1)\left(\frac{2e}{\eps}\right)^{d},
\end{equation}
{ which is obtained by choosing $\delta = \frac{d}{d + 1}$.}

\textbf{Remark.}
{ If instead of Lemma \ref{oig} we use a bound $O(\frac{d\log(\frac{1}{\eps})}{\eps})$ for consistent learning algorithms (Theorem \ref{vcbound}), we obtain the weaker bound}
\[
|\mathcal{F}| = O\left(\left(\frac{1}{\eps}\log \frac{1}{\eps}\right)^d\right),
\] where constants in $O$ depend on $d$, which coincides with the original bound of Dudley \cite{Dudley82}. In fact, using our technique any deterministic learning algorithm with provable guarantees on probability of misclassification will provide an upper bound on packing numbers. For example, we may replace the algorithm in Lemma \ref{oig} by the recent result \cite{Han15}. In this case the bounds will be the same as in Theorem \ref{haussler} up to absolute constants.

\section{Upper bounds on the doubling constant}\label{sec5}

To motivate some of the results we prove below, we mention the following bound on $D_{\eps}$.\begin{thmx}[\cite{Hanneke15}, Theorem 17]\label{thmha} We have $D_{\eps} \le (c\tau(\eps))^{c_1d}$ for some absolute constants $c, c_1>1$.
\end{thmx}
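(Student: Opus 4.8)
The plan is to bound the doubling constant by a packing argument that exploits the structure forced by small Alexander's capacity. Recall $D_\eps = \sup_{\eps_0 \ge \eps} \mathcal M(\mathcal R_{\le 2\eps_0}, \eps_0)$, so it suffices to fix $\eps_0 \ge \eps$ and bound the size of a maximal $\eps_0$-packing $\mathcal Q \subseteq \mathcal R_{\le 2\eps_0}$. The key observation is that every $R \in \mathcal Q$ lives inside $\X^* := \bigcup_{R' \in \mathcal R_{\le 2\eps_0}} R'$, and by the definition of Alexander's capacity $\PP(\X^*) \le 2\eps_0 \tau(2\eps_0) \le 2\eps_0\tau(\eps)$. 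Passing to the conditional measure $\PP(\,\cdot\mid\X^*)$, each $R \in \mathcal Q$ has conditional measure at most $\frac{2\eps_0}{\PP(\X^*)}$ which is $\Theta(1/\tau(\eps))$, and any two distinct members of $\mathcal Q$ are at conditional $\rho$-distance at least $\frac{\eps_0}{\PP(\X^*)} \ge \frac{1}{2\tau(\eps)}$. Thus $\mathcal Q$ is a $\Theta(1/\tau(\eps))$-separated family in a range space that still has VC-dimension at most $d$ (projections only shrink under restriction), and Haussler's packing bound (Theorem \ref{haussler}, or Lemma \ref{hausslerlem}) gives $|\mathcal Q| \le e(d+1)\left(\frac{2e}{1/(2\tau(\eps))}\right)^d = e(d+1)(4e\,\tau(\eps))^d$.

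That already yields $D_\eps \le e(d+1)(4e\,\tau(\eps))^d$, which is of the claimed form $(c\tau(\eps))^{c_1 d}$: absorb the $e(d+1)$ factor by noting $e(d+1) \le (c')^d$ for a suitable absolute constant $c'$ (since $\tau(\eps) \ge 1$ by our normalization, and $d \ge 1$), so one may take, say, $c = 8e$ and $c_1 = 2$, with all the polynomial-in-$d$ slack swept into the exponent. A small amount of care is needed at the boundary: if $\tau(2\eps_0)$ is tiny (close to its lower bound $1$) the separation radius $\frac{\eps_0}{\PP(\X^*)}$ could exceed the largest possible conditional measure, in which case $|\mathcal Q| \le 1$ trivially; and the supremum over $\eps_0$ is handled uniformly because $\tau$ is non-increasing, so $\tau(2\eps_0) \le \tau(\eps)$ throughout.

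The main obstacle, and the place where one must be slightly careful rather than merely routine, is the interplay between the two scales $\eps_0$ and $2\eps_0$ appearing in the definition of $D_\eps$: the support set $\X^*$ is built from ranges of measure up to $2\eps_0$, while the packing radius is $\eps_0$, so the ratio (radius over ambient measure) is $\frac{\eps_0}{\PP(\X^*)}$ rather than $\frac{\eps_0}{2\eps_0\tau}$ exactly — one needs $\PP(\X^*) \le 2\eps_0\tau(2\eps_0)$, which is precisely what the capacity definition with $\mathcal R_{\le 2\eps_0}$ provides, so this closes cleanly. A secondary point is ensuring the conditional range space genuinely has VC-dimension $\le d$: this is immediate since for any $Y \subseteq \X^*$ the trace $\mathcal R|_Y$ is a subset of the trace of the original family, hence cannot shatter a set of size $d+1$. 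With these two points pinned down, the bound follows by a direct application of Haussler's inequality, and the slightly suboptimal polynomial factors are irrelevant given the $(c\tau(\eps))^{c_1 d}$ target.
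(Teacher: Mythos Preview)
Your argument is correct and matches the paper's approach: the paper does not prove Theorem~\ref{thmha} directly (it is merely cited from \cite{Hanneke15}), but its proof of the sharper Corollary~\ref{corol1} proceeds exactly as you do --- restrict to the support of $\mathcal R_{\le 2\eps_0}$, observe that under the conditional measure the packing has separation at least $1/(2\tau(\eps))$, and invoke Haussler's packing bound (routed there through Lemma~\ref{shallow} with the Sauer--Shelah bound playing the role of $\varphi$, which amounts to the same thing). Your intermediate estimate $e(d+1)(4e\,\tau(\eps))^d$ already gives $D_\eps \le (c\tau(\eps))^d$, i.e.\ $c_1=1$, so the final weakening to $c_1=2$ is unnecessary.
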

In this form it is usually sufficient for statistical applications,  but in what follows we shall need a bound, tight in terms of $c_1$. One of the corollaries of the results proved below is that for a range space $(\X, \mathcal{R})$ of VC dimension $d$ it holds that $D_{\eps} \le (c\tau(\eps))^d$ for some $c > 0$ (cf. Corollary~\ref{corol1}).

Let us start with defining another important measure of complexity, called the \emph{shallow-cell complexity}. It was introduced recently \cite{Aro, Chan, KMP17, Var} for a  more refined analysis of the projections of the range spaces, than the one that we can extract from $\pi_{\RR}(y)$ and the VC-dimension. For the relation of shallow-cell complexity with the so-called \emph{union complexity}, see \cite{KMP17}. Here we give a definition that slightly differs from the one given in \cite{MDG}, \cite{KMP17}: we do not isolate the term $|Y|$ in the projections on $Y$, but rather include it into the shallow cell complexity function.

 A range space $(\X, \mathcal{R})$ has {\it shallow-cell complexity} $\varphi : \mathbb{N} \times\mathbb{N} \to \mathbb{N}$ if for every $Y \subseteq X$, the number of sets of size at most $\ell$ in the system $\mathcal{R}|_{Y}$ is at most $\varphi(|Y|, \ell)$. In all known geometric applications it is sufficient to consider the functions of the form $\varphi(|Y|,\ell)=\varphi'(|Y|)\ell^{c_\mathcal{R}}$ for some constant ${c_\mathcal{R}}$, and, if this is the case for a range space, then  we say that the range space {\it has shallow cell complexity $(\varphi', c_\mathcal{R})$}. Thus, the difference with the projection function is that $\varphi$ bounds the number of sets of different sizes separately. We should note that for many known geometric scenarios very tight bounds on the shallow cell complexity are known (we refer to the survey \cite{MV} for a complete list).

We provide two upper bounds on the doubling constant in terms of shallow-cell complexity and Alexander's capacity. The proof of the forthcoming results are postponed to Appendix to facilitate the reading.
\begin{lem}\label{lembd1}
Assume that the range space $(\X, \mathcal{R})$ has a shallow-cell complexity $(\varphi', c_{\mathcal{R}})$ such that $\varphi'(x) \le c_1x^{k}$ for some $c_1, k > 0$ and Alexander's capacity $\tau$. Then
\[
D_\eps \le C(k,c_{\RR})\tau^{k}(\eps)\log^{k + c_{\mathcal{R}}}\tau(\eps),
\]
where $C(k,c_{\RR}) = O\big((c(k + c_{\mathcal{R}})\log(k + c_{\mathcal{R}}))^{k + c_{\mathcal{R}}}\big)$ for some $c>0$.
\end{lem}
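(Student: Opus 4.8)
The plan is to bound the doubling constant $D_\eps = \sup_{\eps_0 \ge \eps} \mathcal{M}(\mathcal{R}_{\le 2\eps_0}, \eps_0)$ by reducing, as in the proof of Theorem~\ref{thmmain2}, to a conditional problem on the support of the small ranges. Fix $\eps_0 \ge \eps$ and let $\X^{(0)} := \bigcup_{R \in \mathcal{R}_{\le 2\eps_0}} R$; by definition of Alexander's capacity $\PP(\X^{(0)}) \le 2\eps_0 \tau(2\eps_0) \le 2\eps_0\tau(\eps_0)$. A maximal $\eps_0$-packing $\mathcal{Q}$ of $\mathcal{R}_{\le 2\eps_0}$ is a collection of subsets of $\X^{(0)}$ that are pairwise at distance $\ge \eps_0$ in the $\PP$-symmetric-difference metric, hence, with respect to the conditional measure $\PP(\cdot \mid \X^{(0)})$, pairwise at distance $\ge \eps_0/\PP(\X^{(0)}) \ge \frac{1}{2\tau(\eps_0)} =: \eps'$. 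So $|\mathcal{Q}|$ is at most a packing number $\mathcal{M}$ of a VC-dimension-$d$ class of subsets of $\X^{(0)}$, all of conditional measure $\le 2\eps_0/\PP(\X^{(0)}) \le 1$, at mutual distance $\ge \eps'$, where $\eps' \asymp 1/\tau(\eps_0) \ge \eps \tau$-type quantities. The point is that $1/\eps' = 2\tau(\eps_0) \le 2\tau(\eps)$, so everything is governed by $\tau(\eps)$ rather than $1/\eps$.

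Next I would control this conditional packing number using shallow-cell complexity rather than the crude VC-Sauer bound. The standard probabilistic argument (going back to Haussler, and used in the shallow-cell literature, e.g.\ \cite{KMP17, MDG}) is: take an i.i.d.\ sample $A$ of size $n \asymp d/\eps'$ from $\PP(\cdot \mid \X^{(0)})$; by Lemma~\ref{hausslerlem} (or the Chazelle-type Lemma~\ref{chaz}), $|\mathcal{Q}| \le 2\,\E|\mathcal{Q}|_A|$, and on such a sample each range of the packing projects to a set of size $\lesssim \ell := \eps' n \asymp d$ with good probability (by a Chernoff bound, after possibly passing to ranges of measure $\le 2 \cdot (\text{typical})$, or more carefully splitting the packing into $O(\log(1/\eps'))$ dyadic measure-bands and handling each). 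Hence $\E|\mathcal{Q}|_A|$ is at most (a constant times) the shallow-cell count $\varphi(n, O(d)) = \varphi'(n)\cdot O(d)^{c_\RR} \le c_1 n^k \cdot O(d)^{c_\RR}$. Plugging $n \asymp d/\eps' \asymp d\tau(\eps)$ gives $D_\eps \lesssim (d\tau(\eps))^k \cdot d^{c_\RR}$ up to constants depending on $k, c_\RR$. To upgrade $d^k$-type factors into the claimed $\log^{k+c_\RR}\tau(\eps)$, one observes that since $D_\eps \le \mathcal{M}$ and also each range in the packing has conditional measure $\le 1$, one can iterate/bootstrap: a packing of size $N$ inside a class of VC-dimension $d$ forces, via Haussler, $N \le e(d+1)(2e/\eps')^d$, i.e.\ $\log N \le d\log(1/\eps') + O(d\log d) \asymp d\log\tau(\eps)$; feeding this back into the sample size $n \asymp (\log N)/\eps'$ in place of $d/\eps'$ and re-running the shallow-cell estimate $\varphi(n, \ell)$ with $\ell \asymp \log N$ yields the self-consistent solution $\log D_\eps \lesssim (k+c_\RR)\big(\log\tau(\eps) + \log(k+c_\RR) + \log\log\tau(\eps)\big)$, whence $D_\eps \le C(k,c_\RR)\,\tau^k(\eps)\log^{k+c_\RR}\tau(\eps)$ with $C(k,c_\RR) = O\big((c(k+c_\RR)\log(k+c_\RR))^{k+c_\RR}\big)$.

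I expect the main obstacle to be the bootstrapping/self-reference step: naively the sample size needed to ``see'' the packing is $\asymp d/\eps'$, which only produces a $(d\tau)^k$ bound, and converting the extra $d$-powers into logarithms of $\tau$ requires the observation that the relevant parameter is $\log(\text{packing size})$, not $d$, together with a clean fixed-point argument showing the recursion $x \le A + B\log x$ (with $A \asymp (k+c_\RR)\log\tau$, $B \asymp k+c_\RR$) has solution $x = O(A + B\log B)$. A secondary technical nuisance is the Chernoff step controlling the projection size: one must either band the packing by dyadic measure and union-bound over $O(\log\tau(\eps))$ bands (harmless, absorbed into constants), or argue directly that the expected number of sampled points in a typical range is $O(\ell)$ so that with constant probability no band is over-represented; either way the losses are only constant factors or an extra $\log\tau(\eps)$, consistent with the stated bound. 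The VC-dimension, distribution, and the passage between absolute and conditional measures are all handled exactly as in the proof of Theorem~\ref{thmmain2}, so those steps are routine. The remaining details — choice of $\delta$ in Lemma~\ref{hausslerlem}, the precise Chernoff constants, and tracking the $(k+c_\RR)^{k+c_\RR}$-type constant through the fixed-point — are mechanical and I would carry them out in the appendix.
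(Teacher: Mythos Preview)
Your overall strategy is sound and ultimately lands on the same self-referential inequality as the paper, but the route you take is more circuitous and the detour through the VC dimension $d$ is unnecessary. The paper's proof is $d$-free from the outset (as the statement of the lemma itself suggests: it involves only $k$, $c_{\RR}$, and $\tau$). Rather than invoking Lemma~\ref{hausslerlem} or Lemma~\ref{chaz} with sample size $n\asymp d/\eps'$ and then bootstrapping, the paper goes straight to the classical Dudley argument: take an i.i.d.\ sample $A$ of size $m=2(\log D_\eps)\PP(\X')/\eps\le 4\tau(\eps)\log D_\eps$ from the conditional distribution; a union bound over the $\binom{D_\eps}{2}$ pairs in the packing shows that with probability $>1/2$ all projections onto $A$ are distinct, and a Chernoff bound plus union bound over the $D_\eps$ ranges shows that with probability $\ge 1/2$ every projection has size $\le 8\log D_\eps$. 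On the intersection event this gives directly
\[
D_\eps=|\mathcal{Q}|_A|\le \varphi'(m)\,(8\log D_\eps)^{c_{\RR}}\lesssim \tau^k(\eps)\,(\log D_\eps)^{k+c_{\RR}},
\]
which is precisely your fixed-point inequality, obtained in one step with no reference to $d$.

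What your approach buys is a unified use of Lemma~\ref{hausslerlem}, but at the cost of an intermediate $(d\tau)^k d^{c_{\RR}}$ bound that then has to be repaired. The repair step---``feeding back $n\asymp(\log N)/\eps'$ in place of $d/\eps'$''---is exactly the Dudley pair-separation argument above, not another application of Lemma~\ref{hausslerlem} (which is tied to sample size $2d/(\eps'\delta)$). You gesture at this when you say ``the relevant parameter is $\log(\text{packing size})$, not $d$,'' but you should make that step explicit: it is the union bound over $\binom{N}{2}$ pairs, each failing with probability $(1-\eps')^m$, that forces $m\asymp(\log N)/\eps'$. Once you do, your initial pass through Haussler is redundant---you may as well start with the Dudley step and solve the fixed point once, as the paper does. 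Your dyadic banding and separate Chernoff worries are also unnecessary here: every $Q\in\mathcal{Q}$ already has $\PP(Q)\le 2\eps_0$, so a single Chernoff bound with expectation $\le m\cdot 2\eps_0/\PP(\X')\le 4\log D_\eps$ handles all of them at once.
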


The next lemma is better than the previous one in many cases, but depends explicitly on VC dimension. { We remark that Lemma \ref{lembd1} involves only the shatter function, but not  VC dimension.}
\begin{lem}
\label{shallow}
Assume that the range space $(\X, \mathcal{R})$ has VC-dimension $d$, shallow cell complexity $\varphi$ and Alexander's capacity $\tau$. Then
\[
D_{\eps} \le 6\varphi(8d\tau(\eps), 24d).
\]
\end{lem}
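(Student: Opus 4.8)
The plan is to bound $D_\eps$ directly from the definition as a supremum, over $\eps_0 \ge \eps$, of the maximal $\eps_0$-packing of $\RR_{\le 2\eps_0}$. Fix such an $\eps_0$, and let $\mathcal{Q} \subseteq \RR_{\le 2\eps_0}$ be a maximal $\eps_0$-packing; I want to show $|\mathcal{Q}| \le 6\varphi(8d\tau(\eps), 24d)$. Since $\eps_0 \ge \eps$, Alexander's capacity gives $\PP\big(\bigcup_{R \in \RR_{\le 2\eps_0}} R\big) \le \tau(2\eps_0)\cdot 2\eps_0 \le 2\tau(\eps)\eps_0$ (using monotonicity $\tau(2\eps_0) \le \tau(\eps)$ and the definition with parameter $2\eps_0 \ge \eps$). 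So all the ranges in $\mathcal{Q}$, and in particular their symmetric differences, live inside a set $\X' := \bigcup_{R \in \RR_{\le 2\eps_0}} R$ of measure at most $2\tau(\eps)\eps_0$.

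The key step is a sampling argument: take an i.i.d. sample $A$ of size $n$ from $\X'$ according to the conditional measure $\PP(\,\cdot\,|\X')$, where $n$ is chosen so that (a) with positive probability all pairs in $\mathcal{Q}$ remain separated on $A$, and (b) the expected number of \emph{small} cells in $\RR|_A$ is controlled by $\varphi$. For (a): for a fixed pair $R_1, R_2 \in \mathcal{Q}$ we have $\PP(R_1 \Delta R_2) \ge \eps_0$, hence $\PP(R_1 \Delta R_2 \mid \X') \ge \eps_0 / (2\tau(\eps)\eps_0) = \frac{1}{2\tau(\eps)}$; so the probability that $A$ misses $R_1 \Delta R_2$ is at most $(1 - \frac{1}{2\tau(\eps)})^n$. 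Choosing $n$ a suitable multiple of $d\tau(\eps)$ (the factor $8$ in the statement is exactly what makes this work — roughly $n = 8d\tau(\eps)$, or a constant times it, so that a union bound over all $\binom{|\mathcal{Q}|}{2} \le \binom{D_\eps}{2}$ pairs, combined with Theorem \ref{thmha} bounding $D_\eps$ crudely by a finite quantity, or more cleanly a direct counting bootstrap, fails with probability $<1$). Then with positive probability the map $R \mapsto R \cap A$ is injective on $\mathcal{Q}$, so $|\mathcal{Q}| \le |\{R \cap A : R \in \mathcal{Q}\}|$. For (b): each $R \in \mathcal{Q}$ has $\PP(R) \le 2\eps_0$, so $\PP(R \mid \X') \le 2\eps_0/(2\tau(\eps)\eps_0)\cdot\tau(\eps)$ — wait, more carefully $\PP(R\mid\X') \le 2\eps_0 / \PP(\X')$, which need not be tiny; instead one argues that $\E|R \cap A| \le n \cdot 2\eps_0 / \PP(\X')$ and splits $\mathcal{Q}$ according to whether $|R\cap A|$ exceeds, say, $24d$: ranges with $|R \cap A| \le 24d$ are counted by the shallow-cell complexity $\varphi(|A|, 24d) \le \varphi(n, 24d)$, and a Markov/Chernoff estimate shows that with positive probability the number of $R \in \mathcal{Q}$ with $|R \cap A| > 24d$ is a negligible fraction (this is where the constant $24$ and the factor $6$ come from — a union of two "good" events each of probability $> 5/6$ leaves room, and one may even absorb the large-intersection ranges into a $\frac{1}{6}$ fraction of $|\mathcal{Q}|$, giving $|\mathcal{Q}| \le \frac{6}{5}\varphi(n,24d)$ or so, which is at most $6\varphi(8d\tau(\eps),24d)$ after fixing $n$).

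The main obstacle I anticipate is the simultaneous control of the two events — keeping all $\binom{|\mathcal{Q}|}{2}$ pairs separated while also keeping most intersection sizes below $24d$ — with the right absolute constants, since $|\mathcal{Q}|$ is itself the quantity being bounded and a naive union bound over pairs needs an a priori (even very weak) bound on $D_\eps$ to get started. The clean way around this is a standard trick: run the argument not on all of $\mathcal{Q}$ at once but observe that it suffices to bound $|\mathcal{Q}|$ for the optimal packing and use that if the injectivity event has probability $> 1/2$ then $|\mathcal Q| \le \E|\{R \cap A : R \in \mathcal Q\}|/(1/2)$ — i.e. replace the pairwise union bound by the averaging identity exactly as in the proof of Lemma \ref{hausslerlem}, where $\delta = 1/2$. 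Then no a priori bound on $D_\eps$ is needed: $|\mathcal Q| \le 2\,\E\,|\{R\cap A: R\in\mathcal Q\}| \le 2\big(\varphi(n,24d) + \E(\text{number of } R\in\mathcal Q \text{ with }|R\cap A|>24d)\big)$, and the last expectation is at most $\frac{|\mathcal Q|}{3}$ by Markov once $n = \Theta(d\tau(\eps))$ is chosen so that $\E|R\cap A| \le$ a small multiple of $d$; rearranging gives $|\mathcal Q| = O(\varphi(n,24d))$ with the stated constants after pinning $n = 8d\tau(\eps)$.
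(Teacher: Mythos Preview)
Your proposal is correct and, after the self-corrections in the final paragraph, it is essentially identical to the paper's own proof: apply Lemma~\ref{hausslerlem} with $\delta=\tfrac12$ on the conditional measure $\PP'=\PP(\cdot\mid\X')$ to get $|\mathcal Q|\le 2\,\E|\mathcal Q|_A|$ for a sample $A$ of size $n=\frac{4d\PP(\X')}{\eps_0}\le 8d\tau(\eps)$, split $\mathcal Q$ according to whether $|Q\cap A|>24d$, use Markov (with $\E|Q\cap A|\le 8d$) to bound $\E|\mathcal Q_1|\le\tfrac{|\mathcal Q|}{3}$, and rearrange to obtain $|\mathcal Q|\le 6\,\varphi(8d\tau(\eps),24d)$. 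Your initial detour through a pairwise union bound is unnecessary, as you yourself recognize; the paper goes straight to the averaging identity.
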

We immediately have the following result, already mentioned in Section~\ref{sec3}.
\begin{cor}\label{corol1}
For a range space $(\X, \mathcal{R})$ of VC-dimension $d$ there exists $c>0$ such that
\begin{equation}
\label{corol}
D_{\eps} \le \big(c\tau(\eps)\big)^d.
\end{equation}
\end{cor}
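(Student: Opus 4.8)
The plan is to derive \eqref{corol} directly from Lemma~\ref{shallow} by feeding it the trivial shallow-cell complexity bound that follows from the Vapnik-Chervonenkis-Sauer-Shelah lemma. Indeed, a range space of VC-dimension $d$ satisfies $\pi_{\RR}(y)\le\big(\frac{ey}{d}\big)^d$ by \eqref{vcs}, and in particular the number of sets of size at most $\ell$ in $\RR|_Y$ is at most $\big(\frac{e|Y|}{d}\big)^d$; hence $\varphi(y,\ell)\le\big(\frac{ey}{d}\big)^d$ is a valid (very crude, $\ell$-independent) shallow-cell complexity function.

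First I would invoke Lemma~\ref{shallow} with this $\varphi$, obtaining
\[
D_{\eps}\le 6\varphi\big(8d\tau(\eps),24d\big)\le 6\left(\frac{e\cdot 8d\tau(\eps)}{d}\right)^d = 6\big(8e\,\tau(\eps)\big)^d.
\]
Then I would absorb the leading factor $6$ into the base of the exponent: since $\tau(\eps)\ge 1$ by our normalization $\tau=\max\{\tau,1\}$, we have $6\le 6^d\le(6^{1/1})^d$ trivially for $d\ge1$, so $6\big(8e\,\tau(\eps)\big)^d\le\big(48e\,\tau(\eps)\big)^d$. Taking $c:=48e$ (any constant $c\ge 48e$ works) yields $D_{\eps}\le(c\tau(\eps))^d$, which is exactly \eqref{corol}. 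The edge case $d=0$ is vacuous since then $\RR$ contains at most one set and $D_{\eps}\le 1$.

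There is essentially no obstacle here: the corollary is a one-line consequence once Lemma~\ref{shallow} is available, the only mild point being the bookkeeping that lets one fold the absolute constant $6$ from Lemma~\ref{shallow} into the $d$-th power using $\tau(\eps)\ge1$. If one instead wanted to route through Lemma~\ref{lembd1}, one would set $k=d$, $c_{\RR}=0$ (the $\ell$-independent bound), $\varphi'(x)=\big(\frac ed\big)^d x^d$, which gives $D_\eps\le C(d,0)\,\tau^d(\eps)\log^{d}\tau(\eps)$ — weaker by a $\log^d$ factor — so Lemma~\ref{shallow} is the right tool, as it exploits that only cells of bounded size matter and thus avoids the logarithmic loss.
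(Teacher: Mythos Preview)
Your proposal is correct and follows exactly the paper's approach: the paper's proof is the single sentence ``The result follows since the shallow-cell complexity can be upper bounded by the Vapnik-Chervonenkis-Sauer-Shelah inequality,'' and you have spelled out precisely this argument, plugging $\varphi(y,\ell)\le(ey/d)^d$ into Lemma~\ref{shallow} and then absorbing the constant into the base.
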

\begin{proof}
{ The result follows since the shallow-cell complexity can be upper bounded by the Vapnik-Chervonenkis-Sauer-Shelah inequality.}
\end{proof}
{
Finally, we provide Lemma that proves the bound \eqref{optbound} in some geometric scenarios.
\begin{lem}
\label{geombound}
For the range spaces induced by half-spaces in $\mathbb{R}^2$ and $\mathbb{R}^3$, by balls in $\mathbb{R}^2$, by intervals in $\mathbb{R}$ and for range spaces of VC dimension $d = 1$ it holds that
\begin{equation}
\label{taueps}
D_{\eps} = O(\tau(\eps)).
\end{equation}
\end{lem}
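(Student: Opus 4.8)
The plan is to derive \eqref{taueps} from Lemma~\ref{shallow}, which reduces the task to controlling the shallow-cell complexity $\varphi(\cdot,\cdot)$ of each of the listed range spaces and then plugging in $|Y| = 8d\tau(\eps)$ and $\ell = 24d$. Since $d\le 4$ in all the cases under consideration, the second argument $24d$ is an absolute constant, so what we really need is, for each family, a bound of the form $\varphi(m,\ell) = O(m)$ for $\ell$ a fixed constant (with the implied constant allowed to depend on $\ell$). Feeding $m = 8d\tau(\eps) = O(\tau(\eps))$ into such a bound then yields $D_\eps \le 6\varphi(O(\tau(\eps)), O(1)) = O(\tau(\eps))$, which is exactly \eqref{taueps}.

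First I would recall the known shallow-cell complexity estimates for the geometric families in question. For half-spaces in $\R^2$ and $\R^3$, and for balls in $\R^2$ (which lift to half-spaces in $\R^3$ via the standard lifting map), the classical $k$-set / $k$-level bounds say that on any ground set $Y$ of size $m$ the number of ranges of size at most $\ell$ is $O(m\,\ell^{c})$ for an absolute constant $c$ (e.g.\ $c=1$ for half-planes and intervals, and the linear-in-$m$ dependence for half-spaces in $\R^3$ and disks in $\R^2$ follows from the $O(m\ell)$ bound on the complexity of the $({\le}\ell)$-level in an arrangement of $m$ planes). For intervals in $\R$ the count is trivially $O(m)$ for every $\ell$, since an interval range on $m$ points is determined by its two endpoints and there are at most $m$ choices for the left endpoint of a range of bounded size. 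For a range space of VC-dimension $d=1$ one may argue directly: by the Sauer–Shelah bound $|\RR|_Y| \le m+1$, so in particular the number of ranges of size at most $\ell$ is at most $m+1 = O(m)$; hence $\varphi(m,\ell)\le m+1$ and one does not even need the shallow refinement here. In every case we obtain $\varphi(m,\ell) = O_\ell(m)$, i.e.\ a shallow-cell complexity $(\varphi',c_{\RR})$ with $\varphi'(m) = O(m)$ (so $k=1$ in the language of Lemma~\ref{lembd1}).

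Next I would simply invoke Lemma~\ref{shallow}. With $\varphi(m,\ell)\le C_\ell\, m$ for the relevant constant value $\ell = 24d = O(1)$, we get
\[
D_\eps \;\le\; 6\,\varphi\big(8d\tau(\eps),\,24d\big) \;\le\; 6\,C_{24d}\cdot 8d\,\tau(\eps) \;=\; O(\tau(\eps)),
\]
where the implied constant depends only on $d\le 4$ and on the family, hence is absolute. This completes the argument for all five cases.

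The main obstacle, such as it is, is purely bookkeeping: one must cite the correct linear-in-$m$ shallow-cell bounds for half-spaces in $\R^3$ and disks in $\R^2$ (these are the only non-immediate ones, and they are standard consequences of the complexity of $({\le}\ell)$-levels in arrangements of planes; see the references collected in the survey \cite{MV}), and one must make sure the constant $24d$ produced by Lemma~\ref{shallow} is indeed treated as a constant — which is legitimate precisely because we restricted to $d=1$ and to the low-dimensional geometric families where $d\le 4$. No genuinely new estimate is needed beyond Lemma~\ref{shallow} and the classical $k$-set bounds.
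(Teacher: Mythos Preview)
Your proposal is correct and follows essentially the same route as the paper: the geometric cases are handled by feeding the known linear-in-$m$ shallow-cell bounds (cited from \cite{MV}) into Lemma~\ref{shallow} with $d\le 4$, exactly as the paper does. For the $d=1$ case the paper invokes Corollary~\ref{corol1} directly, but since that corollary is itself obtained from Lemma~\ref{shallow} via the Sauer--Shelah bound, your unpacked version is the same argument.
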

\begin{proof}
The result for $d = 1$ follows immediately from Corollary \ref{corol1}. In view of this result it is worth noting the importance of the inequality \eqref{corol} compared to Theorem \eqref{thmha}. The latter will not allow us to prove \eqref{taueps}.

The proof of the remaining part follows from Lemma \ref{shallow}. Indeed, for range spaces induced by half-spaces in $\mathbb{R}^2$ and $\mathbb{R}^3$, by balls in $\mathbb{R}^2$, by intervals in $\mathbb{R}$ the known bound on the shallow-cell complexity (see Table 47.1.1 in \cite{MV}; in all mentioned cases the shallow cell complexity $\varphi(x,\ell) = O(x\ell^{C})$, where $C$ is a constant that does not depend on $x$) together with the fact that $d \le 4$ imply that $\varphi(8d\tau(\eps), 24d) = O(\tau(\eps))$. The claim follows.
\end{proof}
}
{
\section{Concluding remarks and discussions}
\label{implications}\label{sec6}\label{sec7}
Since our bounds depend on the Alexander's capacity $\tau$ we need to recall the following related quantity. Define the \emph{star number} \cite{Hanneke15} as the largest integer such that there exist distinct
points $x_1, . . . , x_{\mathfrak{s}} \in \X$ and classifiers $f_0, f_1, . . . , f_{\mathfrak{s}} \in \mathcal F$ with the property that for all $i$,
\[
\{x \in X: f_0(x) \neq f_i(x)\} \cap \{x_1, \ldots, x_{\mathfrak{s}}\} = \{x_i\}.
\]
Theorem 10 in \cite{Hanneke15} shows that 
\begin{equation}
\label{starnum}
\sup\limits_{\PP, f^* \in \mathcal F}\tau(\eps) = \min\left\{\mathfrak{s}, \frac{1}{\eps}\right\},
\end{equation}
and, in particular, we always have $\tau(\eps) \le \mathfrak{s}$ and our upper bounds can be changed accordingly. The star number is known to characterize the learning rates in Active Learning \cite{Hanneke15} and has important relations to Teaching as discussed below. However, in the world of $\epsilon$-nets the range spaces induced by classes with the finite star number are trivial: these are the range spaces that have finite hitting sets. We recall that set $H \subseteq \X$ is a hitting set of $(\X, \RR)$ if it intersects each set in $\RR$. 
\begin{obs}
\label{obsepsnet}
For any VC class $\mathcal F$ with the finite star number $\mathfrak{s}$ it holds that for any $f^* \in F$ the range space
\[
\big\{\{x \in \X: f(x)\ne f^*(x)\}: f\in\ff\big\},
\]
has a finite hitting set of size $\mathfrak{s}$. In particular, this means that there is $\eps$-net of size $\mathfrak{s}$ for any $\eps > 0$ and all $f \in \FF$ have finite teaching sets of size $\mathfrak{s}$.
\end{obs}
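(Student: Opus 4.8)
The plan is to show directly that the relevant range space admits a hitting set of size at most $\mathfrak{s}$, since the remaining conclusions are immediate: a hitting set is an $\eps$-net for every $\eps>0$, and for the teaching claim one observes that a hitting set for $\big\{\{x:f(x)\ne f^*(x)\}:f\in\ff\big\}$ is exactly a set of points on which no $f\ne f^*$ agrees with $f^*$, i.e.\ a teaching set for $f^*$. So the whole content is the bound on the hitting set.

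First I would argue by contradiction: suppose every hitting set has size $>\mathfrak{s}$, or more carefully, construct a hitting set greedily and show the process terminates within $\mathfrak{s}$ steps. Concretely, set $f_0:=f^*$ and build a sequence of points $x_1,x_2,\ldots$ and functions $f_1,f_2,\ldots\in\ff$ as follows. Having chosen $x_1,\ldots,x_{i-1}$, if $\{x_1,\ldots,x_{i-1}\}$ already hits every nonempty range $\{x:f(x)\ne f^*(x)\}$ with $f\in\ff$ (equivalently, $f^*$ is the only function in $\ff$ agreeing with $f^*$ on $\{x_1,\ldots,x_{i-1}\}$), we stop and output this set. Otherwise there is some $f_i\in\ff$, $f_i\ne f^*$, with $f_i(x_j)=f^*(x_j)$ for all $j<i$; pick any point $x_i$ with $f_i(x_i)\ne f^*(x_i)$. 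The key claim is that the points $x_1,\ldots,x_i$ together with $f_0,\ldots,f_i$ witness a star configuration of size $i$: indeed $\{x:f_0(x)\ne f_i(x)\}\cap\{x_1,\ldots,x_i\}=\{x_i\}$ by construction (the disagreement misses $x_1,\ldots,x_{i-1}$ and contains $x_i$), and for $k<i$ the same is true with $i$ replaced by $k$ when intersected with $\{x_1,\ldots,x_k\}$; one must check the star-number definition is met on the common point set $\{x_1,\ldots,x_i\}$, which follows because for $k<i$, $x_{k+1},\ldots,x_i$ were chosen \emph{after} $f_k$ was fixed and $f_k$ disagrees with $f_0$ on exactly $x_k$ among $x_1,\ldots,x_k$ — but we need disagreement on exactly $x_k$ among \emph{all} of $x_1,\dots,x_i$, so the selection rule must be refined to also require $f_i$ to agree with $f^*$ on all previously chosen points, which it does, and to note that later-chosen points do not lie in $\{x:f_0\ne f_k\}$ by the same inductive agreement property — this is the point that needs the most care. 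Since $\mathfrak{s}$ is the largest size of such a configuration, the process must halt after at most $\mathfrak{s}$ steps, giving a hitting set of size $\le\mathfrak{s}$.

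The main obstacle is precisely the bookkeeping in the previous paragraph: making sure that when we add $x_i$, the earlier witnesses $f_k$ still satisfy $\{x:f_0(x)\ne f_k(x)\}\cap\{x_1,\ldots,x_i\}=\{x_k\}$ and not merely $\cap\{x_1,\ldots,x_k\}=\{x_k\}$. The fix is to demand at each step that the chosen $f_i$ agrees with $f^*$ on \emph{all} of $x_1,\ldots,x_{i-1}$ (which is exactly the non-termination hypothesis), and then to observe that this forces, for every $k$, $f_k$ to agree with $f^*=f_0$ on all $x_j$ with $j\le i$, $j\ne k$: for $j<k$ because $f_k$ agrees with $f^*$ on earlier points, and for $j>k$... that direction does \emph{not} follow and is the genuine subtlety. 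The correct resolution is to run the greedy construction so that $x_i$ is chosen to witness disagreement for \emph{all} remaining unhit functions simultaneously is impossible in general, so instead one invokes the star number in its ``dual'' reading: a minimal hitting set $H$ of size $t$ yields, for each $x\in H$, a function $f_x\in\ff$ whose disagreement set with $f^*$ meets $H$ only in $\{x\}$ (by minimality, removing $x$ leaves some range unhit), and $\{f^*\}\cup\{f_x:x\in H\}$ is a star configuration of size $t$, whence $t\le\mathfrak{s}$. This minimality argument sidesteps the ordering issue entirely and is the clean way to finish; it also handles all three conclusions at once. Finiteness of $\ff$ (hence of the number of ranges) guarantees a minimal hitting set exists, so the argument is complete.
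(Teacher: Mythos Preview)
Your final argument via a minimal hitting set is correct and clean: if $H$ is a minimal hitting set for the nonempty ranges $\{x:f(x)\ne f^*(x)\}$, $f\in\ff\setminus\{f^*\}$, then for each $x\in H$ minimality yields some $f_x\in\ff$ with $\{y:f_x(y)\ne f^*(y)\}\cap H=\{x\}$, and taking $f_0=f^*$ this is exactly a star configuration of size $|H|$, so $|H|\le\mathfrak{s}$. The deductions about $\eps$-nets and teaching sets then follow as you say. One small correction: you justify existence of a minimal hitting set by ``finiteness of $\ff$'', but $\ff$ need not be finite. What you actually need is finiteness of $\X$ (assumed throughout the paper), which makes the range space a finite subset of $2^{\X}$ and ensures that $\X$ itself is a finite hitting set from which a minimal one can be extracted.

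The paper's own proof is quite different in character: it simply invokes Theorem~13 of \cite{Hanneke15}, which asserts directly that when $\mathfrak{s}<\infty$ a teacher needs at most $\mathfrak{s}$ points to pin down $f^*$, and then reads off the hitting-set and $\eps$-net statements from that. Your route is self-contained and more transparent, since it exhibits the star configuration explicitly from the combinatorics of a minimal hitting set rather than deferring to an external reference. The exploratory detour through the greedy construction is unnecessary in a final write-up---your own analysis correctly identifies why the naive ordering fails (later points $x_j$ may land in earlier disagreement sets $\{x:f_k(x)\ne f^*(x)\}$)---but the minimality argument you land on is exactly the right fix.
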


\begin{proof}

By Theorem 13 in \cite{Hanneke15} if $\mathfrak{s} < \infty$ then it is enough for the teacher to show at most $\mathfrak{s}$ points to allow the learner to reconstruct $f^*$ on a finite set $\mathrm{supp}(\PP)$ by any consistent learning procedure. This implies the remaining claims in a straightforward way.

\end{proof}

Therefore, in view of Observation \ref{obsepsnet} the interesting case for $\eps$-nets is where $\mathfrak{s} = \infty$, but $\tau(\eps) = o\left(\frac{1}{\eps}\right)$. We note that this scenario was considered in the literature in the context of $\eps$-nets. In \cite{Balcan13} authors prove and discuss the existence of $\eps$-nets of size $O(\frac{d}{\eps})$ (which is an improvement over the standard $O(\frac{d \log \frac{1}{\eps}}{\eps})$ bound) for a set of regions of disagreement between all possible linear classifiers passing through the origin in $\mathbb{R}^d$ and the linear fixed classifier, when the distribution $\PP$ is zero-mean, isotropic and log-concave. It is easy to see that in this case for $d \ge 2$ it holds that $\mathfrak{s} = \infty$. Their bound is based on the improved version of Theorem \ref{vcbound}. However, it is not difficult to see that at least for some particular distributions (like uniform distribution on the unit sphere) even finite $\eps$-nets exist, that are the hitting sets. More specifically, our bound \eqref{eqtau} (given the fact the Alexander's capacity is bounded in this case \cite{Hanneke14}) gives the result that scales as $O(\log \frac{1}{\eps})$, which is significantly better than $O(\frac{d}{\eps})$ claimed in \cite{Balcan13}.

We know \eqref{starnum} that boundedness of the star number is a necessary and sufficient condition for boundedness of $\tau(\eps)$ for all distributions, and it also implies the existence of finite hitting sets for corresponding range spaces. Our results answer the following natural question. Is it true that boundedness of $\tau(\eps)$ for a particular distribution implies the existence of $\eps$-nets with their sizes not depending on $\eps$ ?  We show that it is not true and in general, it is not possible to get rid of the factor $\log \frac{1}{\eps}$ in the bound \eqref{impr} since there are range spaces with $\tau(\eps) = O(1)$ and with the smallest $\eps$-net of size $\Omega(\log \frac{1}{\eps})$. See the remark after the proof of Theorem~\ref{thm4}.  

In general, the $\epsilon$-net theorems in this paper are arranged from the weaker to the stronger ones. Below we only discuss the strength of the bounds given, and mostly avoid discussing the algorithms. We focus only on the implications of certain results on the existence of $\eps$-nets. 

Theorems~\ref{thmvc},~\ref{thmvcrev},~\ref{vcbound} are weaker than any other result given in the paper. Theorem~\ref{Koltch} is stronger than the previous ones, and its bound is implied for relatively small $\tau(\epsilon)$ by Corollary~\ref{alexcap}.

Theorem~\ref{thmmain2} implies both Theorem~\ref{Koltch} and Corollary~\ref{alexcap}. Indeed, Theorem~\ref{Koltch} follows easily from the fact that $\tau_i\le \frac{1}{2^i\eps}$, and thus $\sum_i\tau_i\le \frac 1{\epsilon}$, and Corollary~\ref{alexcap} follows from Corollary~\ref{coralexnew}. We also note that the bounds in Theorem~\ref{thmmain2} are strictly stronger in many cases.

Speaking of the bounds that make use of the doubling constant, they are stronger than all the previous. In particular, even Corollary~\ref{cortri} together with a weak bound on the doubling dimension, given in Theorem~\ref{thmha}, implies many of the previous bounds (except for Corollary~\ref{alexcap} and Theorem~\ref{thmmain2}), giving the bound $O\big(\frac{d}{\eps}\log\tau(\eps)\big)$, and Corollary~\ref{cormain} combined with Theorem~\ref{thmha}, implies Corollaries~\ref{alexcap} and~\ref{coralexnew}.

It is also easy to see that Theorem~\ref{thm4} combined with Theorem~\ref{thmha} implies Theorem~\ref{thmmain2}. But its full strength becomes clear when the doubling constant is relatively small. Then the fact that we divide $D_{\eps}$ by $\tau(\eps)$ in the logarithm may play a crucial role since it allows us to get rid of the logarithm in some cases. In this context, sharper bounds on the doubling constant that make use of the shallow-cell complexity come into play as shown above.

In the context of our bounds on the doubling constant it is worth mentioning the following Lemma.

{ \begin{lemm}[Shallow-Packing Lemma \cite{Dutta15, Mustafa16}]\label{lemmdg}
Let $\X$ be an $n$-element set equipped with the uniform measure $\PP$. Assume that a range space $(\X,\mathcal R)$ of at most $k$-element sets has VC dimension $d$, shallow-cell complexity $\varphi$. If, for any distinct $R_1,R_2\in \mathcal R$, we have $\PP(R_1\Delta R_2)\ge \gamma$ then
\[
|\mathcal{R}| = O\left(\varphi\left(\frac{4d}{\gamma},\frac {12kd}{n\gamma}\right)\right).
\]
\end{lemm}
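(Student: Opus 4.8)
The plan is to derive the bound from Chazelle's sampling lemma (Lemma~\ref{chaz}), which is already at our disposal, combined with a thresholding step that exploits the hypothesis that every range has at most $k$ elements. Set $s:=\frac{4d}{\gamma}$. Since $\PP$ is uniform on a finite set and $\PP(R_1\Delta R_2)\ge\gamma$ for all distinct $R_1,R_2\in\mathcal R$, Lemma~\ref{chaz} applies verbatim and gives $|\mathcal R|\le 2\,\E|\mathcal R|_A|$, where $A=(a_1,\dots,a_s)$ is an i.i.d.\ sample from $\PP$. It remains to bound $\E|\mathcal R|_A|$ from above, and this is where the shallow-cell complexity comes in.

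Fix the threshold $\ell:=\frac{12kd}{n\gamma}=\frac{3sk}{n}$ and split $\mathcal R|_A$ into \emph{shallow} projections (of size at most $\ell$) and \emph{deep} projections (of size more than $\ell$). By the definition of shallow-cell complexity --- after the harmless normalization making $\varphi$ nondecreasing in its first argument, so that repeated sample points are not an issue --- the number of shallow projections is at most $\varphi(s,\ell)$, deterministically. For the deep projections, note that the distinct elements of $\mathcal R|_A$ partition $\mathcal R$ according to their projections, and every class corresponding to a deep projection $P$ consists of ranges $R$ with $|R\cap A|=|P|>\ell$; hence the number of deep projections is at most $\#\{R\in\mathcal R:\ |R\cap A|>\ell\}$. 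Because $|R|\le k$, we have $\E|R\cap A|=s\,\PP(R)\le\frac{sk}{n}=\ell/3$ for every $R$, so Markov's inequality gives $\PP(|R\cap A|>\ell)\le\tfrac13$ and the expected number of deep projections is at most $\tfrac13|\mathcal R|$.

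Combining the two estimates yields the self-referential inequality
\[
|\mathcal R|\ \le\ 2\,\E|\mathcal R|_A|\ \le\ 2\varphi(s,\ell)+\tfrac{2}{3}|\mathcal R|,
\]
which rearranges to $|\mathcal R|\le 6\varphi(s,\ell)=6\,\varphi\!\big(\tfrac{4d}{\gamma},\tfrac{12kd}{n\gamma}\big)=O\!\big(\varphi(\tfrac{4d}{\gamma},\tfrac{12kd}{n\gamma})\big)$, which is the claimed bound.

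I expect the two points that need care --- and the ones I would spell out --- to be: (i) the passage from $\mathcal R|_A$ on the multiset $A$ to a shallow-cell bound on the set of distinct sample points, which requires (or can be arranged to have) $\varphi$ nondecreasing in its first coordinate; and (ii) the bookkeeping that lets us replace the number of distinct deep projections by the number of deep ranges, which is exactly what makes the self-bounding step close. Everything else is a direct invocation of Lemma~\ref{chaz} together with Markov's inequality; in particular, one could equally well start from Lemma~\ref{hausslerlem} with $\delta=\tfrac12$, which gives the same sample size $\frac{4d}{\gamma}$.
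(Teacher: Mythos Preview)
Your proof is correct and follows precisely the approach that the paper itself employs. Note that the paper does not give a standalone proof of Lemma~\ref{lemmdg}; it is cited from \cite{Dutta15, Mustafa16}. However, the paper's proof of the closely related Lemma~\ref{shallow} in the Appendix uses exactly your argument (attributed there to \cite{Mustafa16}): apply Lemma~\ref{hausslerlem} with $\delta=\tfrac12$ to get $|\mathcal R|\le 2\,\E|\mathcal R|_A|$ for a sample of size $4d/\gamma$, split according to whether $|R\cap A|$ exceeds three times its mean, bound the ``deep'' part by $\tfrac13|\mathcal R|$ via Markov, bound the ``shallow'' part by $\varphi$, and solve the self-referential inequality to obtain the factor $6$. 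Your two flagged subtleties (monotonicity of $\varphi$ in the first argument to handle repeated sample points, and replacing the count of deep projections by the count of deep ranges) are handled identically in the paper's argument.
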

 }

The previous best bound on the $\epsilon$-nets, which used the notion of shallow-cell complexity, was as follows (see \cite{Chan} and a simplified proof \cite{MDG}).

\begin{thmx}[\cite{Chan, MDG}]\label{thmmdg} Let $(\X,\RR)$, $|\X|=n$ be a range space with uniform distribution of VC-dimension $d$ and shallow-cell complexity $\varphi(\cdot,\cdot)$, where $\varphi(\cdot,\cdot)$ is a non-decreasing function in the first variable. Then there exists an $\eps$-net for $\RR$ of size $O\big(\frac 1{\eps}(\log\big(\eps\varphi(\frac {8d}{\eps}, 24d)\big)+d)\big)$. In particular, if $\RR$ has shallow-cell complexity $(\varphi',c_{R})$ and finite VC-dimension, then there exists an $\eps$-net for $\RR$ of size $O\big(\frac 1{\eps}\log (\eps\varphi^{\prime}\big(\frac{1}{\eps}\big))\big)$.
\end{thmx}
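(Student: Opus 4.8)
The plan is to read Theorem~\ref{thmmdg} off from the machinery of Sections~\ref{sec3} and~\ref{sec5}: combining the bound \eqref{usefuleq} of Corollary~\ref{cormain} with the estimate $D_\eps\le 6\varphi(8d\tau(\eps),24d)$ of Lemma~\ref{shallow} already yields it. (The cruder Corollary~\ref{cortri} does not suffice on its own: it would only give $O(\tfrac1\eps(d+\log\varphi(\tfrac{8d}{\eps},24d)))$, which exceeds the target by a $\tfrac1\eps\log\tfrac1\eps$ factor; what saves a logarithm is precisely that \eqref{usefuleq} produces $\log(\eps D_\eps)$ rather than $\log D_\eps$.)

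Concretely, I would set $\Phi:=\varphi\bigl(\tfrac{8d}{\eps},24d\bigr)$ and apply Corollary~\ref{cormain} not with the true doubling constant but with the inflated upper bound $D:=\max\{6\Phi,\,e/\eps\}$. This is admissible: Lemma~\ref{shallow} gives that the doubling constant of $(\X,\RR)$ is at most $6\varphi(8d\tau(\eps),24d)$, and since $\tau(\eps)\le 1/\eps$ and $\varphi$ is non-decreasing in its first argument this is $\le 6\Phi\le D$; moreover $D\ge e/\eps$ by construction, so the hypothesis of \eqref{usefuleq} holds with no case distinction. Thus there is an $\eps$-net of size $O\bigl(\tfrac1\eps(d+\log(\eps D))\bigr)$. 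Since $\eps D=\max\{6\eps\Phi,\,e\}$, we have $\log(\eps D)=O(1)+\max\{0,\log(\eps\Phi)\}$, which together with $d\ge 1$ reduces the size to $O\bigl(\tfrac1\eps(d+\log(\eps\varphi(\tfrac{8d}{\eps},24d)))\bigr)$, the first claimed bound (in all scenarios of interest $\eps\varphi(\tfrac{8d}{\eps},24d)\ge 1$, so the logarithm may be read verbatim, and otherwise the net still has size $O(d/\eps)$, consistent with the statement). For the ``in particular'' part I would plug $\varphi(x,\ell)=\varphi'(x)\ell^{c_{\RR}}$ into the bound just obtained: then $\varphi(\tfrac{8d}{\eps},24d)=(24d)^{c_{\RR}}\varphi'(\tfrac{8d}{\eps})$, and treating $d$ and $c_{\RR}$ as constants (and using that $\varphi'$ is polynomially bounded, so $\varphi'(\tfrac{8d}{\eps})=O(\varphi'(\tfrac1\eps))$) one gets $\log(\eps\varphi(\tfrac{8d}{\eps},24d))=O(\log(\eps\varphi'(\tfrac1\eps)))$ and $d=O(1)$, giving the $O\bigl(\tfrac1\eps\log(\eps\varphi'(\tfrac1\eps))\bigr)$ bound.

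The only genuinely delicate point is the bookkeeping around the inflated bound $D=\max\{6\Phi,e/\eps\}$: one must check that feeding Corollary~\ref{cormain} a non-tight upper bound is legitimate (it is — the corollary only requires \emph{an} upper bound on the doubling constant), that the logarithms of the multiplicative constant $6$ and of the $e/\eps$ ``floor'' are absorbed by the $O(\cdot)$, and that in the degenerate regime $\eps\Phi<1$ the statement is read with $\log_+$ (equivalently, the net has size $O(d/\eps)$ there). A fully self-contained alternative, which is essentially the original argument of \cite{Chan, MDG}, would be to rerun the peeling proof of Theorem~\ref{thm4}(i) verbatim, but at each level $i$ replace the crude bound $|\mathcal{Q}_i|\le D_\eps$ on the size of a maximal $\eps_{i-1}$-packing of $\RR_i$ by the sharper Shallow-Packing Lemma~\ref{lemmdg}, which controls $|\mathcal{Q}_i|$ directly in terms of $\varphi$; sampling $O(d)$ points into each cell $\RR_i(Q)$ then reproduces the same bound. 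I would present the short deduction above as the proof and mention this second route only as a remark.
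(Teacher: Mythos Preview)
Your proposal is correct and matches the paper's own route: the paper states that Theorem~\ref{thmmdg} is recovered from \eqref{eqmain} of Theorem~\ref{thm4} by bounding $\tau(\eps)\le 1/\eps$ and $D_\eps$ via Lemma~\ref{shallow}, which is exactly what you do (your use of Corollary~\ref{cormain}'s \eqref{usefuleq} is just the already-packaged version of that substitution). Your inflation $D=\max\{6\Phi,e/\eps\}$ to meet the hypothesis of \eqref{usefuleq} and your handling of the ``in particular'' clause are the only details the paper leaves implicit, and both are fine.
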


Our bound \eqref{eqmain} of Theorem \ref{thm4} recovers Theorem \ref{thmmdg} if we upper bound $\tau(\eps)$ by $1/\eps$ and $D_{\eps}$ by the upper bound of Lemma \ref{shallow}.
}
Overall, we feel that the doubling constant is the right general parameter to look at for $\eps$-nets. From this perspective, the notions like Alexander's capacity and the shallow-cell complexity are simply the ways to control the doubling constant. The doubling constant together with Alexander's capacity control almost all possible ranges of sizes of $\eps$-nets. Moreover, the extensions for the quantities like the doubling constant to the non-binary cases are straightforward (see \cite{Mendelson17} for these extensions related to the Learning Theory), while the notion of the shallow-cell complexity is very specific to set systems.

\section{Appendix}
\subsection{CAL algorithm for $\eps$-nets}
\label{cal}
The formal procedure behind the CAL algorithm, adapted to the case of range spaces is written below. The idea behind the algorithm applied to the range spaces is very simple and natural. We sample random points according to $\PP$ and add them to the $\eps$-net one by one. But contrary to the strategy of Theorem \ref{thmvc} the new point is added iff it is contained in a least one of the sets which were not hit (by hitting we mean having a nonempty intersection) by the points which were added to the $\eps$-net on previous steps. In some sense, we never add a new point if it does not hit any set that was not hit before.
\begin{algorithm}
\caption{CAL for $\eps$-nets}
\begin{algorithmic}[1]
\Procedure{}{}
\State $m \gets 0, t \gets 0, \mathcal{R}_0 \gets \mathcal{R}$
\While {$t < n\ \text{and}\  m < 2^n$}
\State $m \gets m + 1$
\If {$x_m \in \cup_{R \in \mathcal{R}_{m-1}} R$}
\State Add $x_m$ to $\eps$-net,
\State $\mathcal{R}_m \gets \{R \in \mathcal{R}_{m - 1}| R \cap x_m = \emptyset\}$,
\State $t \gets t + 1$.
\Else
\State $\mathcal{R}_m \gets \mathcal{R}_{m - 1}$.
\EndIf
\EndWhile
\EndProcedure
\end{algorithmic}
\end{algorithm}

Corollary \ref{alexcap} follows in a straightforward manner. Indeed for $\delta = \frac{1}{2}$ CAL returns the desired $\eps$-net with probability at least $\frac{1}{2}$. This implies the existence of the $\eps$-net of size $O\Big(\tau(\eps)\big[d\log\tau(\eps) + \log\log\frac 1{\eps}\big]\log\frac 1{\eps}\Big)$.

\subsection{Proofs of Section \ref{sec5}}
\begin{proof}{(Proof of Lemma~\ref{lembd1})} We follow the classic strategy of \cite{Dudley82} with some necessary modifications.
 In what follows we assume $D_{\eps} > 10$. Without loss of generality, assume that the supremum in the definition of $D_\eps$ is achieved at $\eps$.
Denote the corresponding maximal packing by $\mathcal{Q}$, where $|\mathcal{Q}| = D_\eps$. We have $\PP(Q_1\Delta Q_2) \ge \eps$ for any two $Q_{1}, Q_2 \in \mathcal{Q}$  and $\PP(Q) \le 2\eps$ for any $Q \in \mathcal Q$. By the definition of $\tau$ we have $$\PP(\X') \le 2\tau(\eps)\eps\ \ \ \ \ \ \text{for} \ \ \ \ \X':=\mathrm{supp}\, \mathcal Q\subset \cup_{R \in \mathcal{R}_{\le 2\eps}} R,$$
{ where we used $\tau(\eps) \ge \tau(2\eps)$.}
Consider the conditional distribution $\PP(\ |\X')$ and denote it by $\PP'$. Note that $\PP'(Q_1\Delta Q_2)>\frac{\eps}{\PP(\X')}\ge \frac 1{2\tau(\eps)}$ for any distinct $Q_1,Q_2\in \mathcal Q$. Note that in what follows we work with $\PP'$ only. In particular, all expectations below are computed w.r.t. $\PP'$.

 Take a random i.i.d. sample $A$ of size $m$ according to  $\PP'$, where $m := \frac{2\log D_\eps \,\PP(\X')}{\eps}$. Note that $$m\le 4\tau(\eps)\log D_{\eps}.$$
Given any two $Q_{1}, Q_2 \in \mathcal{Q}$ with this property we have that $(Q_{1}\Delta Q_2)\cap A \neq \emptyset$ with probability greater than
$$1 - \Big(1 - \frac{\eps}{\PP(\X')}\Big)^m > 1 - \exp\Big(-\frac{\eps m}{\PP(\X')}\Big) > 1 - \frac{1}{D_{\eps}^2}.$$

Using a union bound and summing over all unordered pairs in the packing we conclude that with probability strictly bigger than $\frac{1}{2}$
$$\text{each set in } \mathcal Q \text{ has a unique projection on the random sample } A.$$

At the same time for any $Q \in \mathcal{Q}$ it holds $\E |A \cap Q| \le m \PP'(Q) \le 4\log D_{\eps}$, since $\PP'(Q) \le \frac{2\eps}{\PP(\X')}$. We note that $|A\cap Q|$ is upper bounded by
a random variable which counts the number of elements of $A$ which belong to $Q$. The latter random variable has a binomial distribution and using the Chernoff bound together with the union bound we have $$\PP\big[\nexists Q \in \mathcal{Q}: |Q\cap A| \ge 8\log(D_\eps)\big] \ge1- D_\eps\exp\Big(-\frac{4\log(D_\eps)}{3}\Big) = 1-\frac{1}{D_{\eps}^{1/3}} \ge \frac 1 2.$$

Using a union bound, we conclude that both displayed events hold with positive probability simultaneously.
By the definition of shallow-packing we have for some absolute constant $C$
\[
D_\eps = |\mathcal{Q}|_{A}| \le \varphi'(m)(8\log D_{\eps})^{c_{\mathcal{R}}} \le C\tau^k(\eps)(8\log D_\eps)^{k + c_{\mathcal{R}}}.
\]
It is straightforward to check that the last inequality implies
\[
D_\eps \le C'\tau^k(\eps)\big(c(k + c_{\mathcal{R}})\log(k + c_{\mathcal{R}})\log \tau(\eps)\big)^{k + c_{\mathcal{R}}}.
\]
for some absolute $C', c > 0$.
\end{proof}
\vskip+0.2cm

\begin{proof}(Proof of Lemma \ref{shallow}) The proof follows the same logic as the previous one (the simplified technique we are following appeared first in \cite{Mustafa16}).   Without the loss of generality we assume that the supremum in the definition of doubling constant is achieved at $\eps_0 = \eps$. In particular, it means that the largest packing should contain sets of probability {measure not greater than $2\eps$}. We work in the setting of the proof of Lemma~\ref{lembd1} w.r.t. $\mathcal Q$ and $\PP'$.

Applying Lemma~\ref{hausslerlem} for $\PP'$ and $\delta = \frac{1}{2}$ (note that we may optimize with respect to $\delta$ and improve constant factors), we conclude that for an i.i.d. sample $A$ of size $$n:=\frac{4d\PP(\X')}{\eps}\le 8d\tau(\eps),$$ from $\PP'$ we have $|\mathcal Q|\le 2\E|\mathcal Q|_A|$.

At the same time for any $Q \in \mathcal{Q}$ it holds that $\E |A \cap Q| \le n \PP'(Q) \le 8d$, since $\PP'(Q) \le \frac{2\eps}{\PP(\X')}$.
Consider $\mathcal{Q}_1 := \{Q \in \mathcal{Q}:\ |A \cap Q| > 24d\}$. Using Markov's inequality we have for any $Q \in \mathcal{Q}$ that $$\PP[Q \in \mathcal{Q}_1] = \PP[|A \cap Q| > 24d] \le \frac{8d}{24d} = \frac{1}{3}.$$  Finally, we have
\begin{align*}
|\mathcal Q| &\le 2\E|\mathcal Q|{}_A|
\le 2\E|\mathcal Q_1| + 2\E|(\mathcal Q \setminus \mathcal Q_1)|{}_{A}|
\le \frac{2|\mathcal Q|}{3} + 2\E|(\mathcal Q \setminus \mathcal Q_1)|{}_{A}|.
\end{align*}
Rearranging, we obtain $|\mathcal Q|\le 6 \E|(\mathcal Q \setminus \mathcal Q_1)|{}_{A}| \le 6\varphi(8d\tau(\eps), 24d)$.
\end{proof}
{
\subsection{Sketch of the proof of one-inclusion graph algorithm error bound}
\label{oig}
Given class of binary functions $\FF$ and a finite sample $S \subset \X$ we consider the projection $\FF|_{S}$. The one-inclusion graph of $\FF|_{S}$ is a graph with the set of vertexes $\FF|_{S}$.  Vertexes $f$ and $g$ are connected with an edge iff 
\[
|\{x \in S: f(x) \neq g(x)\}| = 1,
\]
that is if $f$ and $g$ differ on exactly one point in $S$. One-inclusion graphs induced by VC classes satisfy the following remarkable property.
\begin{lem}{\cite{HLW, Haussler95}}
\label{orient}
For any finite $S \subset \X$ the edges of the one-inclusion graph induced by a class $\FF$ of VC dimension $d$ can be oriented in a way such that the out-degree of any vertex is at most $d$.
\end{lem}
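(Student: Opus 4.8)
The plan is to reduce the lemma to a purely graph-theoretic density condition and then verify that condition by a Sauer--Shelah-style shifting argument (this is in essence the argument of Haussler, Littlestone and Warmuth). Write $V := \FF|_{S}$, regarded as a subset of the Boolean cube on the coordinate set $S$, and let $G(V)$ be its one-inclusion graph; note $\mathrm{VCdim}(V)\le d$. Two observations streamline everything. First, for any $U\subseteq V$ the induced subgraph $G(V)[U]$ is exactly the one-inclusion graph $G(U)$, since whether two functions are adjacent depends only on that pair. Second, $\mathrm{VCdim}(U)\le \mathrm{VCdim}(V)\le d$. Hence it suffices to establish: (a) a graph $G$ admits an orientation in which every out-degree is at most $k$ whenever $|E(G[U])|\le k\,|U|$ for every vertex set $U$; and (b) for every $W$ inside a Boolean cube, $|E(G(W))|\le \mathrm{VCdim}(W)\cdot|W|$. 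Combining (a) with $k=d$ and the bound (b) (applied to every induced subgraph of $G(V)$) then yields the lemma.

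For (a) I would use the standard assignment/flow reformulation: orienting an edge is the same as charging it to one of its two endpoints, and we want each vertex to be charged at most $k$ times. Existence of such a charging is a feasibility question for an integral $b$-matching (equivalently an integral $s$--$t$ flow) on the bipartite incidence graph between $E(G)$ and $V(G)$; by the defect form of Hall's theorem it reduces to checking $|F|\le k\,|N(F)|$ for every $F\subseteq E(G)$, and taking $U=N(F)=\bigcup_{e\in F}e$ this is immediate from the hypothesis because $F\subseteq E(G[U])$. So the entire content sits in (b).

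For (b) I would run the downward shifting operator. For a coordinate $i$, replace each $w\in W$ with $w-e_i$ whenever $w_i=1$ and $w-e_i\notin W$. This leaves $|W|$ unchanged, does not increase the VC dimension (the standard fact powering the shifting proof of Sauer--Shelah), and does not decrease the number of edges of the one-inclusion graph: direction-$i$ edges are preserved exactly, and no edge in any other direction is destroyed. Iterating over all coordinates until stable produces a down-closed set $W^{*}$ with $|W^{*}|=|W|$, $\mathrm{VCdim}(W^{*})\le \mathrm{VCdim}(W)$, and $|E(G(W))|\le |E(G(W^{*}))|$. For a down-set two things hold: the support of every $w\in W^{*}$ is shattered (all of its sub-vectors lie in $W^{*}$), so $\|w\|_1\le \mathrm{VCdim}(W^{*})$; and an edge in direction $i$ is present exactly when some $w\in W^{*}$ has $w_i=1$, its lower neighbour being automatically present. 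Therefore
\[
|E(G(W^{*}))| = \sum_{i}\bigl|\{w\in W^{*}: w_i=1\}\bigr| = \sum_{w\in W^{*}}\|w\|_1 \le \mathrm{VCdim}(W^{*})\cdot|W^{*}| \le d\,|W|,
\]
which is (b).

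I expect the main obstacle to be the monotonicity claim inside (b): that shifting never decreases the edge count of the one-inclusion graph. Preservation of the edges lying in the shifted direction is a one-line check, but the statement that no edge in a transversal direction $j\neq i$ is lost requires a short case analysis according to which of the two endpoints is actually moved by the shift --- precisely the kind of bookkeeping familiar from shifting arguments in extremal set theory. Everything else --- the orientation-via-flows reduction in (a), the remark that induced subgraphs of one-inclusion graphs are again one-inclusion graphs, and the down-set computation --- is routine.
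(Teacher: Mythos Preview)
Your proposal is correct and follows exactly the route the paper sketches in its remark after the lemma: bound the edge density of every induced subgraph by $d$ via the shifting method, then deduce the orientation from Hall's theorem. The paper does not give further detail, so your write-up is in fact a faithful fleshing-out of what is only outlined there.
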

Although we do not present the proof of this fact, we add a short remark about the simplest known way to show it. Using the \emph{shifting method} it is possible to prove that the edge densities  of the one-inclusion graph as well as any of its subgraphs are bounded by $d$. Finally, a standard application of Hall's marriage theorem implies Lemma \ref{orient}. We refer to \cite{Haussler95} for more details and discussions of this proof technique.

We are given an i.i.d. sample $T = ((x_1, f^*(x_1)), \ldots, (x_{n + 1}, f^*(x_{n + 1}))$. Denote by $T^{(i)}$ the sample $T$ with the hidden $i$-th label, that is 
\[
T^{(i)} = ((x_1, f^*(x_1)), \ldots, , (x_i, *), \ldots, (x_{n + 1}, f^*(x_{n + 1}))).
\]
Our aim is to predict the unknown label of $x_i$ given the sample $T{^{(i)}}$.
Fix $S = \{x_{1}, \ldots, x_{n +1}\}$ and consider the orientation of the one-inclusion graph induced by $\FF$ such that the out-degree of any vertex is at most $d$. We are ready to define the \emph{one-inclusion graph algorithm}. Given $T^{(i)}$ we define its output classifier $\hat{f}_{T^{(i)}}$ by $\hat{f}_{T^{(i)}}(x_j) = f^*(x_j)$ for $j \neq i$. Now, if there is only one $g \in \FF|_S$ such that it is consistent with labelled points of $T_i$, then we define $\hat{f}_{T^{(i)}}(x_i) = g(x_i)$. Observe that in this case $g(x_i) = f^*(x_i)$. Finally, if there are two distinct $g, h \in \FF|_S$ such that they are consistent with labelled points of $T_i$, we define $\hat{f}_{T^{(i)}}(x_i) = g(x_i)$ if in the oriented one-inclusion graph the edge is pointing from $h$ to $g$, and $\hat{f}_{T^{(i)}}(x_i) = h(x_i)$ if this edge is pointing from $g$ to $h$.

Finally, define the \emph{leave-one-out error} by
\[
\text{LOO} = \frac{1}{n + 1}\sum\limits_{i = 1}^{n + 1}\ind[\hat{f}_{T^{(i)}}(x_i) \neq f^*(x_i)].
\]
It follows from the definition of the one-inclusion graph algorithm that 
\[
\frac{1}{n + 1}\sum\limits_{i = 1}^{n + 1}\ind[\hat{f}_{T^{(i)}}(x_i) \neq f^*(x_i)] \le \frac{\text{out-degree of } f^*|_S}{n + 1} \le \frac{d}{n + 1}.
\]
At the same time, we have $\E\ \text{LOO} = \E_{T^{(1)}} \rho(\hat{f}_{T^{(1)}}, f^{*}) := \E \rho(\hat{f}, f^{*})$. This implies the desired risk bound
\[
\E \rho(\hat{f}, f^{*}) \le  \frac{d}{n + 1}.
\]
\end{document}